\newcommand{\revise}{\color{black}}
\newcommand{\revisetwo}{\color{black}}
\newcommand{\revisethree}{\color{black}}
\title{Bayesian modal regression based on mixture distributions}
\author{ \href{https://orcid.org/0000-0003-3265-6330}{\includegraphics[scale=0.06]{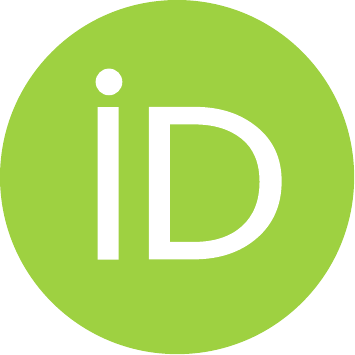}\hspace{1mm}Qingyang Liu} \\
	Department of Statistics\\
	University of South Carolina\\
	Columbia, SC 29201 \\
	\texttt{qingyang@email.sc.edu} \\
	%% examples of more authors
	\And
	\href{https://orcid.org/0000-0001-7077-0869}{\includegraphics[scale=0.06]{orcid.pdf}\hspace{1mm}Xianzheng Huang} \\
	Department of Statistics\\
	University of South Carolina\\
	Columbia, SC 29201 \\
	\texttt{huang@stat.sc.edu} \\
	\And
	\href{https://orcid.org/0000-0002-7190-7844}{\includegraphics[scale=0.06]{orcid.pdf}\hspace{1mm}Ray Bai} \\
	Department of Statistics\\
	University of South Carolina\\
	Columbia, SC 29201 \\
	\texttt{rbai@mailbox.sc.edu} \\
	%% \AND
	%% Coauthor \\
	%% Affiliation \\
	%% Address \\
	%% \texttt{email} \\
	%% \And
	%% Coauthor \\
	%% Affiliation \\
	%% Address \\
	%% \texttt{email} \\
	%% \And
	%% Coauthor \\
	%% Affiliation \\
	%% Address \\
	%% \texttt{email} \\
}
\numberwithin{equation}{section}
\theoremstyle{plain}
\newtheorem{theorem}{Theorem}[section]
\newtheorem{lemma}{Lemma}[section]
\newtheorem{proposition}{Proposition}[section]
\begin{document}
\maketitle

\begin{abstract}
Compared to mean regression and quantile regression, the literature on modal regression is very sparse. {\revisethree A unifying framework for Bayesian modal regression is proposed, based on a family of unimodal distributions indexed by the mode, along with other parameters that allow for flexible shapes and tail behaviors.} Sufficient conditions for posterior propriety under an improper prior on the mode parameter are derived. Following prior elicitation, regression analysis of simulated data and datasets from several real-life applications are conducted. Besides drawing inference for covariate effects that are easy to interpret, prediction and model selection under the proposed Bayesian modal regression framework are also considered. Evidence from these analyses suggest that the proposed inference procedures are very robust to outliers, enabling one to discover interesting covariate effects missed by mean or median regression,  and to construct much tighter prediction intervals than those from mean or median regression. Computer programs for implementing the proposed Bayesian modal regression are available at \href{https://github.com/rh8liuqy/Bayesian_modal_regression}{https://github.com/rh8liuqy/Bayesian\_modal\_regression}.
\end{abstract}

% keywords can be removed
\keywords{Mode \and Fat-tailed distribution \and Outlier \and Unimodal distribution \and Robust statistics}

\section{Introduction}

There is an abundance of literature on mean regression models which model the conditional mean of a response variable $Y$ given a set of covariates $\boldsymbol{X}$. However, it is no secret that the mean is sensitive to outliers. Median regression -- or more generally, quantile regression -- is robust to outliers and is thus an appealing alternative to mean regression \citep{koenker2017handbook}. Besides the mean and median, the mode is yet another commonly used measure of central tendency. Compared with mean or median regression, modal regression concerns the conditional \textit{mode} of $Y$ given $\boldsymbol{X}$ and is much less explored  \citep{sager1982maximum,lee1989mode,lee1993quadratic}, especially in the parametric framework.

But why are modal regression models useful additions to the well-established mean and median regression models? For unimodal and asymmetric distributions, intervals around the conditional mode typically have higher coverage probability than intervals of the same length around the conditional mean or median \citep{yao2014new,xiang2022modal}. Consequently, prediction intervals from modal regression tend to be narrower than those for mean or median regression when data arise from a unimodal and skewed distribution. Thanks to the nature of the mode, modal regression is extremely robust to outliers that can obscure some inherent covariate effect suggested by the majority of observations, making it a worthy rival of median regression as an alternative to mean regression in regard to feature discovery. By construction, modal regression explores the relationship between the ``most probable'' value of $Y$ {\revisetwo and} $\boldsymbol{X}$, and thus offers a highly interpretable representative value of the response. {\revise For example, in precipitation forecasting, it is easier and of greater public interest to apprehend the most likely predicted rainfall (i.e. the mode of precipitation amount) rather than the predicted mean or median rainfall \citep{Dalenius1965}.}

A major challenge in building parametric modal regression models is constructing an appropriate distribution family that subsumes asymmetric, symmetric, light-tailed, and fat-tailed {\revise mode-zero error} distributions {\revise to flexibly model the distribution of $Y$ given $\boldsymbol{X}$, with the location parameter being the $\boldsymbol{X}$-dependent mode}. In this paper, we propose the general unimodal distribution (GUD) family, which is a subfamily of the general two-component mixture distribution family described in Section \ref{sec:GUD}. Members of the GUD family have a location parameter as the mode, in addition to shape and scale parameters that control the skewness and tail behaviors. Thus, our framework is appropriate for both asymmetric \textit{and} symmetric conditional distributions, as well as both light-tailed \textit{and} fat-tailed distributions. In the extreme case, our framework can also model data from distributions without any finite moments, which we introduce in Section \ref{sec:TPSD}. {\revise Such flexibility can provide a higher level of protection from model misspecification than mean regression, which requires at least the first moment of the error distribution to exist.} We propose to estimate the conditional mode and the shape/scale parameters using a Bayesian approach. By placing appropriate prior distributions on model parameters, our modal regression models can be implemented straightforwardly using Markov chain Monte Carlo (MCMC) and provide natural uncertainty quantification through the posterior distributions.

\subsection{Existing work on modal regression}
Frequentist nonparametric modal regression has been the mainstream in the limited existing literature on modal regression \citep{yao2014new, chen2016nonparametric, 10.1214/19-EJS1607}. For readers interested in frequentist nonparametric modal regression, we refer to \citet{chen2018modal} for a comprehensive review. The higher statistical efficiency and greater interpretability of covariate effects under a parametric framework motivate some recent development in frequentist parametric modal regression. For example, \citet{aristodemou2014new} and \citet{bourguignon2020parametric} proposed a parametric modal regression model based on a  gamma distribution for a positive response; \citet{zhou2020parametric} proposed two parametric modal regression models for a bounded response. \citet{menezes2021collection} give a nice review on these and other parametric modal regression models for a bounded response.  In contrast to these existing parametric modal regression models for {\revise positive or bounded} data, the modal regression models in the present manuscript are based on a \textit{new} GUD family whose support is the \textit{entire} real line. Furthermore, our work deals with \textit{Bayesian} inference for modal regression.

The literature on Bayesian modal regression is even more sparse. \citet{yu2012bayesian} proposed a nonparametric Bayesian modal regression model using Dirichlet process mixtures of uniform distributions. \citet{zhou2022bayesian} proposed a parametric Bayesian modal regression model based on a four-parameter beta distribution whose support is bounded yet unknown.  \citet{damien2017bayesian} introduced a more flexible parametric form of Bayesian modal regression using mixtures of triangular densities for a  response with an unknown bounded support. Remaining in the parametric framework, a major strength of our proposed GUD family is that it naturally facilitates data-driven learning of the skewness and tails of the underlying distribution supported on the entire real line, while signifying the mode as the central tendency measure of the response. 

\subsection{{\revise Our contributions}}

This paper aims to widen the scope of Bayesian modal regression models and highlight the advantages of these models through analyses of datasets from real-life applications in several disciplines. 
The main contributions of this paper can be summarized as follows: 
\begin{enumerate}
	\item We propose the GUD family that is suitable for Bayesian modal regression. The GUD family contains distributions that are symmetric or asymmetric, (non)normal, and/or fat-tailed.
	\item We formulate rules of prior elicitation for the GUD family. In particular, we place a flat prior on regression coefficients, weakly informative priors on all other model parameters, and establish sufficient conditions under which the posterior distribution is proper.
	\item We provide strategies for constructing prediction intervals and selecting {\revise an appropriate member of the
		GUD family} for Bayesian modal regression analysis.
	\item We illustrate the following benefits of our proposed Bayesian modal regression framework through simulation studies and data applications in economics, criminology, real estate, and molecular biology: a) robustness to outliers, b) {\revise precise} prediction, and c) high interpratability of covariate effects.  
\end{enumerate}

{\revise By accomplishing the first three items above, we provide a comprehensive toolkit for modal regression analysis based on a flexible family containing a large variety of unimodal distributions. {\revisetwo Our fully Bayesian approach allows researchers to easily infer covariate effects through their posterior distributions instead of relying on asymptotic approximations or resampling methods like the bootstrap \citep{boos2013essential}.} In a similar spirit, constructing prediction intervals based on the posterior predictive distribution also becomes straightforward, especially with a simple algorithm for sampling from the GUD family (discussed in Section~\ref{sec:GUD}).}%Finally, we exploit the model criterion known as the Bayesian leave-one-out expected log predictive density for model selection to help practitioners choose an appropriate distribution for their final analysis.  

The structure of this paper is as follows. In Section \ref{sec:MotivatingExamples}, we motivate our proposed Bayesian modal regression framework with two data applications. In Section \ref{sec:GUD}, we formally define the GUD family and zoom in on several important members in the family. Section \ref{sec:Bayesian_inference} introduces Bayesian inference for these modal regression models, including prior elicitation, posterior propriety, uncertainty quantification, and model selection. Section \ref{sec:Simulation} provides simulation studies that illustrate the strengths of our methodology. In Section \ref{sec:data_application}, we provide two additional data applications from real estate and molecular biology. Section \ref{sec:Discussion} concludes the paper with some remarks about our Bayesian modal regression framework and several directions for future research. 

\section{Motivating applications} \label{sec:MotivatingExamples}

As a prelude to introducing our Bayesian modal regression framework, we first present results from applying the proposed methodology (to be elaborated in Sections \ref{sec:GUD} and \ref{sec:Bayesian_inference}) to datasets from the economics and criminology literature. %Our analysis demonstrates that Bayesian modal regression can often reveal structures and inferences that are missed by mean or median regression. Therefore, Bayesian modal regression can nicely complement other types of regression analysis.

\subsection{Modeling highly right-skewed bank deposits}\label{sec:bank}
It is common knowledge to economists that wealth distributions are highly skewed to the right \citep{benhabib2018skewed}. The cumulative nature of wealth not only has impact on individuals' net worth, but also has an influence on assets of large companies, including bank holding companies. In this example, we analyzed the deposits data of 50 banks and savings institutions in the United States on July 2, 2010 (Table 3.4.1 in \citet{siegel2016practical}). 

Figure \ref{fig:bank_deposits} presents the estimated density plot that results from fitting an intercept-only regression model based on the Double Two-Piece-Student-$t$, or DTP-Student-$t$, distribution (to be introduced in Section~\ref{sec:GUD}) to the dataset, along with the histogram of the observed data. From this figure, we can see that the estimated mode using the DTP-Student-$t$ distribution is close to the nonparametric mode estimate based on the histogram. This similarity and the close resemblance of the fitted density to the shape of the histogram indicate that the DTP-Student-$t$ distribution is an adequate choice for the bank deposits data. 
\begin{figure}[ht]
	\centering
	\includegraphics[width=\linewidth]{./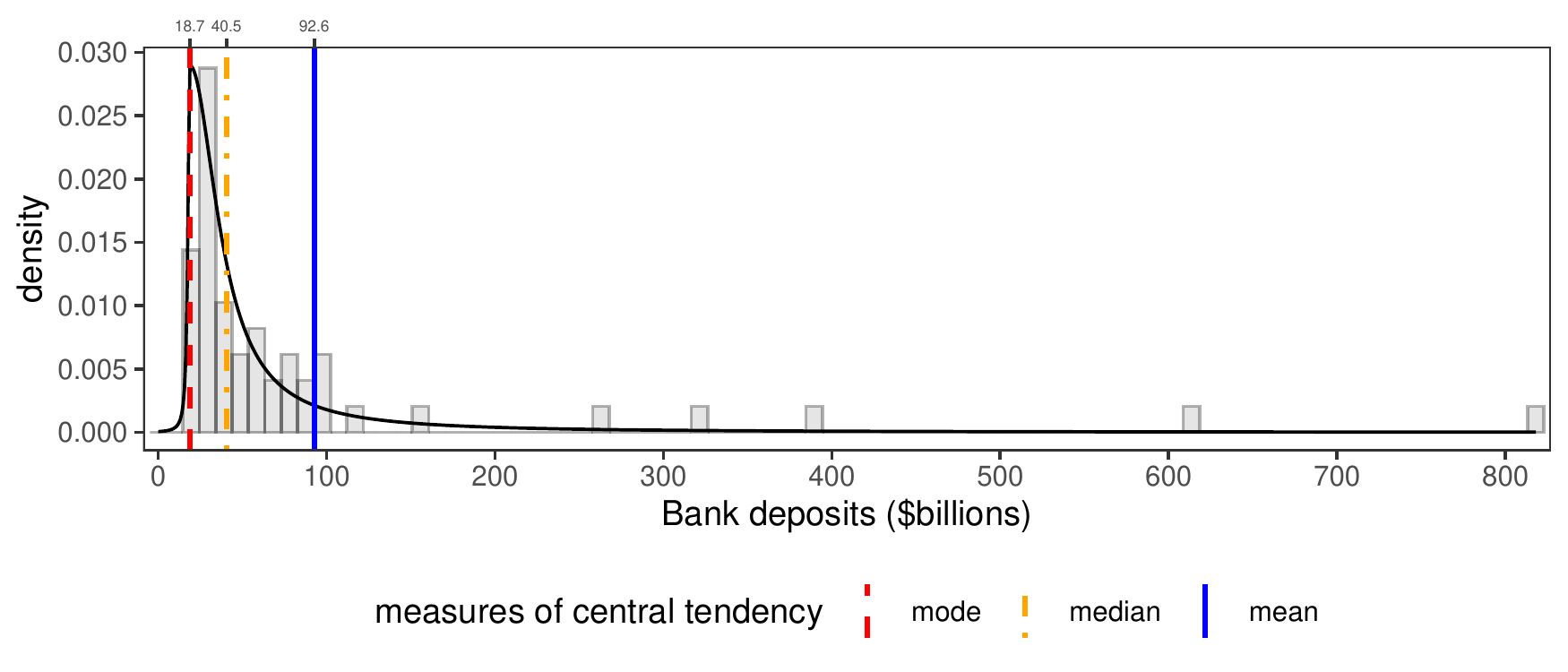}
	\caption{\small {\it Deposits (in billions of dollars) of 50 banks and savings institutions in the United States on July 2, 2010. The solid black curve is the estimated density of the DTP-Student-$t$ distribution. The three vertical lines mark locations of the sample mean (blue solid line), the sample median (orange dash-dotted line), and the estimated mode (red dashed line), respectively.}}
	\label{fig:bank_deposits}
\end{figure}

The other two measures of central tendency, i.e. the sample mean and median, are both shown to be larger than the estimated parametric mode in Figure \ref{fig:bank_deposits}. The sample mean, which equals 92.6 billion dollars, is obviously not a good measure of central tendency for most large banks and savings institutions in the United States. In particular, 40 of the 50 banks and savings institutions in our dataset had deposits \textit{less} than 92.6 billion dollars on July 2, 2010. The sample median for this data is 40.5 billion dollars, indicating that  $50\%$ of banks in the dataset had deposits larger than 40.5 billion dollars while the other half had deposits smaller than 40.5 billion dollars. In spite of its high interpretability, the (sample) median is usually difficult to visualize either from a  density plot or a histogram. In contrast, it is much easier for data analysts to locate and interpret the mode than the mean or median in Figure \ref{fig:bank_deposits}. The estimated mode using the DTP-Student-$t$ distribution is where the density plot {\revise reaches its peak} and is close to where the histogram {\revise reaches its peak}. More specifically, the posterior mean of the mode is around $20$ billion dollars, suggesting that banks in the United States are \textit{most likely} to have deposits of around $20$ billion dollars during that time.

\subsection{Modal versus mean and median regression for analyzing murder rates}\label{sec:murder}

As a second motivating example, we analyze a dataset from \citet{Agresti2021} containing the murder rate, percentage of college education, poverty percentage, and metropolitan rate for the 50 states in the United States and the District of Columbia (D.C.) from 2003. The murder rate is defined as the annual number of murders per $100{,}000$ people in the population. The poverty percentage is the percentage of residents with income below the poverty level, and the metropolitan rate is defined as the percentage of population living in the metropolitan area. 

{\revise At the stage of exploratory data analysis, we first created the conditional scatter plot matrix of the complete data (see the left panel in Figure \ref{fig:UScrime}), in which D.C. stands out as an outlier. This outlier is extreme enough to visually obscure potential association between the considered variables. We thus created a second conditional scatter plot matrix after removing D.C. (see the right panel in Figure~\ref{fig:UScrime}). Now one can more easily perceive a positive association between the poverty percentage and the murder rate, as well as a positive association between the metropolitan rate and the murder rate. Although a visual inspection seems to suggest a negative association between the college percentage and the murder rate, a more thorough regression analysis of the murder rates data is needed to confirm this.}

\begin{figure}[ht]
	\centering
	\includegraphics[width=0.49\linewidth]{./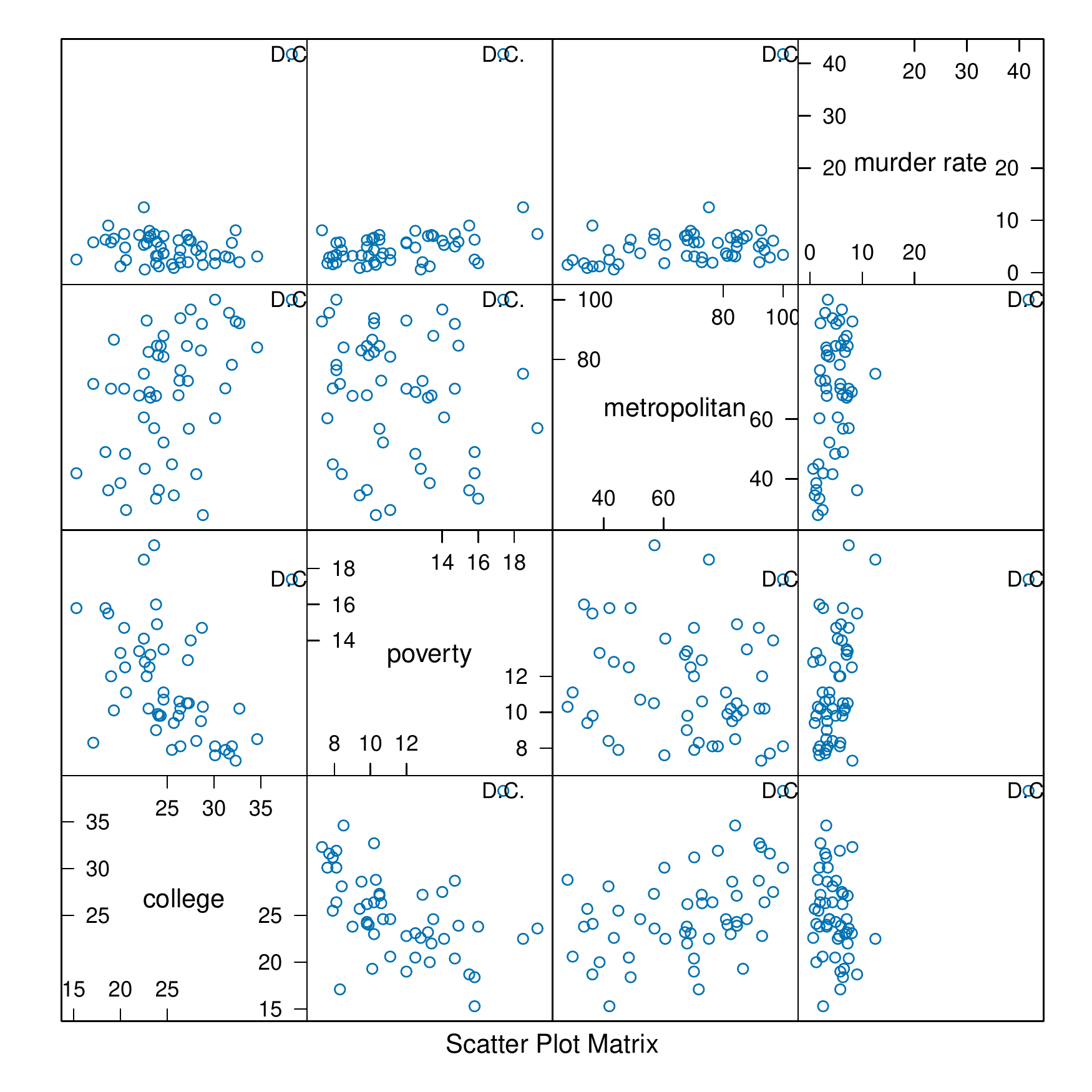}
	\includegraphics[width=0.49\linewidth]{./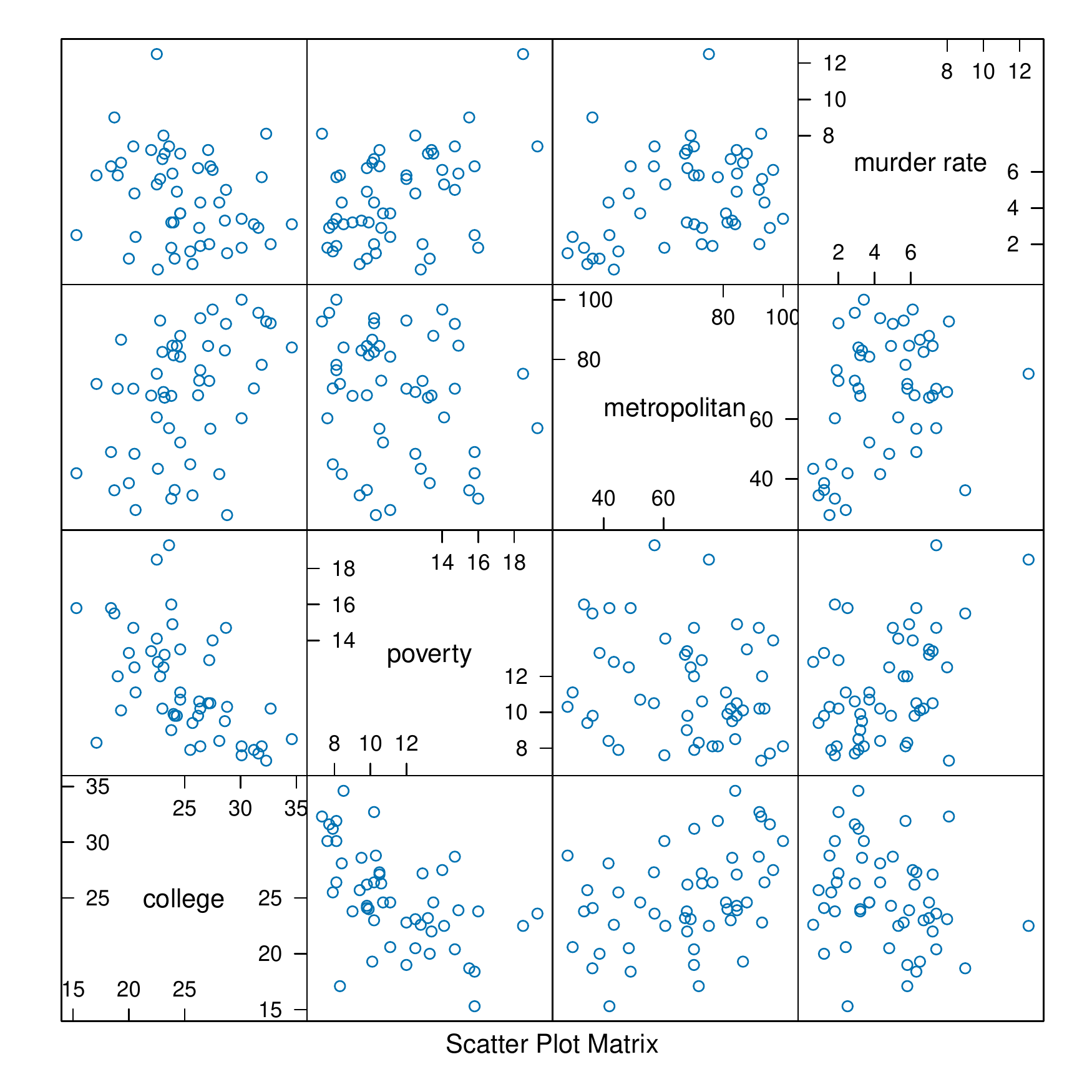}
	\caption{\small {\it The conditional scatter plot matrices of U.S. crime data, with the left matrix including D.C. {\revisetwo that is labeled in the plot}, and the right matrix excluding D.C.}}
	\label{fig:UScrime}
\end{figure}

To formally investigate the association between the murder rate ($Y$) and the aforementioned variables, we fit the following models to the U.S. crime dataset:
$$
\mathbb{M}(Y\mid \boldsymbol{\beta})=\beta_0+\beta_1 \times \operatorname{college}+\beta_2 \times \operatorname{poverty} +\beta_3 \times \operatorname{metropolitan},
$$
where $\mathbb{M}(\cdot)$ generically refers to the conditional mean, median, or mode. {\revise Table \ref{tab:UScrime} presents the inference results from mean/median/modal regression models with D.C. included in the data (see the upper half of Table~\ref{tab:UScrime}) and the counterpart results with D.C. excluded (see the lower half of Table~\ref{tab:UScrime}).} {\revisetwo Some of the results are shared by all three models in both rounds of analyses}. Namely, all models determine that the poverty percentage and the metropolitan rate were both positively associated with the murder rate. However, {\revise the first round of analysis (using the complete data) results in different conclusions about the association between the college percentage and the murder rate.} With a {\revisetwo 90\%} posterior credible interval (CI) of $(0.20, 0.74)$, the mean regression model (specified by \eqref{eq:model.normal}) implies that there exists a \textit{positive} association between the college percentage and the murder rate, conditionally on the other covariates in the model. We believe that this inference result is difficult to justify, in light of existing results from the criminology literature that conclude a negative association between higher education attainment and crime \citep{lochner2020education,hjalmarsson2012impact}. On the other hand, with a  {\revisetwo 90\%}  CI of $(-0.27, 0.05)$, the Bayesian median regression model (formulated in \eqref{eq:model.ALD}) concludes that the college percentage is \textit{not} significantly associated with the murder rate, conditionally on the other covariates. 
Our Bayesian modal regression model with the Two-Piece scale-Student-$t$, or TPSC-Student-$t$, distribution (to be introduced in Section \ref{sec:GUD}) draws a different conclusion. With a  {\revisetwo 90\%} CI of $(-0.33, -0.06)$, our Bayesian modal regression model concludes that there is a \textit{negative} association between the college percentage and the murder rate, which is more consistent with findings from the criminology literature. Lastly, according to the model criterion referred to as the expected log predictive density (ELPD, introduced in Section \ref{sec:Model.selection}) in Table \ref{tab:UScrime}, the modal regression model based on the TPSC-Student-$t$ likelihood yields the highest value of ELPD, indicating a better fit {\revisetwo to} the data than the mean and median regression models. 

{\revise With the D.C. outlier removed from the data in the second round of analysis, the median and modal regression models do in fact suggest a significant negative association between the college percentage and the murder rate, with CIs of $(-0.29,-0.06)$ and $(-0.32,-0.06)$, respectively. Even though the point estimate for the effect of the college percentage on the murder rate in the mean regression model is now negative, the CI of this covariate effect is $(-0.27,0.02)$, and thus one would not conclude it as a significant effect according to mean regression. In fact, Table \ref{tab:UScrime} shows that under the mean regression model, both point estimates and interval estimates for most of the covariate effects change substantially in the second round of analysis. In contrast, the results from our second round of modal regression analysis mostly remain the same as those from the first round. This exercise demonstrates that modal regression based on the proposed GUD family can be even  more robust to outliers than median regression and has the potential to draw reliable inferences and {\revisetwo capture} important features of data even in the presence of extreme outliers.}

\begin{table}
	\caption{\label{tab:UScrime} Estimates of covariate effects for the mean/median/modal regression models fit to the U.S. crime dataset. The mean, $5\%$ quantile, and $95\%$ quantile of the posterior distribution of each covariate effect are listed under Mean, q5, and q95, respectively. ELPD stands for expected log  predictive density.}
	\centering
	\resizebox{\columnwidth}{!}{
		\begin{tabular}[t]{llrlrrr}
			\toprule
			D.C. inclusion & Regression Model & ELPD & Parameter(covariate) & Mean & q5 & q95\\
			\midrule
			&  &  & $\beta_{1}$(college) & 0.47 & 0.20 & 0.74\\
			\cmidrule{4-7}
			&  &  & $\beta_{2}$(poverty) & 1.14 & 0.77 & 1.52\\
			\cmidrule{4-7}
			& \multirow{-4}{*}{\raggedright\arraybackslash Mean regression} & \multirow{-4}{*}{\raggedleft\arraybackslash -161.81} & $\beta_{3}$(metropolitan) & 0.07 & 0.01 & 0.12\\
			\cmidrule{2-7}
			&  &  & $\beta_{1}$(college) & -0.12 & -0.27 & 0.05\\
			\cmidrule{4-7}
			&  &  & $\beta_{2}$(poverty) & 0.44 & 0.21 & 0.68\\
			\cmidrule{4-7}
			& \multirow{-4}{*}{\raggedright\arraybackslash Median regression} & \multirow{-4}{*}{\raggedleft\arraybackslash -133.42} & $\beta_{3}$(metropolitan) & 0.06 & 0.03 & 0.08\\
			\cmidrule{2-7}
			&  &  & $\beta_{1}$(college) & -0.20 & -0.33 & -0.06\\
			\cmidrule{4-7}
			&  &  & $\beta_{2}$(poverty) & 0.24 & 0.01 & 0.46\\
			\cmidrule{4-7}
			\multirow{-12}{*}{\raggedright\arraybackslash D.C. included} & \multirow{-4}{*}{\raggedright\arraybackslash Modal regression} & \multirow{-4}{*}{\raggedleft\arraybackslash -123.24} & $\beta_{3}$(metropolitan) & 0.06 & 0.04 & 0.09\\
			\cmidrule{1-7}
			&  &  & $\beta_{1}$(college) & -0.13 & -0.27 & 0.02\\
			\cmidrule{4-7}
			&  &  & $\beta_{2}$(poverty) & 0.35 & 0.16 & 0.55\\
			\cmidrule{4-7}
			& \multirow{-4}{*}{\raggedright\arraybackslash Mean regression} & \multirow{-4}{*}{\raggedleft\arraybackslash -109.54} & $\beta_{3}$(metropolitan) & 0.06 & 0.04 & 0.09\\
			\cmidrule{2-7}
			&  &  & $\beta_{1}$(college) & -0.18 & -0.29 & -0.06\\
			\cmidrule{4-7}
			&  &  & $\beta_{2}$(poverty) & 0.35 & 0.18 & 0.52\\
			\cmidrule{4-7}
			& \multirow{-4}{*}{\raggedright\arraybackslash Median regression} & \multirow{-4}{*}{\raggedleft\arraybackslash -109.66} & $\beta_{3}$(metropolitan) & 0.05 & 0.03 & 0.08\\
			\cmidrule{2-7}
			&  &  & $\beta_{1}$(college) & -0.20 & -0.32 & -0.06\\
			\cmidrule{4-7}
			&  &  & $\beta_{2}$(poverty) & 0.24 & 0.01 & 0.45\\
			\cmidrule{4-7}
			\multirow{-12}{*}{\raggedright\arraybackslash D.C. excluded} & \multirow{-4}{*}{\raggedright\arraybackslash Modal regression} & \multirow{-4}{*}{\raggedleft\arraybackslash -111.94} & $\beta_{3}$(metropolitan) & 0.06 & 0.04 & 0.09\\
			\bottomrule
		\end{tabular}
	}
\end{table}

\section{The family of general unimodal distributions} \label{sec:GUD}

Having motivated our Bayesian modal regression framework and demonstrated its benefits on two real-life applications in Section \ref{sec:MotivatingExamples}, we now formally introduce the GUD family for Bayesian modal regression.

\subsection{{\revisetwo The formulation as a mixture distribution}}

The probability density function (pdf) of a member belonging to the GUD family is a mixture of two pdfs, $f_1$ and $f_2$, given by
\begin{equation}
	f(y\mid w,\theta,\boldsymbol{\xi}_1,\boldsymbol{\xi}_2) = wf_1(y\mid \theta,\boldsymbol{\xi}_1) + (1-w)f_2(y\mid \theta,\boldsymbol{\xi}_2).
	\label{eq:pdf.GUD}
\end{equation}
In the mixture pdf \eqref{eq:pdf.GUD}, $w \in [0,1]$ is the weight parameter, $\theta \in \left(-\infty,+\infty\right)$ is the mode as {\revisetwo a} location parameter in (\ref{eq:pdf.GUD}), $\boldsymbol{\xi}_1$ consists of parameters other than the location parameter in the pdf $f_1(\cdot\mid \theta, \boldsymbol{\xi}_1)$, and $\boldsymbol{\xi}_2$ is defined similarly for $f_2(\cdot \mid \theta, \boldsymbol{\xi}_2)$. {\revisetwo The supports of the two mixture components, denoted respectively by $\mathcal{D}_1$ and $\mathcal{D}_2$, can be the same or different, as exemplified later in this section. Moreover, the two components are chosen to achieve unimodality and other desirable properties that we elaborate in more detail next.}
Clearly, the GUD family belongs to the more general two-component mixture distribution family.
One feature of GUD that makes it stand out from the bigger family of two-component mixture distributions is that the two component distributions of GUD share the same location parameter $\theta$ as the mode, a feature that makes GUD especially suitable for modal regression. In contrast, a two-component normal mixture for instance, as a widely referenced member in the bigger family,  can be multimodal, and it is non-trivial to impose constraints on two normal components to guarantee unimodality \citep{sitek2016modes}. Even after formulating a unimodal normal mixture, its mode may not have an analytical form \citep{behboodian1970modes}. Many other members in the more general two-component mixture distribution family have the same pitfalls. % Thus, there is no easy way to construct a modal regression model based on the two-component normal mixture distribution or other similar distributions. 

Besides unimodality, we reiterate and complement the following three restrictions on \eqref{eq:pdf.GUD} to make the GUD family suitable and convenient for modal regression:
\begin{enumerate}
	\item[(R1)] The pdfs $f_1(\cdot \mid \theta, \boldsymbol{\xi}_1)$ and $f_2(\cdot \mid \theta, \boldsymbol{\xi}_2)$ are unimodal at $\theta$.
	\item[(R2)] The pdfs $f_1(\cdot \mid \theta, \boldsymbol{\xi}_1)$ and $f_2(\cdot \mid \theta, \boldsymbol{\xi}_2)$ are left-skewed and right-skewed respectively. 
	\item[(R3)] The mixture pdf $f(\cdot \mid w,\theta,\boldsymbol{\xi}_1,\boldsymbol{\xi}_2)$ in \eqref{eq:pdf.GUD} is continuous in its domain.
\end{enumerate}
Restriction (R1) is already implied earlier when we stress that the two components in (\ref{eq:pdf.GUD}) share the same location parameter $\theta$ as the finite mode. In the context of modal regression, (R1) ensures that one can easily link a linear predictor $\boldsymbol{X}^{\top} \boldsymbol{\beta}$ with the conditional mode of $Y$. Because modal regression adds more value to  mean/median regression when data are skewed and contain outliers, we impose (R2) to make members in GUD exhibit a wide range of skewness and tail behaviors. This second restriction also solves the notorious label switching problem that many other two-component mixture distributions suffer from, because $f_1(\cdot\mid \theta,\boldsymbol{\xi}_1)$ and $f_2(\cdot\mid \theta,\boldsymbol{\xi}_2)$ satisfying (R2) must come from different distribution families in some strict sense, as opposed to, say, both coming from the normal family. According to Theorem 1 of \citet{teicher1963identifiability}, this guarantees identifiability of all parameters associated with GUD. Lastly, (R3) eliminates ill-constructed pdfs whose mode may occur at a jump discontinuity.

Henceforth, when a random variable $Y$ follows a distribution in the GUD family, we state that $Y\mid w, \theta, \boldsymbol{\xi}_1, \boldsymbol{\xi}_2 \sim \operatorname{GUD}\left(w, \theta, \boldsymbol{\xi}_1, \boldsymbol{\xi}_2\right)$. Like for other two-component mixture distributions, one may view $Y=ZX_1+(1-Z)X_2$, where $X_1 \mid \theta, \boldsymbol{\xi}_1 \sim f_1(\cdot \mid \theta,\boldsymbol{\xi}_1)$, $X_2 \mid \theta, \boldsymbol{\xi}_2 \sim f_2(\cdot \mid \theta,\boldsymbol{\xi}_2)$, and $Z\mid w \sim \operatorname{Bernoulli}\left(w\right)$, with $Z$, $X_1$, and $X_2$ independent. This viewpoint gives rise to a data augmentation method outlined below for generating data from a GUD effortlessly {\revise when samples from $f_{1}$ and $f_{2}$ are easy to obtain}: 
\begin{enumerate}[label=(\roman*), align=left]
	\item Sample $X_1 \mid \theta, \boldsymbol{\xi}_1 \sim f_1\left(\cdot \mid \theta, \boldsymbol{\xi}_1\right)$.
	\item Sample $X_2 \mid \theta, \boldsymbol{\xi}_2 \sim f_2\left(\cdot \mid \theta, \boldsymbol{\xi}_2\right)$.
	\item Sample $Z \mid w \sim \operatorname{Bernoulli}(w)$.
	\item $Y \leftarrow ZX_1 + (1-Z)X_2$.
\end{enumerate}
Having an efficient {\revise sampling} method is especially beneficial in constructing Bayesian prediction intervals, since the most common way to approximate the posterior predictive density is by drawing samples from the posterior predictive distribution during the MCMC iterations. We will continue our discussion about the Bayesian prediction intervals in Section \ref{sec:Model.selection}. %This same data augmentation trick can be also used if one wants to calculate the conditional expectation of $Y$, i.e. $\mathbb{E}\left(Y\right) = w(1-w)[\mathbb{E}(X_1)+\mathbb{E}(X_2)]$, where $\mathbb{E}$ is the expectation with respect to the conditional densities given $w, \theta, \boldsymbol{\xi}_1, \boldsymbol{\xi}_2$.

Relating to existing literature, the GUD family \textit{subsumes} several previously proposed distributions, such as those introduced in \citet{fernandez1998bayesian} and \citet{rubio2015bayesian}, as special cases. In Sections \ref{sec:FG}-\ref{sec:TPSD}, we detail several examples of distributions from the GUD family. {\revisetwo All mixture components in these examples belong to some location-scale family, but $f_1$ and $f_2$ are not limited to location-scale families in general (see an example in Section \ref{sec:logNM} in the Appendix).} Apart from the GUD family, another class of distributions that can accommodate both skewed/symmetric responses and fat-tailed/thin-tailed responses is the skew-normal family that is well explored by \citet{azzalini2013skew}. {\revise There are two well-received parameterizations of the skew-normal family: the direct parameterization (SNDP) and the centered parameterization (SNCP) \citep{arellano2008centred,durante2019conjugate}. Except for the symmetric members in the SNDP family, the location parameter in an SNDP typically cannot be interpreted as the mean, median, or mode. Meanwhile, the SNCP family is indexed by three parameters, with one of them being the mean. In contrast to SNDP and SNCP, the location parameter of the GUD family is the \emph{mode}, which makes it convenient for drawing inference for the (conditional) mode.}

\subsection{The flexible Gumbel distribution} \label{sec:FG} 

For predicting extreme events, the Gumbel distribution is a popular choice in many fields such as hydrology, earthquake forecasting, and insurance \citep{smith2003statistics,vidal2014bayesian,shin2015application}. The pdf of a Gumbel distribution for the maximum is
$$
f_{\operatorname{Gumbel}}\left(y\mid \theta, \sigma\right) = \frac{1}{\sigma}\exp\left\{-\frac{y-\theta}{\sigma}-\exp\left(-\frac{y-\theta}{\sigma}\right)\right\}\mathbb{I}\left(-\infty < y < \infty \right),
$$
where $\theta \in \mathbb{R}$ is the mode as the location parameter, $\sigma > 0$ is the scale parameter, and $\mathbb{I}(\cdot)$ is the indicator function. To describe data that contains a mix of extremely large and extremely small events, \citet{QXZ} proposed the flexible Gumbel (FG) distribution specified by the pdf 
\begin{equation}
	f_{\operatorname{FG}}\left(y \mid w, \theta, \sigma_1, \sigma_2 \right)= wf_{\operatorname{Gumbel}}\left(-y\mid -\theta, \sigma_1 \right) + (1-w)f_{\operatorname{Gumbel}}\left(y\mid \theta, \sigma_2 \right).
	\label{eq:pdf.FG}
\end{equation}
By mapping to (\ref{eq:pdf.GUD}), we have $f_1(y|\theta, \boldsymbol{\xi}_1)=f_{\operatorname{Gumbel}}\left(-y\mid -\theta, \sigma_1 \right)$ as the pdf of the left-skewed Gumbel distribution for the minimum. Similarly, we have $f_2(y|\theta, \boldsymbol{\xi}_2) = f_{\operatorname{Gumbel}}\left(y\mid \theta, \sigma_2 \right)$ as the pdf of the right-skewed Gumbel distribution for the maximum. 
We illustrate Bayesian modal regression based on the FG likelihood in Section \ref{sec:serum}. The FG distribution serves as a good choice of likelihood if the data is a mixture of extreme events, such as monthly maximum/minimum water elevation changes or weekly heaviest/lightest traffic on a highway.

\subsection{The double two-piece distribution} \label{sec:DTP}

\citet{rubio2015bayesian} defined the Double Two-Piece (DTP) distribution by mixing two truncated distributions. For a pdf belonging to some location-scale family of the form $(1/\sigma){\revisetwo g}(\left(y - \theta\right)/\sigma \mid \delta)$ that is unimodal at $\theta$, with a scale parameter $\sigma>0$ and a  shape parameter $\delta$, the pdf of the corresponding left $\theta$-truncated distribution is
\begin{equation}
	f_{\operatorname{LT}}(y\mid \theta,\sigma,\delta) = \frac{2}{\sigma} {\revisetwo g}\left.\left(\frac{y-\theta}{\sigma} \right\vert \delta\right) \mathbb{I}(y<\theta),
	\label{eq:pdf.LT}
\end{equation}
and the corresponding right $\theta$-truncated distribution is specified by the following pdf,
\begin{equation}
	f_{\operatorname{RT}}(y\mid \theta,\sigma,\delta) = \frac{2}{\sigma} {\revisetwo g}\left.\left(\frac{y-\theta}{\sigma} \right\vert \delta\right) \mathbb{I}(y\ge\theta).
	\label{eq:pdf.RT}
\end{equation}
By mixing the pdfs in \eqref{eq:pdf.LT}-\eqref{eq:pdf.RT}, we obtain the DTP pdf as
\begin{equation}
	f_{\operatorname{DTP}}(y\mid \theta, \sigma_1, \sigma_2, \delta_1, \delta_2) = w f_{\operatorname{LT}}(y\mid \theta,\sigma_1,\delta_1) + (1-w) f_{\operatorname{RT}}(y\mid \theta,\sigma_2,\delta_2),
	\label{eq:pdf.DTP}
\end{equation}
where
\begin{equation}
	w=\frac{\sigma_1 {\revisetwo g}\left(0 \mid \delta_2\right)}{\sigma_1 {\revisetwo g}\left(0 \mid \delta_2\right)+\sigma_2 {\revisetwo g}\left(0 \mid \delta_1\right)},
	\label{eq:pdf.DTP.w}
\end{equation}
{\revise and ${\revisetwo g}\left(0\mid \delta\right)$ represents ${\revisetwo g}(\left(y - \theta\right)/\sigma \mid \delta)$ evaluated at $y = \theta$. The weight \eqref{eq:pdf.DTP.w} is chosen to guarantee a mixture distribution that satisfies (R3) {\revisetwo even though this particular choice makes $w\ne 0, 1$ if $g(0 \mid \delta)$ never vanishes}.} Restrictions (R1) and (R2) are trivially satisfied by the construction of the  left/right $\theta$-truncated pdfs in \eqref{eq:pdf.LT}--\eqref{eq:pdf.RT}. Thus, DTP distributions belong to the GUD family.  Note, however, that our general GUD family \eqref{eq:pdf.GUD} does not \textit{require} the two component densities to be truncated, as we demonstrated earlier with the FG distribution specified by the density in \eqref{eq:pdf.FG}. 

As a concrete example, we consider the location-scale family as  the three-parameter Student's $t$ distributions, i.e., the non-standardized Student's $t$ distributions, with location parameter $\theta$, scale parameter $\sigma > 0$, and continuous degree of freedom $\delta > 0$ \citep{geweke1993bayesian}. Following (\ref{eq:pdf.LT}) and (\ref{eq:pdf.RT}), one has the corresponding left-skewed truncated three-parameter Student's $t$ distribution and the right-skewed truncated three-parameter Student's $t$ distribution, respectively. This leads to the distribution defined according to \eqref{eq:pdf.DTP} and \eqref{eq:pdf.DTP.w} that we call the DTP-Student-$t$ distribution. The DTP distribution family contains numerous distributions, all of which are suitable for modal regression (see \citet{rubio2015bayesian} for more). In the sequel, we concentrate on the DTP-Student-$t$ distribution as a special member of the DTP distribution.

\subsection{The two-piece scale distribution} \label{sec:TPSD}

By setting $\delta_1 = \delta_2 = \delta$ in (\ref{eq:pdf.DTP}), one obtains the pdf of a subfamily of the DTP family proposed in \citet{fernandez1998bayesian}, referred to as the two-piece scale (TPSC) distribution family, 
\begin{equation}
	\begin{aligned}		             
		f_{\operatorname{TPSC}}\left(y \vert \, w, \theta, \sigma, \delta \right) = &w f_{\operatorname{LT}}\left(y \left \vert \,\theta,\sigma \sqrt{\frac{w}{1-w}},\delta\right.\right)+\\
		&(1-w)f_{\operatorname{RT}}\left(y \left \vert \,\theta,\sigma\sqrt{\frac{1-w}{w}},\delta\right.\right).
		\label{eq:pdf.TPSC}
	\end{aligned}
\end{equation}

We point out that in \citet{fernandez1998bayesian}, a shape parameter $\gamma  = w^{0.5}(1-w)^{-0.5}$ is used instead of the weight parameter $w$ when formulating the mixture pdf. We adopt the parameterization in (\ref{eq:pdf.TPSC}) because we find it more straightforward to elicit a noninformative prior for $w$ than placing a noninformative prior on $\gamma$. {\revise Interestingly, it is not difficult to show that $w$ follows a uniform distribution on the interval $\left(0,1\right)$ if and only if $\gamma$ follows a log-logistic distribution with the scale parameter as $1$ and the shape parameter as $2$ \citep{ekawati2015moments}.}

Similar to the construction of the DTP-Student-$t$ distribution, we can  construct the TPSC-Student-$t$ distribution by choosing the two component distributions to be the left and right $\theta$-truncated three-parameter Student's $t$ distributions. When $w=0.5$, the TPSC-Student-$t$ distribution converges to a normal distribution with mean $\theta$ and standard deviation $\sigma$ as $\delta\to \infty$; and it reduces to a Cauchy distribution with mode $\theta$ and scale parameter $\sigma$ when $\delta=1$. Hence, even as a special case of the DTP-Student-$t$ distribution, the TPSC-Student-$t$ distribution is flexible enough to describe normally distributed data and non-normal data with extreme outlier(s) from distributions that do not have any finite moments. Since the TPSC-Student-$t$ has fewer parameters than the DTP-Student-$t$ distribution, it is an adequate choice for small datasets. On the other hand, the DTP-Student-$t$ distribution may be preferred when there is moderate sample size. Certainly, one can conduct several rounds of modal regression analysis assuming different unimodal distributions for the response, such as the FG, DTP-Student-$t$, and  TPSC-Student-$t$  distributions, and then select the most appropriate model using the model selection criteria that we introduce in Section \ref{sec:Model.selection}. All of these models can be easily implemented using the code developed for this work. 

\subsection{{\revise Pictorial depiction of GUD and GUD subfamilies}} 

As illustrated in the preceding three subsections, the GUD family is a \textit{generalization} of several previously proposed unimodal two-component mixture distributions. Figure \ref{fig:GUD_pdfs} presents pdfs of FG, DTP-Student-$t$, and TPSC-Student-$t$ distributions with different parameter specifications, which encompass  asymmetric, symmetric, fat-tailed, \textit{and} thin-tailed densities. In particular, the first panel in Figure \ref{fig:GUD_pdfs} presents the density plot of FG distribution with varying scale parameters of the right-skewed component. As $\sigma_2$ becomes larger, the tails of FG distribution (especially the right tail) become fatter. In the second panel, we show that the FG distribution is symmetric if $w = 0.5$ and $\sigma_1 = \sigma_2$. When the weight parameter $w$ surpasses 0.5, the pdf of FG distribution puts more weight on the left-skewed part, and therefore, becomes more left-skewed. {\revise In the third panel, we see that as the scale parameter of the left-skewed component increases, the left tail of the DTP-Student-$t$ distribution becomes fatter while the right tail changes little, leading to distributions that are more left-skewed.} The fourth panel shows the drastic change in the shape of the TPSC-Student-$t$ pdf as one varies the  scale parameter $\sigma$ shared by both mixture components. The last panel presents the subtle changes in the tail behavior of TPSC-Student-$t$ distributions with different values for the degree of freedom $\delta$ that is shared by both mixture components.
\begin{figure}[p]
	\centering
	\begin{subfigure}[b]{0.80\linewidth}
		\centering
		\includegraphics[width=\linewidth]{./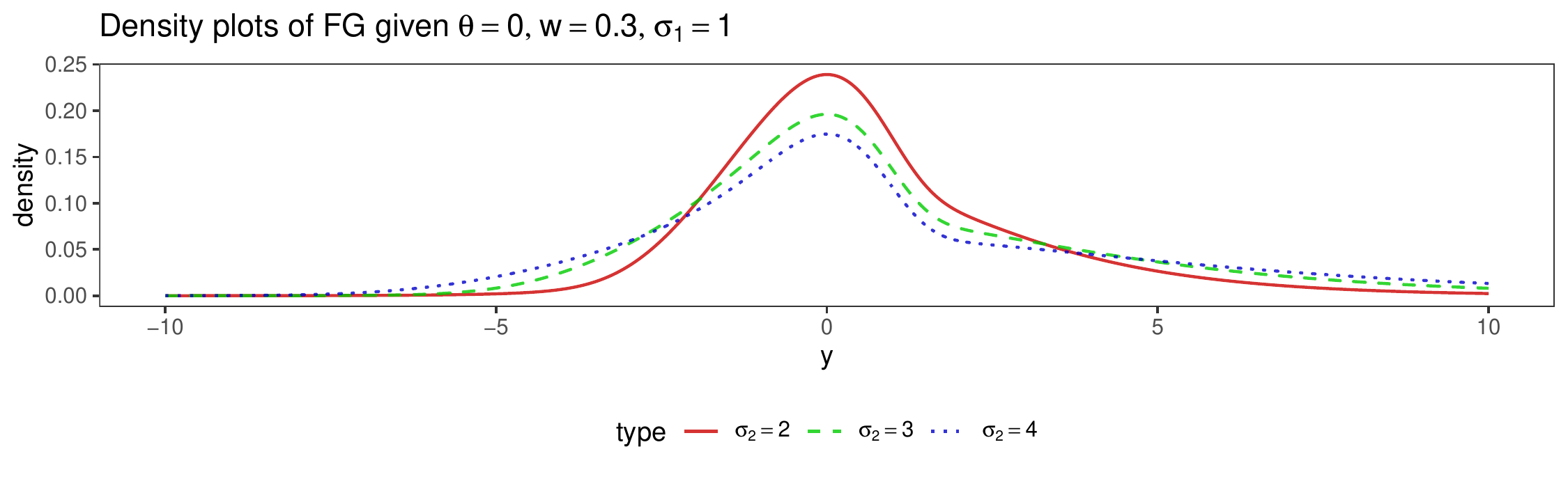}
		%\caption{Density plots of FG with different $\sigma_2$.}
		%\label{fig:FG_varying_sigma1}
	\end{subfigure}
	\begin{subfigure}[b]{0.80\linewidth}
		\centering
		\includegraphics[width=\linewidth]{./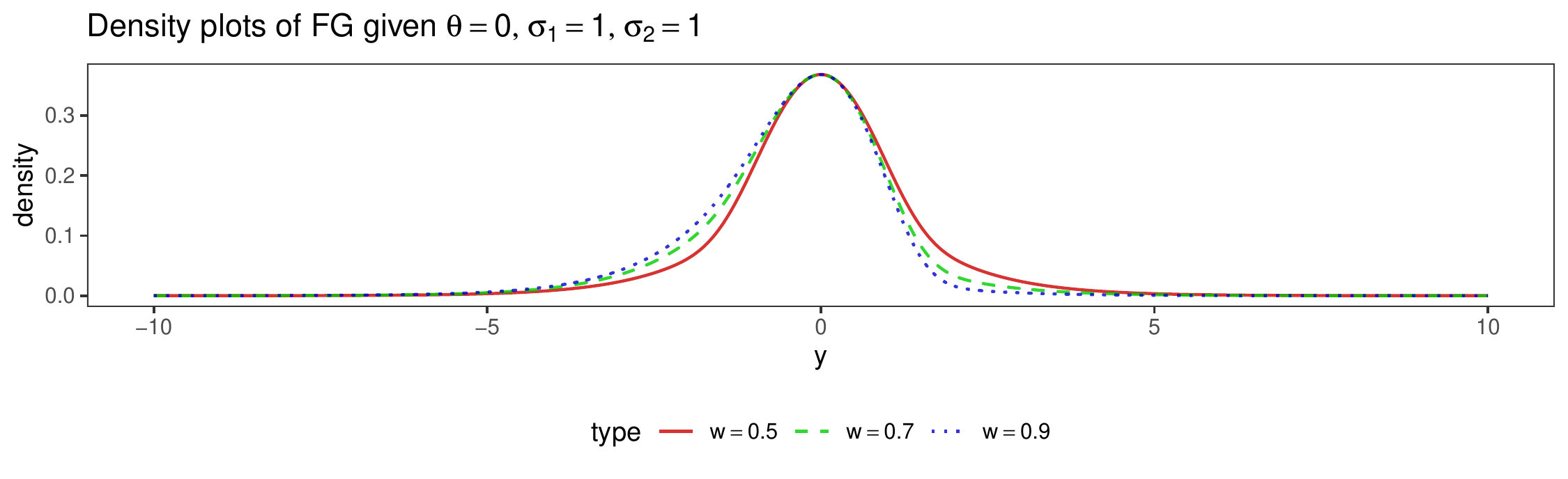}
		%\caption{Density plots of FG with different $w$.}
		%\label{fig:FG_varying_w}
	\end{subfigure}
	\begin{subfigure}[b]{0.80\linewidth}
		\centering
		\includegraphics[width=\linewidth]{./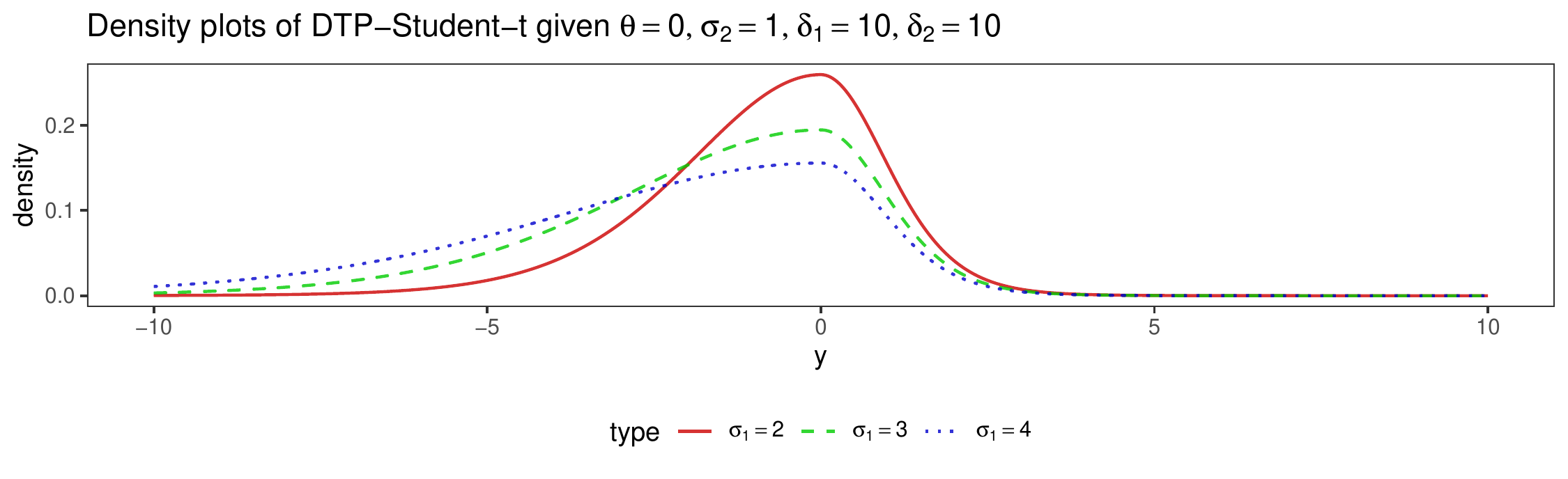}
		%\caption{Density plots of DTP-Student-$t$ with different $\sigma_1$.}
		%\label{fig:DTP_varying_sigma}
	\end{subfigure}
	\begin{subfigure}[b]{0.80\linewidth}
		\centering
		\includegraphics[width=\linewidth]{./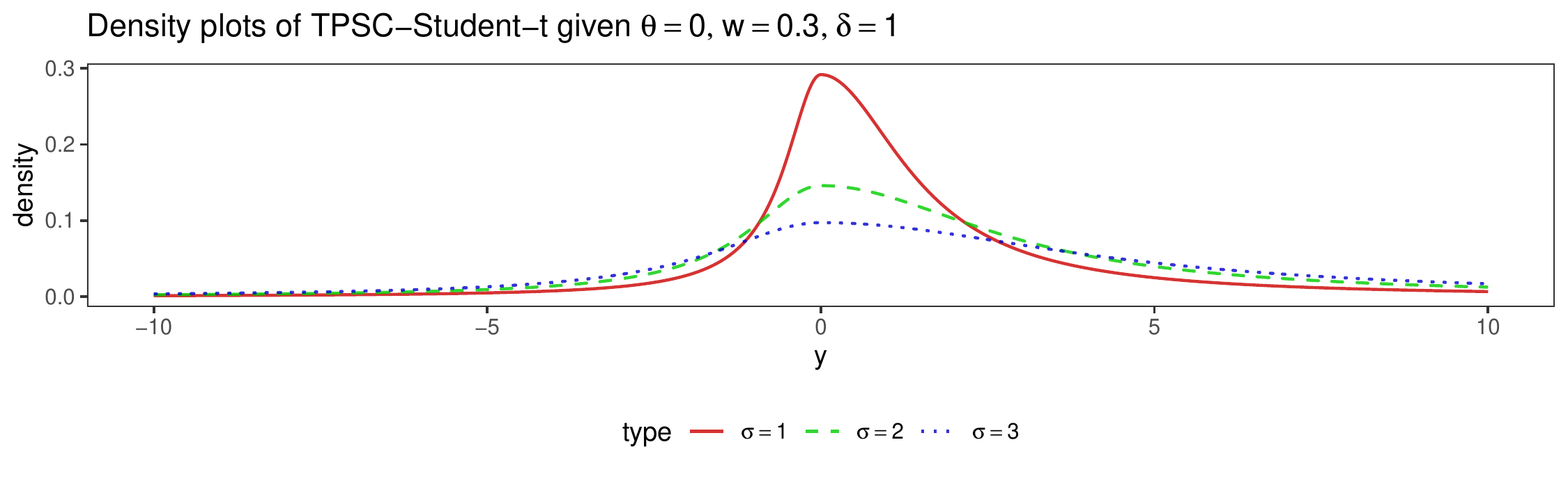}
		%\caption{Density plots of TPSC-Student-$t$ with different $\sigma$.}
		%\label{fig:TPSC_varying_sigma}
	\end{subfigure}
	\begin{subfigure}[b]{0.80\linewidth}
		\centering
		\includegraphics[width=\linewidth]{./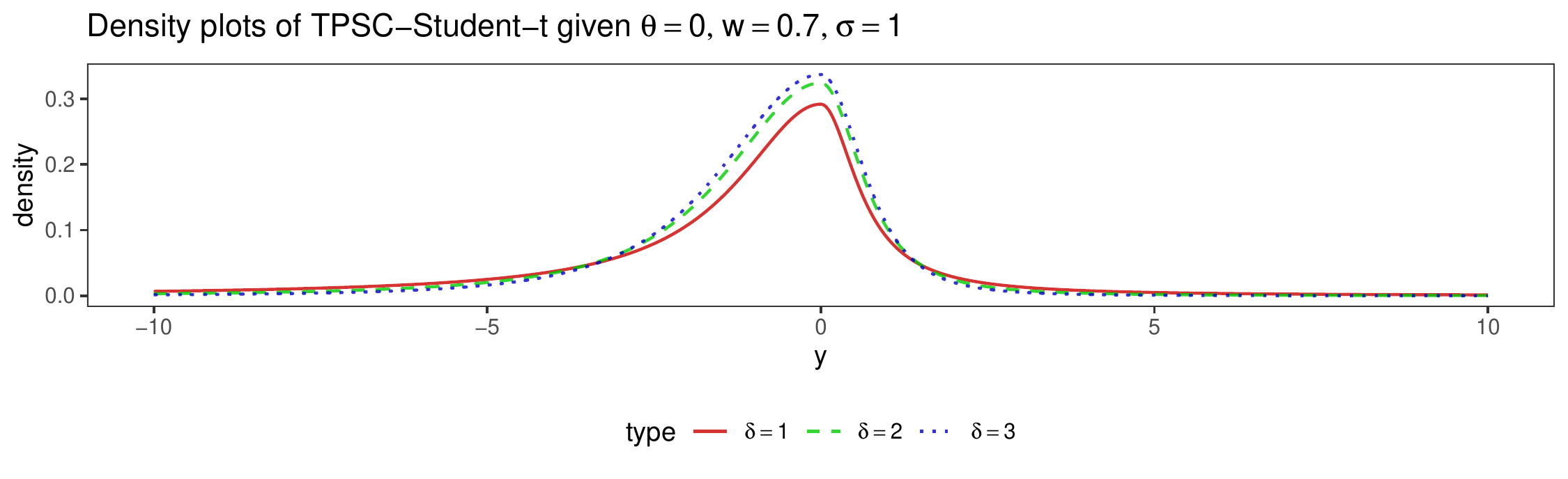}
		%\caption{Density plots of TPSC-Student-$t$ with different $\delta$.}
		%\label{fig:TPSC_varying_delta}
	\end{subfigure}
	\caption{\small {\it Density plots of different distributions in the GUD family  with different parameter specifications.}}
	\label{fig:GUD_pdfs}
\end{figure}

The GUD family can be further categorized into two subfamilies. {\revisetwo Recall that $\mathcal{D}_1$ and $\mathcal{D}_2$} denote the domains of $f_1(\cdot \mid \theta, \boldsymbol{\xi}_1)$ and $f_2(\cdot \mid \theta, \boldsymbol{\xi}_2)$ in the GUD pdf \eqref{eq:pdf.GUD} respectively. If $\mathcal{D}_1 \cap \mathcal{D}_2 \ne \varnothing$, we call the mixture distribution the \textit{type I GUD}. The FG distribution is an example of type I GUD. In Section \ref{sec:logNM} of the Appendix, we present the construction of the lognormal mixture distribution (logNM), which is another example of type I GUD. On the other hand, if $\mathcal{D}_1 \cap \mathcal{D}_2 = \varnothing$, then we have the \textit{type II GUD}. The DTP distributions and the asymmetric Laplace distribution (ALD) \citep{koenker1999goodness} belong to this subfamily of type II GUD.
{\revise Lastly, all distributions in GUD considered in this section are constructed by mixing two component distributions belonging to a common distribution family or similar families (e.g. the Gumbel family for constructing the FG distribution). This is done in order to more easily satisfy restriction (R3) of the pdf being continuous everywhere. More generally, one could start with two components from different families. However, mixing two components from different families can easily lead to a mixture density that is discontinuous at the mode if not carefully constructed.}

%Figure \ref{fig:venn} presents the partition of the GUD family into type I and type II GUD subfamilies. 
%\begin{figure}[t]
%	\centering
%	\includegraphics[width = 0.8\linewidth]{}
%	\caption{\label{fig:venn} \small {\it Venn diagram of the unimodal two-component mixture distributions.}} % {\color{red} \textbf{RB}: Maybe you want to put the text ``GUD Type I'' and ``GUD Type II'' in the two halves of the dashred red rectangle? \color{green} solved} }
%\end{figure}

\section{Bayesian modal regression} \label{sec:Bayesian_inference}

Having defined the GUD family in Section \ref{sec:GUD}, we are now in a position to introduce our Bayesian modal regression framework. In the remainder of the manuscript, we assume that we observe $n$ independent pairs of observations $(\boldsymbol{X}_1, Y_1), (\boldsymbol{X}_2, Y_2), \ldots, (\boldsymbol{X}_n, Y_n)$. Here, $\boldsymbol{X}_i := (X_{i1}, \ldots, X_{ip})^\top$ denotes a vector of $p$ covariates for the $i$th observation. We let $\boldsymbol{X} := [ \boldsymbol{X}_1, \ldots, \boldsymbol{X}_n ]^\top$ denote an $n \times p$ design matrix with rows $\boldsymbol{X}_i^\top, i = 1, \ldots, n$. We assume exchangeability in the sense that, given $\boldsymbol{X}$ and all parameters, $n$ observations in $\boldsymbol{Y} :=(Y_1, \ldots, Y_n)$ are independent. Our goal is to conduct inference about the conditional \textit{mode} of the response variable $Y$ given the covariates $\boldsymbol{X}$.

\subsection{Prior elicitation}\label{sec:prior_elicitation}
For all modal linear regression models in this paper, we assume that 
\begin{equation}
Y_i \mid \boldsymbol{X}_i, w, \boldsymbol{\beta}, \boldsymbol{\xi}_1, \boldsymbol{\xi}_2 \stackrel{\text{ind}}{\sim} \operatorname{GUD}\left(w, \, \boldsymbol{X}_{i}^{\top}\boldsymbol{\beta}, \, \boldsymbol{\xi}_1, \, \boldsymbol{\xi}_2\right), \text{ for $i = 1, \dots, n$,}
\label{eq:modal.linear.model}
\end{equation}
where GUD generically refers to a member of the GUD family, and ``ind'' is the acronym for ``independent.'' Recall that any member of the GUD family contains the location parameter as its mode, which is  $\boldsymbol{X}_i^\top \boldsymbol{\beta}$ as the conditional mode for $Y_i$ given $\boldsymbol{X}_i$ in \eqref{eq:modal.linear.model}.

To conduct inference for our model in \eqref{eq:modal.linear.model}, we adopt a Bayesian approach where appropriate priors are placed on the model parameters  $(w, \boldsymbol{\beta}, \boldsymbol{\xi}_1, \boldsymbol{\xi}_2)$.We endow the weight parameter $w$ with a noninformative $\textrm{Uniform}(0,1)$ prior, and use weakly informative inverse gamma priors for all positive parameters in $\boldsymbol{\xi}_1$ and $\boldsymbol{\xi}_2$. As pointed out by \citet{diebolt1994estimation}, improper priors usually lead to improper posterior distributions for mixture distributions because of identifiability problems. Therefore, {\revise if $\boldsymbol{\xi}_1$ and $\boldsymbol{\xi}_2$ do not share any common parameter}, then improper priors should \textit{not} be used for $\boldsymbol{\xi}_1$ or $\boldsymbol{\xi}_2$. 

On the other hand, a flat prior $p(\boldsymbol{\beta}) \propto 1$ on the regression coefficients $\boldsymbol{\beta}$ usually leads to a proper posterior distribution because both right and left skewed components share the same location parameter. In Section \ref{sec:properposterior}, we provide sufficient conditions under which a flat prior can be used for $\boldsymbol{\beta}$ such that the posterior distribution is proper. These sufficient conditions can be shown to hold for a variety of Bayesian modal regression models. All models going forward thus use a noninformative flat prior, $p(\boldsymbol{\beta}) \propto 1$, for $\boldsymbol{\beta}$.

Revisiting the three members of the GUD family discussed in Section~\ref{sec:GUD}, we have the Bayesian modal linear regression model based on the FG likelihood \eqref{eq:pdf.FG}  formulated as follows, 
\begin{equation}
\begin{aligned}
	Y_i \mid \boldsymbol{X}_i, w,\boldsymbol{\beta},\sigma_1,\sigma_2 &\stackrel{\text{ind}}{\sim} \operatorname{FG}\left(w,\, \boldsymbol{X}_i^{\top} \boldsymbol{\beta},\,\sigma_1,\,\sigma_2\right), \text{ for $i = 1, \dots, n$,}\\
	w &\sim \operatorname{Uniform}(0,1), \\
	\sigma_1,\sigma_2 &\stackrel{\text{i.i.d}}{\sim} \text{InverseGamma}(1,1),\\
	p(\boldsymbol{\beta}) &\propto 1,
	\label{eq:FG.model}
\end{aligned}
\end{equation}
where ``i.i.d'' refers to ``independent and identically distributed.'' 
Meanwhile, the Bayesian modal linear regression associated with the DTP-Student-$t$ likelihood \eqref{eq:pdf.DTP} is specified by 
\begin{equation}
\label{eq:DTP.model}
\begin{aligned}
	Y_i \mid \boldsymbol{X}_i, \boldsymbol{\beta}, \sigma_1, \sigma_2, \delta_1, \delta_2 & \stackrel{\text{ind}}{\sim} \operatorname{DTP-Student-}t\left(\boldsymbol{X}_i^{\top} \boldsymbol{\beta}, \, \sigma_1,\,\sigma_2, \,\delta_1,\,\delta_2 \right), \text{ for $i = 1, \dots, n$,}\\
	\sigma_1,\sigma_2,\delta_1,\delta_2 &\stackrel{\text{i.i.d}}\sim \text{InverseGamma}(1,1), \\
	p(\boldsymbol{\beta}) &\propto 1.
\end{aligned}
\end{equation}
Recall that the weight parameter $w$ of a DTP distribution is fully defined by its scale and shape parameters, so in this case, there is no need to choose a prior for $w$. Finally, the Bayesian modal linear regression associated with the TPSC-Student-$t$ likelihood \eqref{eq:pdf.TPSC} is defined as
\begin{equation}
\begin{aligned}
	Y_i \mid \boldsymbol{X}_i, w, \boldsymbol{\beta}, \sigma, \delta & \stackrel{\text{ind}}{\sim} \operatorname{TPSC-Student-}t\left(w,\, \boldsymbol{X}_i^{\top} \boldsymbol{\beta}, \, \sigma, \, \delta \right), \text{ for $i = 1, \dots, n$,}\\
	w &\sim \operatorname{Uniform}(0,1), \\
	\sigma,\delta &\stackrel{\text{i.i.d}}\sim \text{InverseGamma}(1,1), \\
	p(\boldsymbol{\beta}) &\propto 1.
	\label{eq:TPSC.model}
\end{aligned}
\end{equation}
According to Proposition \ref{thm:suff2} in next subsection, all of the proposed Bayesian modal regression models \eqref{eq:FG.model}--\eqref{eq:TPSC.model} above have proper posterior distributions. Practitioners can construct various other Bayesian modal regression models using the same strategy shown above. In this paper, we concentrate on the modal regression models based on the FG, DTP-Student-$t$, and TPSC-Student-$t$ likelihoods for the sake of concreteness.

\subsection{Sufficient conditions for posterior propriety}\label{sec:properposterior}

Since we use an improper prior, $p(\boldsymbol{\beta}) \propto 1$, for the regression coefficients $\boldsymbol{\beta}$ in our Bayesian modal regression models, it is important to check that the posterior distribution is proper. Theorem \ref{thm:suff1} gives sufficient conditions under which the GUD likelihood \eqref{eq:pdf.GUD} with a flat prior on the mode/location parameter and suitably chosen priors on other model parameters lead to a proper posterior. Theorem \ref{thm:suff2} extends this result to the regression setting. Proofs for the theorems and propositions in this section can be found in Section \ref{sec:proofs} of the Appendix. We stress that our results are \emph{nonasymptotic}; that is, our results apply for any \emph{fixed} sample size $n$.

To ease the notation, let $f_{Z}\left(y\mid w,\boldsymbol{\xi}_{1},\boldsymbol{\xi}_{2}\right) := f\left(y \mid w, \theta=0, \boldsymbol{\xi}_1, \boldsymbol{\xi}_2\right)$ be the pdf of GUD family with the mode at 0. We can rewrite the pdf \eqref{eq:pdf.GUD} as %\begin{equation}
$f_Z(y-\theta \mid w, \boldsymbol{\xi}_1, \boldsymbol{\xi}_2)=f\left(y \mid w, \theta, \boldsymbol{\xi}_1, \boldsymbol{\xi}_2\right)$.
%\label{eq:fZ}
%\end{equation}

\begin{theorem}\label{thm:suff1}
Let $\Theta_{w, \boldsymbol{\xi}_1, \boldsymbol{\xi}_2}$ denote the parameter space of $w, \boldsymbol{\xi}_1$ and $\boldsymbol{\xi}_2$, with respective independent priors $p(w)$, $p(\boldsymbol{\xi}_1)$, and $p(\boldsymbol{\xi}_2)$. For any $n \geq 1$, if
$$
\iiint_{\Theta_{w, \boldsymbol{\xi}_1, \boldsymbol{\xi}_2}} \left\{f_Z\left(0 \mid w, \boldsymbol{\xi}_1, \boldsymbol{\xi}_2\right)\right\}^{n-1} p(w) p\left(\boldsymbol{\xi}_1\right) p\left(\boldsymbol{\xi}_2\right) d w d \boldsymbol{\xi}_1 d \boldsymbol{\xi}_2<\infty,
$$
then the posterior distribution $p\left(w,\theta,\boldsymbol{\xi}_{1},\boldsymbol{\xi}_{2} \mid Y_1,\dots,Y_n\right)$ is proper under a flat prior $p(\theta) \propto 1$.
\end{theorem}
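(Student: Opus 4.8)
The plan is to show that the unnormalized posterior has finite integral over the full parameter space $\Theta_{w,\boldsymbol{\xi}_1,\boldsymbol{\xi}_2}\times\mathbb{R}$ (the extra factor $\mathbb{R}$ being the range of $\theta$, with the flat prior $p(\theta)\propto 1$). Writing the likelihood as $\prod_{i=1}^n f_Z(Y_i-\theta\mid w,\boldsymbol{\xi}_1,\boldsymbol{\xi}_2)$, the quantity to bound is
\begin{equation*}
I := \iiint_{\Theta_{w,\boldsymbol{\xi}_1,\boldsymbol{\xi}_2}}\left(\int_{-\infty}^{\infty}\prod_{i=1}^{n} f_Z(Y_i-\theta\mid w,\boldsymbol{\xi}_1,\boldsymbol{\xi}_2)\,d\theta\right)p(w)\,p(\boldsymbol{\xi}_1)\,p(\boldsymbol{\xi}_2)\,dw\,d\boldsymbol{\xi}_1\,d\boldsymbol{\xi}_2,
\end{equation*}
and the goal is to reduce the inner integral over $\theta$ to something controlled by $\{f_Z(0\mid w,\boldsymbol{\xi}_1,\boldsymbol{\xi}_2)\}^{n-1}$, after which the hypothesis of the theorem finishes the job by Tonelli's theorem (everything is nonnegative, so interchanging integrals is free).

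The key step is the pointwise bound on the inner integral. First I would use the fact that, by restrictions (R1)--(R3), $f_Z(\cdot\mid w,\boldsymbol{\xi}_1,\boldsymbol{\xi}_2)$ is a unimodal density with mode at $0$, so $f_Z(u\mid\cdot)\le f_Z(0\mid\cdot)$ for every $u$. Bounding $n-1$ of the $n$ factors in the product by $f_Z(0\mid w,\boldsymbol{\xi}_1,\boldsymbol{\xi}_2)$ and keeping one factor intact gives
\begin{equation*}
\int_{-\infty}^{\infty}\prod_{i=1}^{n} f_Z(Y_i-\theta\mid w,\boldsymbol{\xi}_1,\boldsymbol{\xi}_2)\,d\theta \le \{f_Z(0\mid w,\boldsymbol{\xi}_1,\boldsymbol{\xi}_2)\}^{n-1}\int_{-\infty}^{\infty} f_Z(Y_1-\theta\mid w,\boldsymbol{\xi}_1,\boldsymbol{\xi}_2)\,d\theta,
\end{equation*}
and the remaining integral equals $1$ by the change of variable $u=Y_1-\theta$ (a genuine density integrates to one). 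Substituting this bound into $I$ yields exactly the finite integral assumed in the statement, so $I<\infty$, which is precisely posterior propriety.

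The main obstacle — really the only subtlety — is justifying that the single retained factor integrates to $1$ uniformly, i.e. that $f_Z(\cdot\mid w,\boldsymbol{\xi}_1,\boldsymbol{\xi}_2)$ is a \emph{bona fide} probability density for every parameter value in $\Theta_{w,\boldsymbol{\xi}_1,\boldsymbol{\xi}_2}$, and that the unimodality-at-zero bound $f_Z(u\mid\cdot)\le f_Z(0\mid\cdot)$ holds with the \emph{same} constant for all $u$; both follow from the GUD construction in \eqref{eq:pdf.GUD} together with (R1), since a mixture of two densities each unimodal at $\theta$ is itself unimodal at $\theta$, but this is worth stating carefully. One should also note $f_Z(0\mid\cdot)<\infty$ — which is implicit in (R1)--(R3) since the mode of a continuous density on $\mathbb{R}$ is finite — so that the $(n-1)$-fold bound is meaningful; when $n=1$ the product is empty, the inner $\theta$-integral is just $1$, and the hypothesis reduces to $\int p(w)p(\boldsymbol{\xi}_1)p(\boldsymbol{\xi}_2)<\infty$, which holds since these are proper priors. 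Everything else is a direct application of Tonelli's theorem.
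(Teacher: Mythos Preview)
Your proposal is correct and follows essentially the same argument as the paper: bound $n-1$ of the likelihood factors by the modal value $f_Z(0\mid w,\boldsymbol{\xi}_1,\boldsymbol{\xi}_2)$ via unimodality, integrate the one remaining factor over $\theta$ to get $1$, and then invoke the hypothesis. Your write-up is in fact slightly more careful than the paper's, explicitly invoking Tonelli, handling the $n=1$ edge case, and noting that $f_Z(0\mid\cdot)<\infty$ is needed for the bound to be nontrivial.
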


Theorem \ref{thm:suff1} applies to the case where there is a single location parameter $\theta$ (as in the bank deposits application in Section \ref{sec:bank}). Next, we extend this result to the regression setting. Theorem \ref{thm:suff2} enables us to use the noninformative flat prior $p(\boldsymbol{\beta}) \propto 1$ for the regression coefficients $\boldsymbol{\beta}$ in Bayesian modal regression based on the GUD likelihood \eqref{eq:pdf.GUD}.

\begin{theorem}\label{thm:suff2}
Let $\boldsymbol{X}$ be a full rank design matrix with $p \le n$ and finite entries. If 
$$
\iiint_{\Theta_{w, \boldsymbol{\xi}_1, \boldsymbol{\xi}_2}} \left\{f_Z\left(0 \mid w, \boldsymbol{\xi}_1, \boldsymbol{\xi}_2\right)\right\}^{n-p}\ p(w) p\left(\boldsymbol{\xi}_1\right) p\left(\boldsymbol{\xi}_2\right) d w d \boldsymbol{\xi}_1 d \boldsymbol{\xi}_2<\infty,
$$
then the posterior distribution $p\left(w, \boldsymbol{\beta}, \boldsymbol{\xi}_1, \boldsymbol{\xi}_2 \mid \boldsymbol{X}, \boldsymbol{Y}\right)$ is proper under a flat prior $p(\boldsymbol{\beta}) \propto 1$.
\end{theorem}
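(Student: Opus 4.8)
The plan is to reduce the regression problem to the single-location setting of Theorem \ref{thm:suff1} by integrating out the ``extra'' directions in $\boldsymbol{\beta}$-space. Write the posterior as proportional to
$\int \prod_{i=1}^n f_Z(Y_i - \boldsymbol{X}_i^\top\boldsymbol{\beta} \mid w,\boldsymbol{\xi}_1,\boldsymbol{\xi}_2)\, p(w)p(\boldsymbol{\xi}_1)p(\boldsymbol{\xi}_2)\, d\boldsymbol{\beta}\, dw\, d\boldsymbol{\xi}_1\, d\boldsymbol{\xi}_2$, so it suffices to bound this quantity. Since $\boldsymbol{X}$ is full rank with $p \le n$, I would perform a change of variables in $\mathbb{R}^n$: complete $\boldsymbol{X}\boldsymbol{\beta}$ to a convenient parametrization of the residual/fitted decomposition. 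Concretely, since the rows $\boldsymbol{X}_i^\top$ span a $p$-dimensional column space, I would pick $p$ linearly independent rows, say (after relabeling) $\boldsymbol{X}_1, \ldots, \boldsymbol{X}_p$, and change variables from $\boldsymbol{\beta} \in \mathbb{R}^p$ to $(t_1, \ldots, t_p) := (\boldsymbol{X}_1^\top\boldsymbol{\beta}, \ldots, \boldsymbol{X}_p^\top\boldsymbol{\beta})$; the Jacobian is the constant $|\det([\boldsymbol{X}_1,\ldots,\boldsymbol{X}_p])|^{-1}$, finite and nonzero by the full-rank assumption. Under this map the remaining fitted values $\boldsymbol{X}_i^\top\boldsymbol{\beta}$ for $i = p+1,\ldots,n$ become fixed affine functions of $(t_1,\ldots,t_p)$.

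The key step is then to bound $\prod_{i=1}^p f_Z(Y_i - t_i \mid w,\boldsymbol{\xi}_1,\boldsymbol{\xi}_2)$ from above and integrate out $t_1,\ldots,t_p$. By restriction (R1) together with (R3), each GUD pdf is continuous and unimodal at its mode, hence $f_Z(\cdot \mid w,\boldsymbol{\xi}_1,\boldsymbol{\xi}_2)$ attains its supremum at $0$, so $f_Z(Y_i - t_i \mid w,\boldsymbol{\xi}_1,\boldsymbol{\xi}_2) \le f_Z(0 \mid w,\boldsymbol{\xi}_1,\boldsymbol{\xi}_2)$ for every $i$ and every $t_i$. I would apply this trivial bound to $p-1$ of the first $p$ factors, and keep one factor, say $f_Z(Y_1 - t_1 \mid \cdot)$, intact; since $\int_{\mathbb{R}} f_Z(Y_1 - t_1 \mid w,\boldsymbol{\xi}_1,\boldsymbol{\xi}_2)\, dt_1 = 1$, integrating over $t_1$ contributes a factor of $1$. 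For the remaining $n - p$ factors (indices $p+1,\ldots,n$), I would likewise use $f_Z(\cdot) \le f_Z(0 \mid w,\boldsymbol{\xi}_1,\boldsymbol{\xi}_2)$. After carrying out the $t_1$ integration and these bounds, the $\boldsymbol{\beta}$-integral is bounded by a constant times $\{f_Z(0\mid w,\boldsymbol{\xi}_1,\boldsymbol{\xi}_2)\}^{(p-1) + (n-p)} = \{f_Z(0\mid w,\boldsymbol{\xi}_1,\boldsymbol{\xi}_2)\}^{n-1}$, and — after integrating out $t_2,\ldots,t_p$ over bounded or suitably controlled ranges, or more cleanly by keeping the exponent as $n-p$ via a sharper accounting — one is left with exactly the integral appearing in the hypothesis. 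I would make sure the bookkeeping of exponents matches the stated $n-p$: one clean route is to integrate $t_1, \ldots, t_p$ jointly, using the unimodal bound on the $n-p$ factors indexed by $i > p$ and recognizing the first $p$ factors as (a constant multiple of) a product of probability densities in $(t_1,\ldots,t_p)$ that integrates to a finite constant; this yields the bound $C \cdot \{f_Z(0\mid w,\boldsymbol{\xi}_1,\boldsymbol{\xi}_2)\}^{n-p}$ directly.

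Finally, substituting this bound back, the posterior normalizing constant is at most
$C \iiint_{\Theta_{w,\boldsymbol{\xi}_1,\boldsymbol{\xi}_2}} \{f_Z(0 \mid w,\boldsymbol{\xi}_1,\boldsymbol{\xi}_2)\}^{n-p} p(w) p(\boldsymbol{\xi}_1) p(\boldsymbol{\xi}_2)\, dw\, d\boldsymbol{\xi}_1\, d\boldsymbol{\xi}_2$,
which is finite by hypothesis, establishing propriety. The main obstacle I anticipate is the exponent bookkeeping in the change of variables: one must argue carefully that integrating out the $p$ ``fitted-value'' coordinates consumes exactly $p$ of the $n$ density factors (contributing finite constants, not powers of $f_Z(0\mid\cdot)$), leaving precisely $n-p$ factors to be controlled by the supremum bound — getting a weaker exponent would not match the sharp condition in the statement. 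A secondary point requiring care is ensuring the change of variables is genuinely a bijection on $\mathbb{R}^p$ with constant Jacobian, which is where full rank and finiteness of the entries of $\boldsymbol{X}$ are used; the case $p = n$ is the boundary case where no supremum bound is needed at all and the integral in the hypothesis reduces to $\iiint p(w)p(\boldsymbol{\xi}_1)p(\boldsymbol{\xi}_2) = 1 < \infty$.
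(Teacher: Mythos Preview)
Your ``clean route'' is exactly the paper's proof: bound the $n-p$ factors with indices $i>p$ by the mode value $f_Z(0\mid\cdot)$, then change variables $\boldsymbol{\beta}\mapsto(t_1,\ldots,t_p)=(\boldsymbol{X}_1^\top\boldsymbol{\beta},\ldots,\boldsymbol{X}_p^\top\boldsymbol{\beta})$ and integrate the remaining product $\prod_{i=1}^p f_Z(Y_i-t_i\mid\cdot)$ to the constant $1/|\det(\boldsymbol{U})|$ (this is precisely the paper's Lemma~\ref{thm:locp}). Note that your preliminary variant---bounding $p-1$ of the first $p$ factors by $f_Z(0\mid\cdot)$ and keeping only one---does \emph{not} work, because the integrals over $t_2,\ldots,t_p$ would then range over all of $\mathbb{R}$ against a constant and diverge; you must retain all $p$ density factors to absorb the $p$ integrations, which is exactly what your clean route (and the paper) does.
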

The sufficient conditions in Theorems \ref{thm:suff1} and \ref{thm:suff2} may seem abstract, and checking such conditions amounts to testing convergence of multiple integrals. The intuition behind these theorems is that, if the GUD likelihood with a mode of zero has a proper posterior distribution under suitably chosen priors on the scale/shape parameters, then the use of a flat prior $p(\boldsymbol{\beta}) \propto 1$ is acceptable. 

\begin{proposition}\label{thm:proper.posterior.reg}
Suppose that $\boldsymbol{X}$ is full rank with $p \leq n$ and finite entries. Then the Bayesian modal regression models \eqref{eq:FG.model}, \eqref{eq:DTP.model}, and \eqref{eq:TPSC.model} based on the FG, DTP-Student-$t$, and TPSC-Student-$t$ likelihoods, respectively, have proper posterior distributions.
\end{proposition}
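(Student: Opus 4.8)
The plan is to invoke Theorem~\ref{thm:suff2} for each of the three likelihoods and verify its integrability hypothesis under the priors specified in \eqref{eq:FG.model}--\eqref{eq:TPSC.model}. The first step is to obtain a closed form for the peak value $f_Z(0\mid w,\boldsymbol{\xi}_1,\boldsymbol{\xi}_2)$ of the mode-zero density in each case. For the FG family, evaluating the Gumbel density at its mode gives $f_{\operatorname{Gumbel}}(0\mid 0,\sigma)=1/(\sigma e)$, so that $f_Z(0\mid w,\sigma_1,\sigma_2)=e^{-1}\{w/\sigma_1+(1-w)/\sigma_2\}$. For the DTP-Student-$t$ family, substituting the weight \eqref{eq:pdf.DTP.w} into \eqref{eq:pdf.DTP} and simplifying (the two mixture terms coincide at the mode, which is exactly what \eqref{eq:pdf.DTP.w} is designed to enforce) yields
\[
f_Z(0\mid\sigma_1,\sigma_2,\delta_1,\delta_2)=\frac{4\,g(0\mid\delta_1)\,g(0\mid\delta_2)}{\sigma_1\,g(0\mid\delta_2)+\sigma_2\,g(0\mid\delta_1)},
\]
where $g(0\mid\delta)$ denotes the standardized three-parameter Student's $t$ density at $0$; for the TPSC-Student-$t$ family the analogous computation from \eqref{eq:pdf.TPSC} gives $f_Z(0\mid w,\sigma,\delta)=4\sqrt{w(1-w)}\,g(0\mid\delta)/\sigma$. (Since the DTP weight is a deterministic function of the other parameters, Theorem~\ref{thm:suff2} is applied with $w$ absorbed, i.e. $\boldsymbol{\xi}_1=(\sigma_1,\delta_1)$, $\boldsymbol{\xi}_2=(\sigma_2,\delta_2)$ and no separate factor $p(w)$.)

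The second step is to dominate these peak values by expressions free of the shape parameters. The key fact is that $g(0\mid\delta)=\Gamma((\delta+1)/2)/\{\sqrt{\pi\delta}\,\Gamma(\delta/2)\}$ is bounded on $(0,\infty)$, with $g(0\mid\delta)\le 1/\sqrt{2\pi}$; it increases to the Gaussian limit $1/\sqrt{2\pi}$ as $\delta\to\infty$ and vanishes like $\sqrt{\delta}/2$ as $\delta\to 0^{+}$, so a short monotonicity (or direct) argument gives the uniform bound. Combining this with $w(1-w)\le 1/4$ and the elementary inequality $\sigma_1 g(0\mid\delta_2)+\sigma_2 g(0\mid\delta_1)\ge 2\sqrt{\sigma_1\sigma_2\,g(0\mid\delta_1)g(0\mid\delta_2)}$, one gets $f_Z(0)\le e^{-1}(1/\sigma_1+1/\sigma_2)$ for FG, $f_Z(0)\le \sqrt{2/\pi}\,(\sigma_1\sigma_2)^{-1/2}$ for DTP-Student-$t$, and $f_Z(0)\le \sqrt{2/\pi}\,/\sigma$ for TPSC-Student-$t$. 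None of these bounds involves $\delta_1,\delta_2,\delta$, so the integrals over the shape parameters (proper InverseGamma$(1,1)$ priors) and over $w\in(0,1)$ (Uniform prior, or trivial for DTP) contribute only finite constants.

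The third step is the remaining convergence check. Raising the dominating bounds to the power $n-p$ and, in the FG case, expanding $(1/\sigma_1+1/\sigma_2)^{n-p}$ by the binomial theorem, everything reduces to integrals of the form $\int_0^\infty \sigma^{-k}\,\sigma^{-2}e^{-1/\sigma}\,d\sigma$ against the InverseGamma$(1,1)$ density. The substitution $u=1/\sigma$ turns this into $\int_0^\infty u^{k}e^{-u}\,du=\Gamma(k+1)<\infty$: the factor $e^{-1/\sigma}$ of the prior suppresses the polynomial blow-up of $\sigma^{-k}$ as $\sigma\to 0$, while $\sigma^{-2}$ controls the tail as $\sigma\to\infty$. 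Multiplying finitely many such finite integrals establishes the hypothesis of Theorem~\ref{thm:suff2} for all three models, whence the posteriors are proper; when $n=p$ the condition is immediate since the integrand collapses to a product of proper priors.

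The main obstacle is expected to be computational rather than conceptual: carefully carrying out the algebraic simplification of $f_Z(0\mid\cdot)$ for the DTP-Student-$t$ density, where the specific weight \eqref{eq:pdf.DTP.w} must be inserted and the two mixture contributions shown to agree at the mode, and recording the uniform bound $g(0\mid\delta)\le 1/\sqrt{2\pi}$. Once these closed forms and this bound are in hand, the integrability verification is routine.
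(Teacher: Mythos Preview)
Your proposal is correct and, for the FG case, essentially identical to the paper's argument (same peak value, same bound $e^{-1}(1/\sigma_1+1/\sigma_2)$, same binomial expansion against inverse-gamma moments). For the DTP-Student-$t$ and TPSC-Student-$t$ cases you take a genuinely different and somewhat cleaner route. The paper does not eliminate the shape parameter: it bounds $f_Z(0)$ for DTP by $2\,\Gamma((\delta_2+1)/2)/\{\Gamma(\delta_2/2)\sqrt{\delta_2}\,\sigma_2\}$ and then invokes a dedicated technical lemma (using Gautschi's inequality and a split of the $\delta_2$-integral at $1$) to show that this gamma-ratio, raised to the power $n-p$, is integrable against the inverse-gamma prior on $\delta_2$. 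Your approach instead exploits the uniform bound $g(0\mid\delta)\le 1/\sqrt{2\pi}$ so that the shape parameters drop out before integration, reducing everything to inverse-gamma moments of $1/\sigma$; this is more elementary and bypasses the Gautschi machinery entirely. One minor slip: the leading constants of $4$ in your DTP and TPSC expressions for $f_Z(0)$ should be $2$, since at $y=\theta$ only the right-truncated component contributes (the indicator $I(y<\theta)$ in the left piece vanishes there); the factor-of-two error is inconsequential for the integrability conclusion.
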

Proposition \ref{thm:proper.posterior.reg} confirms that under suitable regularity conditions on the design matrix $\boldsymbol{X}$, all of the regression models proposed in this paper have proper posterior distributions. The proof of Proposition \ref{thm:proper.posterior.reg} relies on verifying the sufficient condition given in Theorem \ref{thm:suff2}. Our proof provides a template for verifying posterior propriety for other Bayesian modal regression models \eqref{eq:modal.linear.model} under the general GUD family.

\citet{diebolt1994estimation} have argued that improper priors should in general not be used for Bayesian modeling of mixture distributions. We note, however, that the reasoning of \citet{diebolt1994estimation} does not necessarily apply to the \textit{location} parameter $\theta$ (or the mode). This is because the mode $\theta$ is shared by \textit{both} left- and right-skewed components in our proposed GUD family of distributions. Therefore, we are able to derive  sufficient conditions under which a totally noninformative flat prior $p(\theta) \propto 1$ or $p(\boldsymbol{\beta}) \propto 1$ can still be used to infer the conditional \textit{mode}. 

On the other hand, we recommend against using improper priors for any of the \textit{non}-location parameters (i.e. the shape/scale parameters) in Bayesian modal regression based on the GUD family. We formalize this in Proposition \ref{thm:improper} below. This proposition states that, for the GUD family, using an improper prior for \emph{any} shape/scale parameter that is not shared by both components leads to an \textit{improper} posterior distribution. 
\begin{proposition}\label{thm:improper}
If $\tau \in \left(\boldsymbol{\xi}_1 \cup \boldsymbol{\xi}_2\right) \backslash\left(\boldsymbol{\xi}_1 \cap \boldsymbol{\xi}_2\right)$, then using an improper prior for $\tau$ will lead to an improper posterior distribution.
\end{proposition}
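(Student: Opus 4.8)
The plan is to exploit the two-component mixture structure of \eqref{eq:pdf.GUD} together with the nonnegativity of each component, so that the dependence on $\tau$ can be isolated into a single diverging prior integral. Without loss of generality, suppose $\tau \in \boldsymbol{\xi}_1 \setminus \boldsymbol{\xi}_2$, i.e. $\tau$ is a shape/scale parameter of the left-skewed component $f_1$ that does not appear in $f_2$; the case $\tau \in \boldsymbol{\xi}_2 \setminus \boldsymbol{\xi}_1$ is handled identically with the roles of $f_1$ and $f_2$ swapped. Recall that the posterior is proper if and only if the marginal likelihood $m(\boldsymbol{Y}) = \int \prod_{i=1}^n f(Y_i \mid w,\theta,\boldsymbol{\xi}_1,\boldsymbol{\xi}_2)\, p(w)p(\theta)p(\boldsymbol{\xi}_1)p(\boldsymbol{\xi}_2)\, dw\, d\theta\, d\boldsymbol{\xi}_1\, d\boldsymbol{\xi}_2$ is finite, under the prior independence structure of Theorem~\ref{thm:suff1}, where the prior on $\tau$ is now improper, so that $\int p(\tau)\, d\tau = \infty$.

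First I would expand the likelihood $\prod_{i=1}^n \{ w f_1(Y_i\mid\theta,\boldsymbol{\xi}_1) + (1-w) f_2(Y_i\mid\theta,\boldsymbol{\xi}_2)\}$ into its $2^n$ nonnegative summands and keep only the single term in which the right-skewed component $f_2$ is selected for every observation, namely $(1-w)^n \prod_{i=1}^n f_2(Y_i\mid\theta,\boldsymbol{\xi}_2)$. Since every discarded summand is nonnegative, this gives the pointwise bound $\prod_{i=1}^n f(Y_i\mid\cdots) \ge (1-w)^n \prod_{i=1}^n f_2(Y_i\mid\theta,\boldsymbol{\xi}_2)$, and the right-hand side is completely free of $\tau$. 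Substituting this bound into $m(\boldsymbol{Y})$, the variable $\tau$ enters the integrand only through its own prior factor, so the integral over $\tau$ separates:
\[
m(\boldsymbol{Y}) \;\ge\; \Bigl(\textstyle\int p(\tau)\, d\tau\Bigr)\, C ,
\]
where $C$ denotes the integral of $(1-w)^n \prod_{i=1}^n f_2(Y_i\mid\theta,\boldsymbol{\xi}_2)$ against $p(w)$, $p(\theta)$, the remaining factors of $p(\boldsymbol{\xi}_1)$, and $p(\boldsymbol{\xi}_2)$.

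The one remaining point is to confirm $C>0$, after which $m(\boldsymbol{Y}) \ge \infty \cdot C = \infty$ and the posterior is improper. I expect this positivity check to be the main (though mild) obstacle, because for type~II members of the GUD family (e.g. the DTP distributions) $f_2$ is a right $\theta$-truncated density supported on $[\theta,\infty)$ and therefore vanishes whenever some $Y_i < \theta$. To deal with this I would note that the flat prior $p(\theta)\propto 1$ assigns positive measure to the region $\{\theta < \min_i Y_i\}$, on which $f_2(Y_i\mid\theta,\boldsymbol{\xi}_2)>0$ simultaneously for all $i$, while $p(w)$ and the priors on the non-$\tau$ parameters are positive on sets of positive measure; for type~I members the overlap of supports makes positivity immediate. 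Hence $C\in(0,\infty]$, and in either case the product with $\int p(\tau)\,d\tau=\infty$ is infinite, proving impropriety. The argument transfers verbatim to the regression model \eqref{eq:modal.linear.model} upon replacing $\theta$ by $\boldsymbol{X}_i^{\top}\boldsymbol{\beta}$ and the event $\{\theta<\min_i Y_i\}$ by an analogous half-space in $\boldsymbol{\beta}$, which is nonempty because $\boldsymbol{X}$ has full rank.
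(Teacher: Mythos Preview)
Your approach is essentially the same as the paper's: both arguments drop all but the $(1-w)^n\prod_{i=1}^n f_2(Y_i\mid\theta,\boldsymbol{\xi}_2)$ term from the binomial expansion of the mixture likelihood, observe that this lower bound is free of $\tau$, and conclude that the $\tau$-integral of the prior factor diverges. Your write-up is in fact more careful than the paper's, which simply asserts the lower bound and divergence without discussing why the remaining integral $C$ is strictly positive; your treatment of the type~II case via the event $\{\theta<\min_i Y_i\}$ (and its regression analogue) fills a gap the paper leaves implicit.
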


In Section \ref{sec:logNM} of the Appendix, we provide a specific example of Proposition \ref{thm:improper} for the logNM distribution (also introduced in the same section). 

\subsection{Uncertainty quantification and model selection} \label{sec:Model.selection} 
{\revisetwo
Let $\boldsymbol{\Omega} = \left(w, \boldsymbol{\beta}, \boldsymbol{\xi}_1, \boldsymbol{\xi}_2\right)$ represent the collection of model parameters. We are interested in the distribution of a new response $Y_{\text{new}}$ given new covariates $\boldsymbol{X}_{\text{new}}$, and observed data $(\boldsymbol{Y}, \boldsymbol{X})$.

The posterior predictive distribution is defined as
$$
p\left(Y_{\text{new}} \mid \boldsymbol{Y}, \boldsymbol{X}, \boldsymbol{X}_{\text{new}}\right) = \int_{\Theta} p\left(Y_{\text{new}} \mid \boldsymbol{\Omega}, \boldsymbol{Y}, \boldsymbol{X}, \boldsymbol{X}_{\text{new}}\right) p\left(\boldsymbol{\Omega} \mid \boldsymbol{Y}, \boldsymbol{X}, \boldsymbol{X}_{\text{new}}\right) d \boldsymbol{\Omega},
$$
where $\Theta$ denotes the parameter space. Because the unobserved data is conditionally independent of the observed data $(\boldsymbol{Y}, \boldsymbol{X})$ given the parameters $\boldsymbol{\Omega}$, we have $
p(Y_{\text{new}} \mid \boldsymbol{\Omega}, \boldsymbol{Y}, \boldsymbol{X}, \boldsymbol{X}_{\text{new}}) = p(Y_{\text{new}} \mid \boldsymbol{\Omega}, \boldsymbol{X}_{\text{new}})$.
Additionally, because the new set of covariates $\boldsymbol{X}_{\text{new}}$ is independent of the posterior distribution of the parameters $\boldsymbol{\Omega}$, 
$
p(\boldsymbol{\Omega} \mid \boldsymbol{Y}, \boldsymbol{X}, \boldsymbol{X}_{\text{new}}) = p(\boldsymbol{\Omega} \mid \boldsymbol{Y}, \boldsymbol{X})$.
Therefore, the posterior predictive distribution reduces to
\begin{align}
	p\left(Y_{\text{new}} \mid \boldsymbol{Y}, \boldsymbol{X}, \boldsymbol{X}_{\text{new}}\right) & = \int_{\Theta} p\left(Y_{\text{new}} \mid \boldsymbol{\Omega}, \boldsymbol{X}_{\text{new}}\right) p\left(\boldsymbol{\Omega} \mid \boldsymbol{Y}, \boldsymbol{X}\right) d \boldsymbol{\Omega}.\label{eq:ppd}
	\end{align}}
	Obtaining an approximation of the posterior predictive distribution specified by \eqref{eq:ppd} is computationally inexpensive. With the {\revise sampling method} outlined in Section \ref{sec:GUD} for the GUD family, one can easily draw samples from {\revisetwo $p(Y_{\text{new}}\mid \boldsymbol{\Omega}, \boldsymbol{X}_{\operatorname{new}})$} during each iteration in our MCMC algorithm, and then obtain samples from the posterior predictive distribution {\revisetwo $p\left(Y_{\operatorname{new}} \mid \boldsymbol{Y}, \boldsymbol{X}, \boldsymbol{X}_{\operatorname{new}} \right)$}. 
	In this paper, we use the \texttt{hdi} function in the \textsf{R} package \texttt{HDInterval} \citep{Rmanual,meredith2018package}, whose inputs are random samples generated from the posterior predictive distributions,  to calculate the highest probability density (HPD) intervals. We use 90\% HPD prediction intervals as the posterior prediction intervals for all mean/median/modal regression models that we consider in Sections \ref{sec:Simulation} and \ref{sec:data_application}.
	
	Due to the inherent nature of the conditional mode, the HPD intervals from modal regression models will usually be \textit{narrower} than those constructed under mean or median regression models, while having the \textit{same} amount of coverage \citep{yao2014new}. %From a statistical inference point of view, this is a very attractive property of our Bayesian modal regression models -- we can obtain high coverage with tighter intervals.
	Prediction intervals from mean or median regression can sometimes be very conservative and contain many implausible values. We illustrate the benefits of more efficient inference from modal regression in Sections \ref{sec:Simulation} and \ref{sec:data_application}. 
	
	As mentioned in Section \ref{sec:GUD}, there are many different GUD likelihoods that a practitioner can choose from in order to conduct Bayesian inference for modal regression. We propose to use the Bayesian leave-one-out expected log posterior density as a model selection criterion for selecting the ``best'' GUD likelihood to use. The Bayesian leave-one-out expected log predictive density is defined as
	\begin{equation}
\text{ELPD}=\sum_{i=1}^n \log p\left(Y_i \mid Y_{-i}\right),
\label{eq:loo}
\end{equation}
where $Y_{-i}$ represents all observations except the $i$-th observation. In \eqref{eq:loo}, ``ELPD'' stands for the theoretical expected log predictive density. Intuitively, if a model fits the data well, its predicted value of $Y_i$ given $Y_{-i}$ should be close to the observed $Y_i$ and $p\left(Y_i \mid Y_{-i}\right)$ should be large, for all $i = 1, \dots, n$. Therefore, an adequate model tends to yield a high ELPD.

We apply the Pareto-smoothed importance sampling method (PSIS) of \citet{vehtari2017practical} to obtain an estimate of ELPD. 
The PSIS estimation of ELPD has been implemented in the \textsf{R} package \texttt{loo}, which is compatible with the \texttt{Stan} programming language \citep{carpenter2017stan}. When fitting multiple competing models to the same dataset, the model with the highest estimated ELPD is preferred. By a slight abuse of notation, we use ELPD to refer to the estimated ELPD in all empirical study presented in this paper. 

Other model selection criteria, such as the deviance information criterion (DIC) proposed by \citet{spiegelhalter2002bayesian} and the widely applicable information criterion (WAIC) introduced by \citet{watanabe2010asymptotic}, are also applicable to regression models with GUD likelihoods. In fact, DIC and WAIC have been shown to be asymptotically equal to ELPD \citep{gelman2013bayesian}. However, \citet{vehtari2017practical} recommended ELPD and WAIC over DIC because DIC is not a fully Bayesian information criterion and is based on a point estimate. Additionally, \citet{vehtari2017practical} demonstrated that ELPD is more robust than WAIC in finite samples with weak priors or influential observations. Therefore, we decided to use ELPD for all data applications and numerical studies in this paper.

\subsection{{\revise Implementation}}

We utilized the \texttt{Stan} programming language interfaced with \textsf{R}  \citep{carpenter2017stan} to implement all data analyses in the empirical study presented in this article. {\revise  \texttt{Stan} uses Hamiltonian Monte Carlo \citep{neal2011mcmc} and leverages the No-U-Turn sampler (NUTS) proposed by \citet{hoffman2014no}. The implementation of Bayesian linear modal regression in \eqref{eq:modal.linear.model} in \texttt{Stan} involves defining data log-likelihood functions first, followed by specifying priors as outlined in \eqref{eq:FG.model} through \eqref{eq:TPSC.model}. Computer programs for reproducing all numerical results in our study are available at the following link:  \href{https://github.com/rh8liuqy/Bayesian_modal_regression}{https://github.com/rh8liuqy/Bayesian\_modal\_regression}.}

Even though it may seem plausible to apply the sampling method as outlined in Section~\ref{sec:GUD} where one introduces a latent variable to separate the inference for the left-skewed and right-skewed parts, dealing with type II GUD distributions introduces challenges. Introducing such a latent variable in a data augmentation MCMC algorithm might result in a degenerate random variable, as demonstrated in Section \ref{sec:note_on_MCMC} of the Appendix. {\revise In such cases, the data augmented MCMC sampler is \emph{reducible}. {\revisetwo A Markov chain is reducible if it is impossible to eventually get from one state to any other states in a finite number of steps. In our context, the data augmentation algorithm is reducible since a poor initialization leads to the latent variable always taking the same value, regardless of the number of MCMC iterations. A reducibile MCMC sampler is less practical, since it requires ``acceptable'' initial conditions in order to be able to explore the entire parameter space \citep{Robert2004}. Acceptable initial conditions are typically unknown.}

{\revisetwo In contrast to many widely used data augmentation algorithms}, NUTS does not require the introduction of discrete latent variables to fit Bayesian mixture models. Instead, NUTS simply requires one to be able to calculate the gradient of the log-likelihood with respect to the model parameters, and this can be done using automatic differentiation. We observed satisfactory performance of NUTS in the empirical study, as evidenced by promising convergence results for all data applications and simulations studies (see the convergence diagnostics presented in Section \ref{Sec:Convergence_Diagnostics}) of the Appendix.

\section{\revise Simulation studies} \label{sec:Simulation}

\subsection{Left-skewed data}\label{sec:left_skewed_data}

We now present one simulation study which show that our Bayesian modal regression model is an excellent choice for modeling data that is heavily skewed. Under our simulation settings, simulated data was left-skewed; and, in addition to the pronounced global conditional mode, there was also a small local mode. We compared our Bayesian modal regression models to {\revise classic/robust} Bayesian mean and median regression models. The classic Bayesian mean regression used a normal likelihood, i.e., 
\begin{equation}
	\begin{aligned}
		Y_i \mid \boldsymbol{\beta},\sigma,\boldsymbol{X}_i & \stackrel{\text{ind}}{\sim} \mathcal{N}(\boldsymbol{X}_i^{\top} \boldsymbol{\beta},\,\sigma^2), \text{ for $i = 1,\dots, n$,}\\
		\sigma &\sim \text{InverseGamma}(1,1),\\
		p(\boldsymbol{\beta}) &\propto 1.
	\end{aligned}
	\label{eq:model.normal}
\end{equation}
{\revise The robust Bayesian mean regression model used the SNCP likelihood \citep{arellano2008centred}, i.e.,
	\begin{equation}
		\begin{aligned}
			Y_i \mid \boldsymbol{\beta},\sigma,\boldsymbol{X}_i & \stackrel{\text{ind}}{\sim} \operatorname{SNCP}(\boldsymbol{X}_i^{\top} \boldsymbol{\beta},\,\sigma^2,\gamma_{1}), \text{ for $i = 1,\dots, n$,}\\
			\sigma &\sim \text{InverseGamma}(1,1),\\
			\gamma_{1} &\sim \text{Uniform}(-1,1),\\
			p(\boldsymbol{\beta}) &\propto 1.
		\end{aligned}
		\label{eq:model.SNCP}
\end{equation}}
In line with the literature on parametric Bayesian quantile regression \citep{yu2001bayesian,yu2005three}, we also implemented Bayesian median regression using the asymmetric Laplace distribution (ALD), with quantile parameter $p=0.5$. That is, our Bayesian median regression model was
\begin{equation}
	\begin{aligned}
		Y_i \mid \boldsymbol{\beta},\sigma,\boldsymbol{X}_i & \stackrel{\text{ind}}{\sim} \operatorname{ALD}(\boldsymbol{X}_i^{\top} \boldsymbol{\beta},\,\sigma,\,p = 0.5), \text{ for $i = 1,\dots, n$,} \\
		\sigma &\sim \text{InverseGamma}(1,1),\\
		p(\boldsymbol{\beta}) &\propto 1.
	\end{aligned}
	\label{eq:model.ALD}
\end{equation}
We stress that in this simulation study, \textit{none} of the likelihoods used for mean, median, or modal regression was exactly the same as the data generating mechanism. Therefore, all considered regression models are ``wrong,'' creating particularly realistic yet  challenging scenarios under which we could more fairly compare the performance across these competing methods. 

{\revisetwo With two different sample sizes $n = 30$ and $n=300$}, we generated observations from the model,
$$Y_i = \beta_0 + \beta_1 X_i + \epsilon_i,$$
where $\beta_0=\beta_1=1$ and, for $i=1, \ldots, n$, $\epsilon_i \overset{\text{i.i.d}}{\sim} 0.05\mathcal{N}(-50,1^2)+0.95\mathcal{N}(0,1^2)$ and  $X_i\overset{\text{i.i.d}}{\sim}\text{Uniform}(0,1)$. We then fit the mean/median/modal regression models to the simulated data. For modal regression, we fit the FG model \eqref{eq:FG.model}, the DTP-Student-$t$ model \eqref{eq:DTP.model}, and the TPSC-Student-$t$ model \eqref{eq:TPSC.model}. Among the modal regression models, we found that the  TPSC-Student-$t$ model had the highest ELPD. For the sake of brevity, we present only the results from the models fit with the normal, {\revise SNCP}, ALD, and TPSC-Student-$t$ likelihoods.

In Figure \ref{fig:left_skewed_data}, we provide the empirical coverage rate and the average width of the posterior prediction intervals across $n=30$ observations under each of mean/median/modal regression model. With the narrowest prediction interval for the same amount of coverage, results from the modal regression model clearly stand out in Figure \ref{fig:left_skewed_data}. In addition, the modal regression model with the TPSC-Student-$t$ likelihood had the largest ELPD. Therefore, it was the most appropriate model for the simulated data among the three candidate models in this replicate.

{\revise Figure \ref{fig:left_skewed_data_residual} depicts the true mode-zero model error distribution contrasted with the four estimated mode-zero error distributions based on the regression analyses considered in Figure~\ref{fig:left_skewed_data}, where we used posterior means as point estimates for all parameters. For mean regression analysis, the estimated normal distribution (with $\hat{\sigma} = 12.7$) and the estimated SNCP distribution (with $\hat{\sigma} = 13.3$ and $\hat{\gamma}_{1} = 0.08$) both failed to capture the shape of the true distribution and indicated much greater variability around the mode than there truly was. The estimated error distribution from the median regression model (with $\hat{\sigma} = 2.13$ in the ALD likelihood) was much improved over the former two estimated densities. However, the estimation continued to improve notably when the TPSC-Student-$t$ distribution was assumed in the proposed modal regression (with $\hat{w} = 0.55, \hat{\sigma} = 0.80, \hat{\delta} = 1.08$). In particular, the height of this last estimated density at the mode was the closest to that of the true error distribution. Compared with the three competitors, the tail behavior of the estimated TPSC-Student-$t$ density was also strikingly similar to that of the ground truth.} %The TPSC-Student-$t$ distribution has an estimated scale parameter as small as $0.80$ and a small estimated degree of freedom/shape parameter. Both the scale and shape parameters from TPSC-Student-$t$ contribute to the accurate description of the true residual density distribution. Lastly, the skewness of SNCP is estimated to be 0.08, and the weight parameter of TPSC-student-$t$ is estimated to be 0.55. This matches the fact that $95\%$ of the noises originate from a symmetric distribution, normal distribution.

{\revisetwo We repeated this experiment comparing the four regression models for 300 times, with sample sizes of $n=30$ and $n=300$ in each replication. Table \ref{tab:simu_left_skewed} provides Monte Carlo averages of the coverage rate of the 90\% HPD prediction interval, width of the prediction interval, and ELPD for each regression model. When the sample size is as small as 30, the two mean regression models yielded lower coverage rates yet wider prediction intervals on average than those from the median and modal regression models. Even though the latter two regression models enjoyed similar (higher) coverage rates, the modal regression model tended to give much tighter prediction intervals. 
	
	Note that on average 95\% of the data generated in this simulation study were non-outliers. Coverage rates close to 95\% thus imply that these models have adequate prediction intervals as they can predict the values of non-outliers reasonably well. Within the methods with similar coverage rates, we prefer the modal regression model with the narrowest prediction interval since these predictions have the least amount of uncertainty. Furthermore, the modal regression model with the TPSC-Student-$t$ likelihood had the largest average Expected Log Predictive Density (ELPD), reinforcing that the modal regression model based on the TPSC-Student-$t$ provided the best overall model fit.
	
	After we increased the sample size from $n = 30$ to $n = 300$, all regression methods had around 95\% coverage rate. However, the modal regression model still had the narrowest prediction interval and highest ELPD, on average. This again confirms the supremacy of the modal regression model in this simulation study.}

\begin{figure}
	\centering
	\includegraphics[width=1.0\linewidth]{./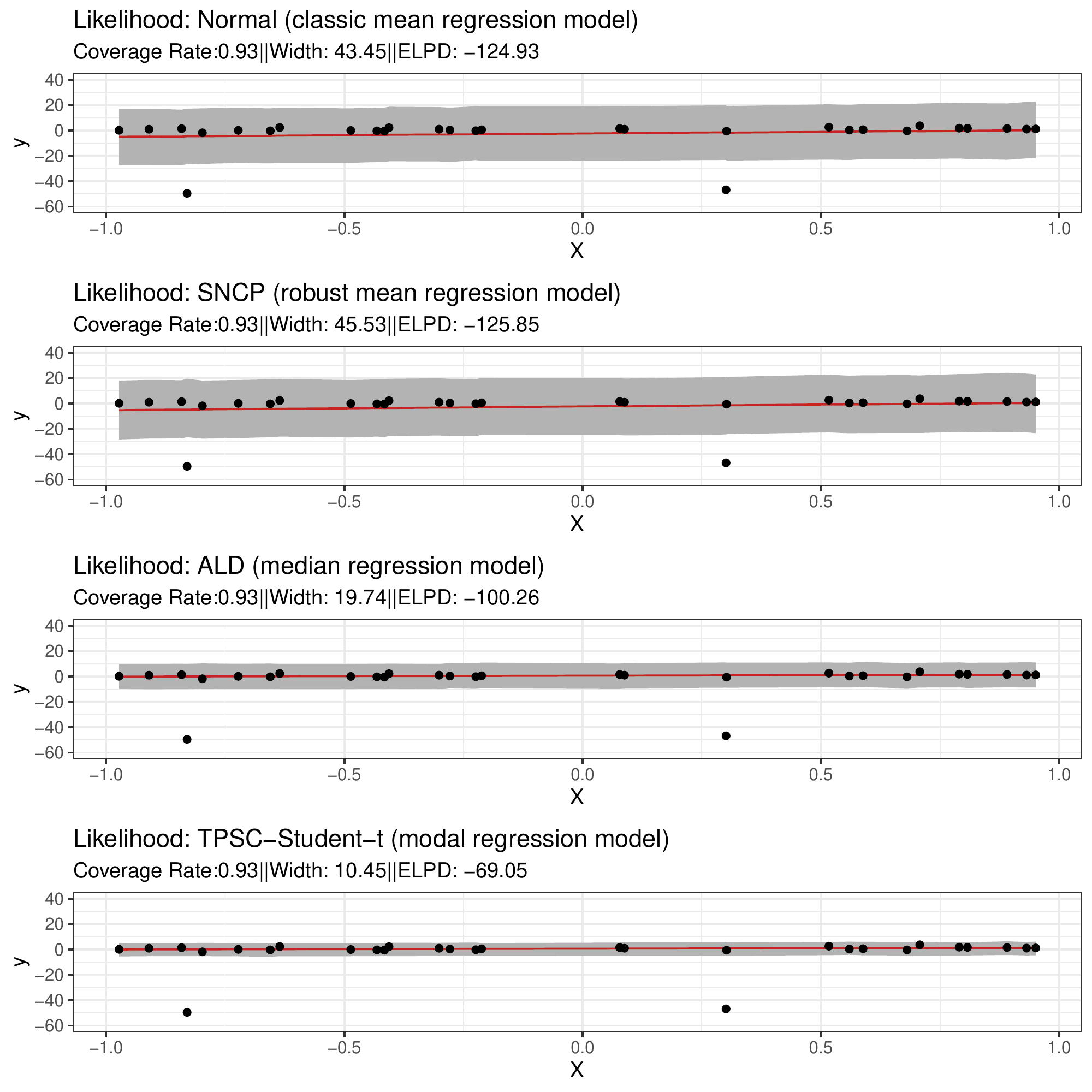}
	\caption{\small {\it The gray shaded areas show the $90\%$ posterior prediction intervals for the simulated left-skewed data. The solid red line is the estimated median from the posterior predictive distribution. The prediction intervals are narrower for Bayesian modal regression.}}
	\label{fig:left_skewed_data}
\end{figure}
\begin{figure}[t]
	\centering
	\includegraphics[width=\linewidth]{./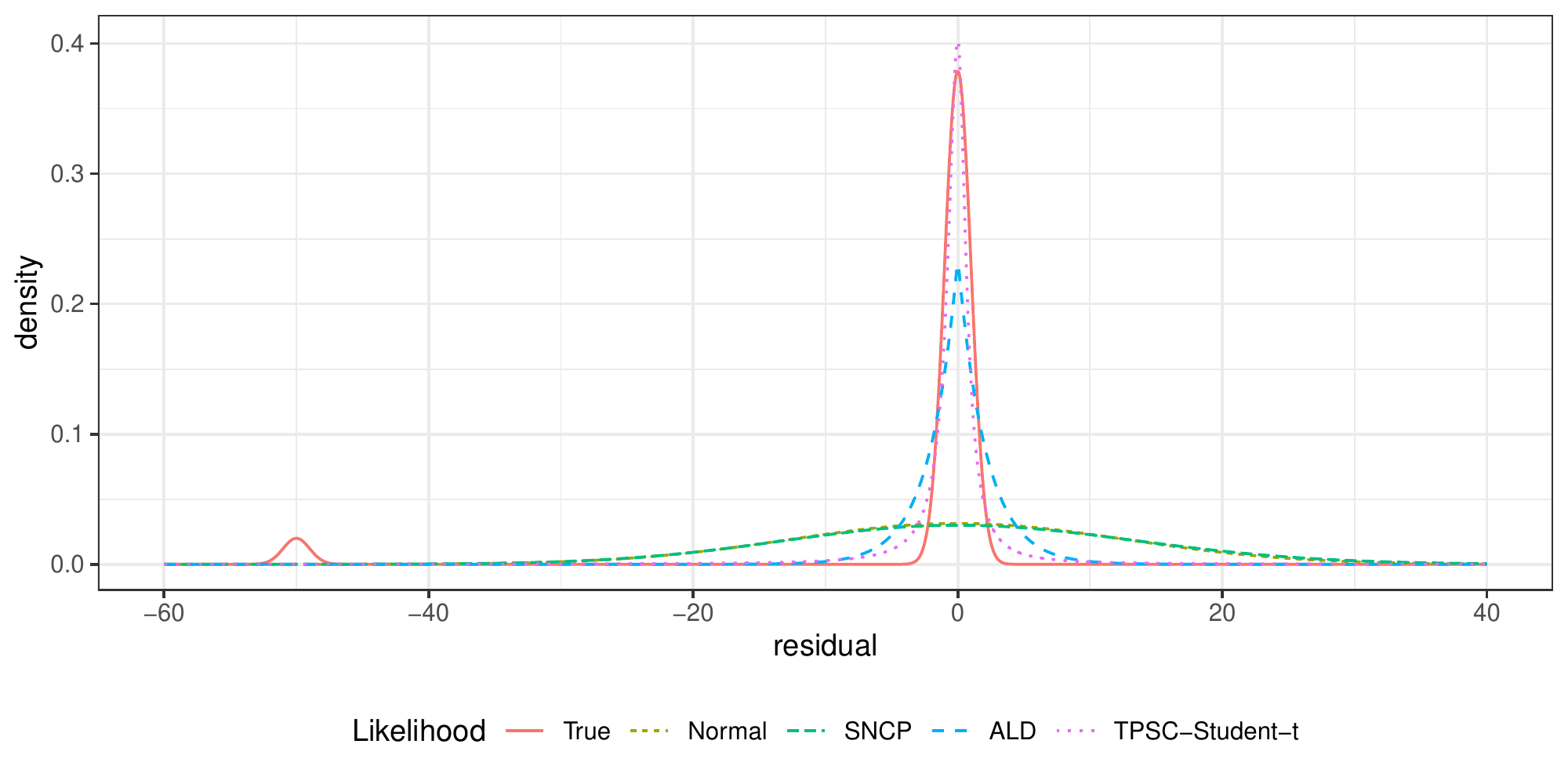}
	\caption{\small {\it The estimated density plots of residuals with fixed mode at $0$ from regression models utilizing normal, SNCP, ALD and TPSC-Student-$t$ likelihoods.}}
	\label{fig:left_skewed_data_residual}
\end{figure}
\begin{table}[t]
	\caption{\label{tab:simu_left_skewed}Comparison of Bayesian mean, median, and modal regression models fitted to left-skewed data. Results were averaged across 300 Monte-Carlo replicates of left-skewed datasets. The empirical standard error associated with each Monte-Carlo average is provided in parenthesis following the average. {\revisetwo The 90\% HPD intervals were used to calculate the coverage rate.} }
	\centering
	\resizebox{\columnwidth}{!}{
		\begin{tabular}[t]{llrrr}
			\toprule
			Sample Size & Likelihood (regression model) &  Coverage Rate (\%) & Width & ELPD\\
			\midrule
			& normal & 93.47(0.20) & 32.26(1.08) & -104.70(2.00)\\
			\cmidrule{2-5}
			& SNCP & 93.56(0.19) & 33.78(1.13) & -105.60(2.02)\\
			\cmidrule{2-5}
			& ALD & 94.11(0.21) & 15.47(0.56) & -85.11(1.40)\\
			\cmidrule{2-5}
			\multirow{-4}{*}{\raggedright\arraybackslash n = 30} & TPSC-Student-$t$ & \textbf{94.69(0.22)} & \textbf{8.31(0.21)} & \textbf{-59.96(0.64)}\\
			\cmidrule{1-5}
			& normal & 95.01(0.07) & 35.67(0.26) & -1142.65(3.05)\\
			\cmidrule{2-5}
			& SNCP & 95.01(0.07) & 35.85(0.27) & -1144.74(3.02)\\
			\cmidrule{2-5}
			& ALD & 95.03(0.07) & 14.94(0.16) & -861.39(3.45)\\
			\cmidrule{2-5}
			\multirow{-4}{*}{\raggedright\arraybackslash n = 300} & TPSC-Student-$t$ & \textbf{94.64(0.06)} & \textbf{6.69(0.06)} & \textbf{-591.81(1.87)}\\
			\bottomrule
		\end{tabular}
	}
\end{table}

\subsection{Right-skewed data}\label{sec:right_skewed_data}

In Section \ref{sec:left_skewed_data}, we demonstrated the advantages of our Bayesian modal regression models {\revise over Bayesian mean and median regression} when the data was left-skewed. {\revise In this section, we compare the performance of our model on right-skewed data against an alternative modal linear regression (MODLR) method proposed by \citet{yao2014new}. Whereas our Bayesian modal regression models are fully parametric, the mode-zero model error distribution under MODLR is left unspecified. However, just like our models, MODLR assumes a linear function of covariates for the conditional mode of $Y$, i.e. $\text{Mode}(Y \mid \boldsymbol{X} ) = \boldsymbol{X}^\top \boldsymbol{\beta}$.}

We generated $n = 30$ observations from the model,
$$
Y_{i} = \beta_{0} + \beta_{1} X_{1,i} + \beta_{2} X_{2,i} + \epsilon_{i},
$$
where $\beta_{0} = \beta_{1} = \beta_{2} = 1$, $X_{1,i}$ and $X_{2,i}$ come from the standard normal distribution independently, and $\epsilon_i$ follows SNDP$(-0.3754,1,5)$, which is an SNDP distribution with the location parameter as $-0.3754$, the scale parameter as $1$, and the skewness parameter as $5$ \citep{azzalini2013skew}. This distribution is right-skewed and with mode at $0$. We fit a Bayesian modal regression that assumes the TPSC-Student-$t$ distribution of $Y$ given $X_1$ and $X_2$ to the simulated data. 

{\revise We also implemented MODLR to infer covariate effects. MODLR is based on kernel density estimation and requires bandwidth selection. To ensure a (more than) fair comparison, we selected the bandwidth that minimizes the average sum-of-square bias, $\sum_{j=0}^{2} (\hat{\beta}_{j} - \beta_{j})^{2}$, across 300 repeated experiments. This bandwidth selection procedure is biased towards MODLR as it uses the true values of regression coefficients that are unknown in practice. After the optimal bandwidth was chosen, we used MODLR to estimate regression coefficients and constructed confidence intervals using the nonparametric bootstrap with 500 bootstrap samples. We repeated this process for 300 total experiments and compared the results to our Bayesian modal regression model. As in Section~\ref{sec:murder}, drawing inference for regression coefficients based on our parametric modal regression model was more straightforward, with the posterior means serving as the point estimates and the corresponding credible intervals as interval estimates.}

{\revise Table \ref{tab:simu_right_skewed} presents summary statistics for inferences of the regression coefficents under the two considered methods averaged across 300 Monte Carlo replicates. Both Bayesian modal regression based on TPSC-Student-$t$ and MODLR produced point estimates for the covariate effects $\beta_1$ and $\beta_2$ close to their true values, with the TPSC-Student-$t$ being more precise than MODLR. Both methods appeared to overestimate the intercept, possibly more so when MODLR was used. Table \ref{tab:simu_right_skewed} shows that the interval estimates from the proposed Bayesian modal regression model were clearly more reliable than those from MODLR, especially those for $\beta_1$ and $\beta_2$ where our Bayesian model yielded tighter interval estimates with similar empirical coverage rates. In this example, Bayesian modal regression based on TPSC-Student-$t$ exhibited superior performance over MODLR in terms of inference for regression coefficients. This is despite the fact that the likelihood in our parametric Bayesian model is misspecified and despite the fact that MODLR is ostensibly more flexible by allowing the residual error distribution to be unspecified.}

\begin{table}[t]
	\caption{\label{tab:simu_right_skewed} \small {\it Comparison of the Bayesian modal regression based on the TPSC-Student-$t$ and MODLR. Results were averaged across 300 Monte-Carlo replicates of right-skewed datasets. The empirical standard error associated with each Monte-Carlo average is provided in parenthesis following the average. {\revisetwo CI denotes 90\% credible interval}.}}
	\centering
	\resizebox{\columnwidth}{!}{
		\begin{tabular}[t]{lcccc}
			\toprule
			Regression Model & Parameter & Point Estimation & Coverage Rate ($\%$) & Width of CI\\
			\midrule
			& $\beta_{0}$ & 1.18 (0.17) & 91.67 & 0.76 (0.18)\\
			\cmidrule{2-5}
			& $\beta_{1}$ & 1.00 (0.12) & 91.33 & 0.41 (0.09)\\
			\cmidrule{2-5}
			\multirow{-3}{*}{\raggedright\arraybackslash TPSC-Student-$t$} & $\beta_{2}$ & 1.00 (0.12) & 92.00 & 0.41 (0.09)\\
			\cmidrule{1-5}
			& $\beta_{0}$ & 1.25 (0.15) & 42.00 & 0.54 (0.18)\\
			\cmidrule{2-5}
			& $\beta_{1}$ & 1.00 (0.16) & 89.67 & 0.55 (0.22)\\
			\cmidrule{2-5}
			\multirow{-3}{*}{\raggedright\arraybackslash MODLR} & $\beta_{2}$ & 1.01 (0.16) & 91.33 & 0.56 (0.23)\\
			\bottomrule
		\end{tabular}
	}
\end{table}

\section{More data applications of Bayesian modal regression}\label{sec:data_application}

\subsection{\revise Boston housing prices} \label{sec:boston_house}
{\revise To demonstrate the differences in interpretation of mean/median/modal regression, we analyzed a dataset containing $n=506$ house prices in the area of Boston, Massachusetts, in the year 1970. {\revisetwo This dataset is
		available in the \textsc{R} package \texttt{MASS} \citep{venables2013modern} under the name \texttt{Boston}}. We are interested in the intricate relationship between house prices and thirteen covariates recorded in the dataset. The models are formulated as follows:
	$$
	\mathbb{M} \left(Y_i \mid \boldsymbol{\beta }\right) = \boldsymbol{X}_i^\top \boldsymbol{\beta}, \text{ for }i=1, \ldots, 506,
	$$
	where the response $Y_i$ is the median value of owner-occupied homes in thousands dollars in the $i$th area, and $\boldsymbol{X}_i$ contains a 1 (for the intercept $\beta_0$) and values for the following 13 covariates associated with the same $i$th area: per capita crime rate by town ($X_{1,i}$), proportion of residential land zoned for lots over 25,000 square feet ($X_{2,i}$), proportion of non-retail business acres per town ($X_{3,i}$), Charles River dummy variable ($X_{4,i}$ = 1 if tract bounds river; 0 otherwise), Nitrogen oxides concentration in parts per 10 million ($X_{5,i}$), average number of rooms per dwelling ($X_{6,i}$), proportion of owner-occupied units built prior to 1940 ($X_{7,i}$), weighted mean of distances to five Boston employment centres ($X_{8,i}$), an index of accessibility to radial highways ($X_{9,i}$), full-value property-tax rate per $\$10{,}000$ ($X_{10,i}$), pupil-teacher ratio by town ($X_{11,i}$), the modified proportion of blacks by town ($X_{12,i}$), and percentage of the population that was lower status ($X_{13,i}$).
	
	Parallel to the study design in Section~\ref{sec:left_skewed_data}, we carried out four different regression analyses of this dataset, including the usual mean regression with normal likelihood, the more robust mean regression assuming a skewed normal model error (SNCP), median regression based on the ALD likelihood, and our proposed modal regression assuming a TPSC-Student-$t$ distribution for the model error distribution. Table \ref{tab:boston_house_point_estimation} reports estimates of the fourteen regression coefficients resulting from these four different analyses. 
	
	Because the interpretation of regression coefficients depends on whether the regression function is the conditional mean,  median, or mode of the response, we focus on comparing conclusions from different regression models with respect to statistical significance of a covariate effect on the house price. According to Table~\ref{tab:boston_house_point_estimation}, all four regression analyses reached the same conclusions for all covariates in this regard except for one covariate, the proportion of owner-occupied units built prior to 1940 (see results for $\beta_{7}$ in Table~\ref{tab:boston_house_point_estimation}). In particular, the $90\%$ credible intervals from the two mean regression models both contained zero, suggesting that, on average, the house price in Boston is not significantly influenced by the proportion of owner-occupied units built prior to 1940. Yet the counterpart results from the median regression model and the modal regression indicate that, had one looked into the association of the considered covariates with the median or the mode of house prices, one would conclude that the proportion of owner-occupied units built prior to 1940 has a significant, negative effect on house price.} 
\begin{table}[t]
	\centering
	\caption{\label{tab:boston_house_point_estimation}Parameter estimates obtained from the mean (normal and SNCP)/median (ALD)/modal regression models (TPSC-Student-$t$) fitted to the Boston house price data. The mean, $5\%$ quantile, and $95\%$ quantile of the posterior distribution of each regression coefficient are listed under Mean, q5, and q95, respectively.}
	\resizebox{\columnwidth}{!}{
		\begin{tabular}[t]{lrrrrrrrrrrrr}
			\toprule
			\multicolumn{1}{c}{ } & \multicolumn{3}{c}{Normal} & \multicolumn{3}{c}{SNCP} & \multicolumn{3}{c}{ALD} & \multicolumn{3}{c}{TPSC-Student-$t$} \\
			\cmidrule(l{3pt}r{3pt}){2-4} \cmidrule(l{3pt}r{3pt}){5-7} \cmidrule(l{3pt}r{3pt}){8-10} \cmidrule(l{3pt}r{3pt}){11-13}
			\multicolumn{1}{c}{Parameter} & \multicolumn{1}{c}{Mean} & \multicolumn{1}{c}{q5} & \multicolumn{1}{c}{q95} & \multicolumn{1}{c}{Mean} & \multicolumn{1}{c}{q5} & \multicolumn{1}{c}{q95} & \multicolumn{1}{c}{Mean} & \multicolumn{1}{c}{q5} & \multicolumn{1}{c}{q95} & \multicolumn{1}{c}{Mean} & \multicolumn{1}{c}{q5} & \multicolumn{1}{c}{q95} \\
			\cmidrule(l{3pt}r{3pt}){1-1} \cmidrule(l{3pt}r{3pt}){2-2} \cmidrule(l{3pt}r{3pt}){3-3} \cmidrule(l{3pt}r{3pt}){4-4} \cmidrule(l{3pt}r{3pt}){5-5} \cmidrule(l{3pt}r{3pt}){6-6} \cmidrule(l{3pt}r{3pt}){7-7} \cmidrule(l{3pt}r{3pt}){8-8} \cmidrule(l{3pt}r{3pt}){9-9} \cmidrule(l{3pt}r{3pt}){10-10} \cmidrule(l{3pt}r{3pt}){11-11} \cmidrule(l{3pt}r{3pt}){12-12} \cmidrule(l{3pt}r{3pt}){13-13}
			$\beta_{0}$ & 36.46 & 28.12 & 44.81 & 35.73 & 28.60 & 42.90 & 14.91 & 7.32 & 22.48 & 13.02 & 6.48 & 19.66\\
			$\beta_{1}$ & -0.11 & -0.16 & -0.05 & -0.12 & -0.17 & -0.07 & -0.12 & -0.16 & -0.06 & -0.13 & -0.17 & -0.09\\
			$\beta_{2}$ & 0.05 & 0.02 & 0.07 & 0.03 & 0.01 & 0.05 & 0.04 & 0.02 & 0.05 & 0.02 & 0.01 & 0.04\\
			$\beta_{3}$ & 0.02 & -0.08 & 0.12 & 0.03 & -0.05 & 0.11 & 0.01 & -0.05 & 0.07 & 0.01 & -0.05 & 0.06\\
			$\beta_{4}$ & 2.69 & 1.27 & 4.11 & 1.38 & 0.17 & 2.54 & 1.49 & 0.51 & 2.53 & 1.44 & 0.50 & 2.39\\
			\addlinespace
			$\beta_{5}$ & -17.80 & -24.06 & -11.53 & -14.02 & -19.48 & -8.69 & -8.99 & -13.64 & -4.36 & -6.71 & -10.73 & -2.79\\
			$\beta_{6}$ & 3.81 & 3.12 & 4.49 & 3.42 & 2.77 & 4.08 & 5.26 & 4.50 & 6.01 & 4.87 & 4.11 & 5.63\\
			$\beta_{7}$ & 0.00 & -0.02 & 0.02 & -0.02 & -0.03 & 0.00 & -0.03 & -0.04 & -0.01 & -0.04 & -0.05 & -0.02\\
			$\beta_{8}$ & -1.48 & -1.81 & -1.15 & -1.12 & -1.42 & -0.83 & -0.99 & -1.23 & -0.75 & -0.89 & -1.11 & -0.66\\
			$\beta_{9}$ & 0.31 & 0.20 & 0.42 & 0.24 & 0.14 & 0.34 & 0.18 & 0.09 & 0.26 & 0.14 & 0.07 & 0.21\\
			\addlinespace
			$\beta_{10}$ & -0.01 & -0.02 & -0.01 & -0.01 & -0.02 & -0.01 & -0.01 & -0.01 & -0.01 & -0.01 & -0.01 & -0.01\\
			$\beta_{11}$ & -0.95 & -1.17 & -0.74 & -0.82 & -1.00 & -0.64 & -0.75 & -0.90 & -0.59 & -0.61 & -0.73 & -0.48\\
			$\beta_{12}$ & 0.01 & 0.00 & 0.01 & 0.01 & 0.00 & 0.01 & 0.01 & 0.01 & 0.02 & 0.01 & 0.01 & 0.01\\
			$\beta_{13}$ & -0.52 & -0.61 & -0.44 & -0.43 & -0.51 & -0.35 & -0.31 & -0.39 & -0.23 & -0.28 & -0.35 & -0.21\\
			\bottomrule
		\end{tabular}
	}
\end{table}

{\revise To provide further context and insights, we present the posterior prediction coverage rate of the Boston house prices, the width of prediction interval, and ELPD in Table \ref{tab:boston_house_prediction}. Using our proposed modal regression, the prediction intervals attained similar empirical coverage rates, while also giving the tightest prediction intervals on average among the four models considered. Our modal regression model also provided the overall best fit for this dataset according to ELPD.} %with the \emph{narrowest} prediction interval and the highest ELPD. Therefore, the TPSC-Student-$t$ likelihood appears to fit the data most effectively. Last, we recognize that, except for the normal likelihood, which leads to the widest prediction interval, lowest ELPD, and an inflated coverage rate, both SNCP and ALD likelihoods, as well as the TPSC-Student-$t$ likelihood, all exhibit satisfactory prediction coverage rates. Researchers can choose different regression models based on their research goals, as one may be interested in predicting the mean, median, or mode of house prices, respectively.
\begin{table}[ht]
	\caption{\label{tab:boston_house_prediction}Comparison of Bayesian mean/median/modal regression models fitted to the Boston house price data.}
	\centering
	\resizebox{\columnwidth}{!}{
		\begin{tabular}[t]{lrrr}
			\toprule
			Likelihood (regression model)  & Coverage Rate (\%) & Width & ELPD\\
			\midrule
			Normal (mean regression) & 93.28 & 15.82 & -1518.66\\
			SNCP (mean regression) & 91.30 & 14.38 & -1464.62\\
			ALD (median regression) & 90.51 & 14.55 & -1444.57\\
			TPSC-Student-$t$ (modal regression) & 89.33 & 13.12 & -1408.28\\
			\bottomrule
		\end{tabular}
	}
\end{table}

\subsection{Detecting a quadratic relationship in serum data}\label{sec:serum}
\citet{isaacs1983serum} analyzed the relationship between serum concentration (grams per litre) of immunoglobulin-G (IgG) in 298 children aged from 6 months to 6 years. IgG is an antibody that plays an important role in humoral and protective immunity \citep{van2016emerging}. There are ethical difficulties in taking repeated blood samples from healthy subjects. Therefore, researchers often use age as a proxy for determining the reference ranges for IgG in childhood. Previously, \citet{yu2001bayesian} analyzed serum data and modeled IgG concentration with a quadratic model in age. In the spirit of \citet{yu2001bayesian}, we fit the following mean/median/modal regression models to this dataset:
$$
\mathbb{M}(Y\mid \boldsymbol{\beta})=\beta_0+\beta_1 \times \operatorname{Age}+\beta_2 \times \operatorname{Age}^2.
$$
Table \ref{tab:tab3} shows the parameter estimates from the models that we fit to this data. Based on the CIs for $\beta_2$, we see that only the modal regression model was able to detect the quadratic term (CI of $(-0.18, -0.03)$ for modal regression). This finding is somewhat consistent with \citet{royston1994regression} who concluded that a simple linear regression model was inadequate for this same dataset. \citet{isaacs1983serum} also suggested that there was a quadratic relationship between the square root of IgG concentration and children's age. 
\begin{table}[t]
	\caption{\label{tab:tab3} Parameter estimates from the mean/median/modal regression models fitted to the serum data. The mean, $5\%$ quantile, and $95\%$ quantile of the posterior distribution of each regression coefficient are listed under Mean, q5, and q95, respectively.}
	\centering
	\resizebox{\columnwidth}{!}{
		\begin{tabular}{lrlrrr}
			\toprule
			Likelihood (regression model) & ELPD & Parameter & Mean & q5 & q95\\
			\midrule
			&  & $\beta_0\left(\operatorname{intercept}\right)$ & 3.09 & 2.46 & 3.73\\
			\cmidrule{3-6}
			&  & $\beta_1\left(\operatorname{Age}\right)$ & 0.96 & 0.44 & 1.47\\
			\cmidrule{3-6}
			\multirow{-3}{*}{\raggedright\arraybackslash Normal (mean regression)} & \multirow{-3}{*}{\raggedleft\arraybackslash -627.12} & $\beta_2\left(\operatorname{Age}^2\right)$ & -0.05 & -0.13 & 0.04\\
			\cmidrule{1-6}
			&  & $\beta_0\left(\operatorname{intercept}\right)$ & 2.81 & 2.12 & 3.55\\
			\cmidrule{3-6}
			&  & $\beta_1\left(\operatorname{Age}\right)$ & 1.12 & 0.54 & 1.69\\
			\cmidrule{3-6}
			\multirow{-3}{*}{\raggedright\arraybackslash ALD (median regression)} & \multirow{-3}{*}{\raggedleft\arraybackslash -638.12} & $\beta_2\left(\operatorname{Age}^2\right)$ & -0.07 & -0.16 & 0.03\\
			\cmidrule{1-6}
			&  & $\beta_0\left(\operatorname{intercept}\right)$ & 2.37 & 1.85 & 2.89\\
			\cmidrule{3-6}
			&  & $\beta_1\left(\operatorname{Age}\right)$ & 1.15 & 0.72 & 1.59\\
			\cmidrule{3-6}
			\multirow{-3}{*}{\raggedright\arraybackslash FG (modal regression)} & \multirow{-3}{*}{\raggedleft\arraybackslash -623.18} & $\beta_2\left(\operatorname{Age}^2\right)$ & -0.11 & -0.18 & -0.03\\
			\bottomrule
		\end{tabular}
	}
\end{table}

Table \ref{tab:tab3} shows that the ELPD of the modal regression model based on the FG likelihood \eqref{eq:FG.model} was larger than the ELPD for both the mean or median regression models. In this example, the modal regression model not only provides a different viewpoint (i.e. that there exists a significant quadratic relationship between IgG and age), but it also fits the dataset better according to our model selection criterion.

\section{Discussion} \label{sec:Discussion}

In this paper, we have introduced a unifying Bayesian modal regression framework. Namely, we proposed a simple and flexible unimodal distribution family called the GUD family that is suitable for Bayesian modal regression. Members of this family can be either symmetric or asymmetric, either thin-tailed or fat-tailed, depending on values of the shape and scale parameters. Our framework adds to the existing literature on likelihood-based robust regression \citep{ronchetti2009robust, box1968bayesian,lange1989robust,da2020bayesian,gagnon2020new}. In particular, the GUD family exhibits robustness to outliers, skewness, and model misspecification, as demonstrated in Section \ref{sec:Simulation}. In contrast to other parametric families designed for robust regression \citep{azzalini2013skew}, however, a notable feature of the GUD family is that all members of this family have a location parameter that is \emph{also} the conditional mode. This makes the GUD family suitable for inference and prediction of the conditional mode specifically. 

Compared to mean and quantile regression, work on Bayesian modal regression analysis is quite scarce. Our paper aims to promote Bayesian modal regression as a complement to these other analyses. We demonstrated that our modeling framework based on the GUD family is very versatile and has wide applications in many fields such as economics (the bank deposit data in Section \ref{sec:bank} and the Boston house prices data in Section \ref{sec:boston_house}), criminology (the murder rate data in Section \ref{sec:murder}), and molecular biology (the serum data in Section \ref{sec:serum}). In particular, we showed that Bayesian modal regression can reveal structures and detect potentially significant covariate effects that are missed by other Bayesian regression models.

To conduct Bayesian inference of the conditional mode, we provided prior elicitation procedures, along with the sufficient conditions under which a flat prior $p(\boldsymbol{\beta})$ on the regression coefficients $\boldsymbol{\beta}$ can be used. We proposed a method for constructing posterior prediction intervals and a model selection criterion based on the posterior predictive distribution. We demonstrated that our modal regression models provide very tight prediction intervals with high coverage, are robust to outliers, and have excellent interpretability. 

We stress that it is important not to fit only one type of regression model. In practice, researchers should fit several models to the data and utilize regression diagnostics to evaluate model assumptions and determine whether there are any influential observations. Our modal regression model framework is an especially appealing choice when the data is skewed and(or) contains (extreme) outliers. Moreover, model selection criteria such as ELPD can be used to select a suitable (mean, quantile, or modal) regression model for final analyses. Other posterior predictive checks, e.g. those described in Chapter 6 of \citet{gelman2013bayesian}, can also be used to assess the appropriateness of using a GUD likelihood for Bayesian inference.

The modal regression models in this paper contain parametric assumptions, both about the data likelihood and the linear relationship between the covariates and the conditional mode. Instead of using the fully parametric models presented in this manuscript, one may prefer to use  Bayesian semiparametric modal regression models instead. A Bayesian semiparametric modal regression model can be constructed either by modeling the conditional mode with a Gaussian process (i.e. we can relax the linearity assumption) and/or by replacing the GUD likelihood with a carefully constructed infinite mixture model that is indexed by the mode (i.e. we can relax the assumption of a known residual error distribution). These exciting extensions to Bayesian modal regression are the topics of ongoing work.  

Another interesting future direction to explore is Bayesian modal regression in high dimensions. When the number of covariates $p$ is large or even exceeds the sample size $n$, some form of regularization is typically desired. In this case, we can replace the flat prior on $\boldsymbol{\beta}$ with a spike-and-slab prior \citep{MitchellBeauchamp1988, GeorgeMcCulloch1993, RockovaGeorge2018} or a global-local shrinkage prior \citep{Bhadra2019, GriffinBrown2021}. These priors shrink most of the regression coefficients in $\boldsymbol{\beta}$ towards zero and allow for variable selection. We anticipate that these types of priors would work well in high-dimensional Bayesian modal regression with the GUD likelihood, especially if $p>n$. 

\section*{Declaration of generative AI in scientific writing}

During the preparation of this work the authors used ChatGPT in order to check grammar. After using this tool/service, the authors reviewed and edited the content as needed and take(s) full responsibility for the content of the publication.

\section*{Acknowledgments}

We are grateful to the Associate Editor and two anonymous reviewers for their thoughtful comments and suggestions which helped to greatly improve our article. The last listed author was partially support by National Science Foundation grant DMS-2015528.

\appendix
\section{Proofs of main results} \label{sec:proofs}

\subsection{Preliminary lemmas}

Before proving the main theorems and propositions, we first prove the following two lemmas.

\begin{lemma}\label{thm:student.t}
	Let $p\left(x\right)$ be the pdf of an inverse gamma distribution with shape and scale parameters $a \in (0, \infty)$ and $b \in (0, \infty)$ respectively. Then for $p \leq n$,
	$$
	\int_{0}^{\infty} \left\{\frac{\Gamma \left(0.5x+0.5\right)}{\Gamma \left(0.5x\right)}\right\}^{n-p} x^{0.5p-0.5n} p\left(x\right) dx < \infty.
	$$
\end{lemma}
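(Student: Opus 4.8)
The plan is to bound the integrand pointwise by a constant multiple of $p(x)$ itself, so that convergence follows at once from $\int_0^\infty p(x)\,dx = 1$. Write $p(x) = \{b^a/\Gamma(a)\}\,x^{-a-1}e^{-b/x}$ for $x>0$, the inverse gamma density. Since $p\le n$, the exponent $n-p$ is nonnegative, so everything reduces to controlling the gamma ratio $\Gamma(0.5x+0.5)/\Gamma(0.5x)$. The central claim I would establish first is the inequality
$$
\frac{\Gamma(0.5x+0.5)}{\Gamma(0.5x)} \;\le\; (0.5x)^{1/2} \qquad\text{for all } x > 0 .
$$

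To prove this I would invoke the log-convexity of $\Gamma$ on $(0,\infty)$, which follows from the Bohr--Mollerup theorem (or directly from H\"{o}lder's inequality applied to the Euler integral for $\Gamma$). Writing $z = 0.5x$ and applying log-convexity at the midpoint of $z$ and $z+1$ gives $\Gamma(z+\tfrac12) \le \{\Gamma(z)\,\Gamma(z+1)\}^{1/2}$; combining this with the recursion $\Gamma(z+1) = z\,\Gamma(z)$ yields $\Gamma(z+\tfrac12) \le z^{1/2}\,\Gamma(z)$, which is the displayed bound. (This is Wendel's inequality; one could cite it directly instead.)

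With the bound in hand, I would raise both sides to the power $n-p \ge 0$ and multiply by $x^{0.5p-0.5n} = x^{-0.5(n-p)}$, at which point the powers of $x$ cancel exactly:
$$
\left\{\frac{\Gamma(0.5x+0.5)}{\Gamma(0.5x)}\right\}^{n-p} x^{0.5p-0.5n} \;\le\; (0.5x)^{0.5(n-p)}\, x^{-0.5(n-p)} \;=\; 2^{-0.5(n-p)} .
$$
Multiplying through by $p(x)$ and integrating then gives
$$
\int_0^\infty \left\{\frac{\Gamma(0.5x+0.5)}{\Gamma(0.5x)}\right\}^{n-p} x^{0.5p-0.5n}\, p(x)\, dx \;\le\; 2^{-0.5(n-p)} \int_0^\infty p(x)\, dx \;=\; 2^{-0.5(n-p)} \;<\; \infty ,
$$
which is the assertion.

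The only genuine obstacle is the gamma-ratio inequality; the rest is bookkeeping. If one preferred to avoid log-convexity, an alternative is to split the integral at $x=1$: on $(0,1]$ the ratio $\Gamma(0.5x+0.5)/\Gamma(0.5x)$ tends to $0$ as $x\to 0^+$ and $x^{c}e^{-b/x}$ is bounded on $(0,1]$ for every real $c$, so the integrand is bounded on a bounded interval; on $[1,\infty)$, Stirling's formula gives $\Gamma(0.5x+0.5)/\Gamma(0.5x)\sim\sqrt{x/2}$ as $x\to\infty$, so the integrand is $O(p(x))$ while $\int_1^\infty p(x)\,dx\le 1$. The log-convexity route is shorter, and is the one I would present.
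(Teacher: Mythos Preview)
Your proof is correct and takes a cleaner route than the paper's. The paper splits the integral at $x=1$: on $(0,1)$ it uses the monotonicity of $\Gamma$ to bound the gamma ratio by $1$, then recognizes $x^{0.5p-0.5n}p(x)$ as (proportional to) another inverse-gamma kernel; on $(1,\infty)$ it invokes Gautschi's inequality to get $\Gamma(0.5x+0.5)/\Gamma(0.5x) < 0.5^{0.5}(x+1)^{0.5}$, and then crudely bounds $(1+1/x)^{0.5(n-p)}$ by $2^{0.5(n-p)}$. Your single Wendel-type bound $\Gamma(z+\tfrac12)\le z^{1/2}\Gamma(z)$, obtained directly from log-convexity, makes the powers of $x$ cancel exactly and collapses the whole argument to one line---no splitting, no separate treatment of the inverse-gamma tail. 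Amusingly, the alternative you sketch at the end (split at $x=1$, handle each piece separately) is essentially the paper's argument. What the paper's approach buys is that it relies only on named classical inequalities (Gautschi) and elementary monotonicity, avoiding any appeal to log-convexity; what yours buys is brevity and a sharper explicit constant $2^{-0.5(n-p)}$.
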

\begin{proof}
	First, let us split the integral into two parts,
	\begin{equation}
		\label{LemmaA1proof:pt1}
		\begin{aligned}
			&\ \int_{0}^{\infty} \left\{\frac{\Gamma \left(0.5x+0.5\right)}{\Gamma \left(0.5x\right)}\right\}^{n-p} x^{0.5p-0.5n} p\left(x\right) dx  \\
			= &\ \int_{0}^{1} \left\{\frac{\Gamma \left(0.5x+0.5\right)}{\Gamma \left(0.5x\right)}\right\}^{n-p} x^{0.5p-0.5n} p\left(x\right) dx \\
			&\ + \int_{1}^{\infty} \left\{\frac{\Gamma \left(0.5x+0.5\right)}{\Gamma \left(0.5x\right)}\right\}^{n-p} x^{0.5p-0.5n} p\left(x\right) dx \\
			:= &\ I_1 + I_2.
		\end{aligned}
	\end{equation}
	We next consider the integrals $I_1$ and $I_2$ separately. 
	
	Because $\Gamma(x)$ is strictly decreasing for $x \in (0,1)$, we have 
	$$
	\Gamma\left(0.5x+0.5\right) < \Gamma(0.5x), \quad \forall x \in (0,1).
	$$
	Therefore,
	\begin{equation}
		\label{LemmaA1proof:pt2}
		\begin{aligned}
			I_1 = & \int_{0}^{1} \left\{\frac{\Gamma \left(0.5x+0.5\right)}{\Gamma \left(0.5x\right)}\right\}^{n-p} x^{0.5p-0.5n} p\left(x\right) dx\\
			< &\ \int_{0}^{1} x^{0.5p-0.5n} p\left(x\right) dx \\
			\propto &\ \int_{0}^{1} x^{-(a+0.5n-0.5p)-1} \exp\left(-b/x\right) dx \\
			< &\ \infty,
		\end{aligned}
	\end{equation}
	where the last line of the display is because $x^{-(a+0.5n-0.5p)-1} \exp\left(-b/x\right)$ is the kernel of an inverse gamma distribution with $a+0.5n-0.5p$ and $b$ as the shape and scale parameter respectively.
	
	Finally, we consider $I_2$. By Gautschi's inequality, 
	$$
	\frac{\Gamma(x+1)}{\Gamma(x+s)}<(x+1)^{1-s}, \quad \forall x >0, \, 0 < s < 1, 
	$$
	thus
	$$
	\frac{\Gamma(0.5x+0.5)}{\Gamma(0.5x)}<0.5^{0.5}(x+1)^{0.5}, \quad \forall x > 1.
	$$
	Hence,
	\begin{equation}
		\label{LemmaA1proof:pt3}
		\begin{aligned}
			I_2 = &\ \int_{1}^{\infty} \left\{\frac{\Gamma \left(0.5x+0.5\right)}{\Gamma \left(0.5x\right)}\right\}^{n-p} x^{0.5p-0.5n} p\left(x\right) dx \\
			< &\ \int_{1}^{\infty} 0.5^{0.5n-0.5p}\left(x+1\right)^{0.5n-0.5p} x^{0.5p-0.5n} p\left(x\right) dx \\
			\propto &\ \int_{1}^{\infty} \left(1+1/x\right)^{0.5n-0.5p} p\left(x\right) dx \\
			< &\ \int_{1}^{\infty} 2^{0.5n-0.5p} p\left(x\right) dx \\
			< &\ 2^{0.5n-0.5p} \\
			< &\ \infty.
		\end{aligned}
	\end{equation}
	Combining \eqref{LemmaA1proof:pt1}-\eqref{LemmaA1proof:pt3}, one proves the assertion. 
\end{proof}

\begin{lemma}\label{thm:locp}
	Let $\boldsymbol{U}_{p \times p}= [ \boldsymbol{U}_1, \dots, \boldsymbol{U}_p ]^{\top}$ be a nonsingular design matrix with finite entries. If $f_Z(y-\theta)$ is the pdf of a distribution from the location family with $\theta \in \mathbb{R}$ as the location parameter, then
	$$
	\int_{\mathbb{R}^p} \prod_{i=1}^p f_Z\left(y_i-\boldsymbol{U}_i^{\top} \boldsymbol{\beta}\right) d \boldsymbol{\beta}=1 /|\operatorname{det}(\boldsymbol{U})| .
	$$
\end{lemma}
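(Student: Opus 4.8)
The plan is to collapse the $p$-dimensional integral into a product of one-dimensional integrals of $f_Z$ by a single affine change of variables. First I would stack the arguments of the integrand into a vector: writing $\boldsymbol{y}=(y_1,\dots,y_p)^{\top}$, the collection $\{\, y_i-\boldsymbol{U}_i^{\top}\boldsymbol{\beta} \,\}_{i=1}^{p}$ is exactly the vector $\boldsymbol{y}-\boldsymbol{U}\boldsymbol{\beta}$. Hence the substitution $\boldsymbol{z}=\boldsymbol{y}-\boldsymbol{U}\boldsymbol{\beta}$ turns the integrand $\prod_{i=1}^{p} f_Z(y_i-\boldsymbol{U}_i^{\top}\boldsymbol{\beta})$ into $\prod_{i=1}^{p} f_Z(z_i)$, which decouples the coordinates.

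Next I would check that this change of variables is legitimate and track the Jacobian. Since $\boldsymbol{U}$ is nonsingular with finite entries, the map $\boldsymbol{\beta}\mapsto\boldsymbol{z}=\boldsymbol{y}-\boldsymbol{U}\boldsymbol{\beta}$ is an affine bijection of $\mathbb{R}^{p}$ onto $\mathbb{R}^{p}$, with inverse $\boldsymbol{\beta}=\boldsymbol{U}^{-1}(\boldsymbol{y}-\boldsymbol{z})$ and constant Jacobian matrix $\partial\boldsymbol{z}/\partial\boldsymbol{\beta}=-\boldsymbol{U}$, so $d\boldsymbol{\beta}=|\det(\boldsymbol{U})|^{-1}\,d\boldsymbol{z}$. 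In particular the image of integration is all of $\mathbb{R}^{p}$, not a proper subset. Carrying out the substitution gives
\[
\int_{\mathbb{R}^{p}} \prod_{i=1}^{p} f_Z\!\left(y_i-\boldsymbol{U}_i^{\top}\boldsymbol{\beta}\right) d\boldsymbol{\beta}
= \frac{1}{|\det(\boldsymbol{U})|}\int_{\mathbb{R}^{p}} \prod_{i=1}^{p} f_Z(z_i)\, d\boldsymbol{z}.
\]
Then, because $f_Z$ is a probability density and therefore nonnegative, Tonelli's theorem lets me factor the remaining integral as $\prod_{i=1}^{p}\int_{\mathbb{R}} f_Z(z_i)\,dz_i=1$, since each factor is the integral of a pdf over $\mathbb{R}$. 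Combining the two displays yields $1/|\det(\boldsymbol{U})|$, as claimed.

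There is no substantive obstacle here; the only points warranting care are (i) noting that nonsingularity of $\boldsymbol{U}$ makes the linear map a global bijection of $\mathbb{R}^{p}$ so that the transformed domain is the whole space, and (ii) invoking Tonelli (valid by nonnegativity of $f_Z$) rather than Fubini, since no joint integrability hypothesis beyond $f_Z$ being a density is available. I would also remark that the value is independent of $\boldsymbol{y}$, which is precisely the feature that makes this lemma usable in the posterior-propriety arguments of Theorems \ref{thm:suff1} and \ref{thm:suff2}.
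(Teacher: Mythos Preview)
Your proof is correct and follows essentially the same approach as the paper: a single affine change of variables to decouple the product into $p$ one-dimensional integrals of the density $f_Z$, each equal to $1$, leaving only the Jacobian factor $1/|\det(\boldsymbol{U})|$. The paper substitutes $\boldsymbol{\theta}=\boldsymbol{U}\boldsymbol{\beta}$ rather than your $\boldsymbol{z}=\boldsymbol{y}-\boldsymbol{U}\boldsymbol{\beta}$, but these differ only by a translation, so the Jacobian and the resulting computation are identical; your version is slightly more explicit in invoking Tonelli to justify the factorization and in noting that the transformed domain is all of $\mathbb{R}^p$.
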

\begin{proof}
	For $\boldsymbol{\theta}=\left[\theta_1, \ldots, \theta_p\right]^{\top}$, let 
	$$\boldsymbol{\theta}=\boldsymbol{U} \boldsymbol{\beta}.$$ 
	Since $\boldsymbol{U}$ is a nonsingular design matrix with finite entries,
	$$
	\boldsymbol{U}^{-1} \boldsymbol{\theta}=\boldsymbol{\beta}.
	$$
	Hence, the corresponding Jacobian matrix of the one-to-one transformation is
	$$
	\frac{\partial \boldsymbol{\beta}}{\partial \boldsymbol{\theta}}=\boldsymbol{U}^{-1} .
	$$
	Using a change of variables with $\theta_i=\boldsymbol{U}_i^{\top} \boldsymbol{\beta}$, we have that
	$$
	\begin{aligned}
		\int_{\mathbb{R}^p} \prod_{i=1}^p f_Z\left(y_i-\boldsymbol{U}_i^{\top} \boldsymbol{\beta}\right) d \boldsymbol{\beta} & =\int_{\mathbb{R}^p} \prod_{i=1}^p f_Z\left(y_i-\theta_i\right)\left|\operatorname{det}\left(\boldsymbol{U}^{-1}\right)\right| d \boldsymbol{\theta} \\
		& =\left|\operatorname{det}\left(\boldsymbol{U}^{-1}\right)\right| \int_{\mathbb{R}^p} \prod_{i=1}^p f_Z\left(y_i-\theta_i\right) d \boldsymbol{\theta} \\
		& =\left|\operatorname{det}\left(\boldsymbol{U}^{-1}\right)\right| \\
		& =1 /|\operatorname{det}(\boldsymbol{U})| .
	\end{aligned}
	$$ 
	This completes the proof.
\end{proof}

\subsection{Proofs of Theorems \ref{thm:suff1} and \ref{thm:suff2} and Propositions \ref{thm:proper.posterior.reg} and \ref{thm:improper} in the main article}

\subsection*{Proof of Theorem \ref{thm:suff1} in the main article}

\begin{proof}
	Recall that $f_Z\left(y \mid w, \boldsymbol{\xi}_1, \boldsymbol{\xi}_2\right)$ has $y=0$ as the global mode such that
	$$
	f_Z\left(0 \mid w, \boldsymbol{\xi}_1, \boldsymbol{\xi}_2\right) \geq f_Z\left(y \mid w, \boldsymbol{\xi}_1, \boldsymbol{\xi}_2\right), \quad \forall y, w, \boldsymbol{\xi}_1, \boldsymbol{\xi}_2.
	$$
	Therefore, 
	$$
	\begin{aligned}
		\prod_{i=1}^n f\left(y_i \mid w, \theta, \boldsymbol{\xi}_1, \boldsymbol{\xi}_2\right) & =\prod_{i=1}^n f_Z\left(y_i-\theta \mid w, \boldsymbol{\xi}_1, \boldsymbol{\xi}_2\right) \\
		& \leq f_Z\left(y_1-\theta \mid w, \boldsymbol{\xi}_1, \boldsymbol{\xi}_2\right) \{ f_Z\left(0 \mid w, \boldsymbol{\xi}_1, \boldsymbol{\xi}_2\right) \}^{n-1}.
	\end{aligned}
	$$
	Using this upper bound for the likelihood, we have, for the posterior distribution, 
	\begin{align*}
		&\ p(w, \theta, \boldsymbol{\xi}_1, \boldsymbol{\xi}_2|\boldsymbol{X}, \boldsymbol{Y}) \\
		\propto &\ \left\{\prod_{i=1}^n f\left(y_i \mid w, \theta, \boldsymbol{\xi}_1, \boldsymbol{\xi}_2\right) \right\}p(w) p(\theta)p(\boldsymbol{\xi}_1)p(\boldsymbol{\xi}_2) \\
		\le &\ f_Z\left(y_1-\theta \mid w, \boldsymbol{\xi}_1, \boldsymbol{\xi}_2\right) \{ f_Z\left(0 \mid w, \boldsymbol{\xi}_1, \boldsymbol{\xi}_2\right) \}^{n-1} p(w) p(\theta)p(\boldsymbol{\xi}_1)p(\boldsymbol{\xi}_2).
	\end{align*}
	We next integrate the preceding expression with respect to $\theta$, then with respect to $(w, \boldsymbol{\xi}_1,\boldsymbol{\xi}_2)$ to check the propriety of $p(w, \theta, \boldsymbol{\xi}_1, \boldsymbol{\xi}_2|\boldsymbol{X}, \boldsymbol{Y})$.
	
	Taking integration with respect to $\theta$ and using a change of variables, we have that 
	$$
	\int_{-\infty}^{+\infty} f_Z\left(y_1-\theta \mid w, \boldsymbol{\xi}_1, \boldsymbol{\xi}_2\right) \{ f_Z\left(0 \mid w, \boldsymbol{\xi}_1, \boldsymbol{\xi}_2\right) \}^{n-1} d \theta= \{ f_Z\left(0 \mid w, \boldsymbol{\xi}_1, \boldsymbol{\xi}_2\right) \}^{n-1}.
	$$
	Finally, by the sufficient condition given in Theorem \ref{thm:suff1} in the main article, we have
	$$
	\iiint_{\Theta_{w, \boldsymbol{\xi}_1, \boldsymbol{\xi}_2}} \{ f_Z\left(0 \mid w, \boldsymbol{\xi}_1, \boldsymbol{\xi}_2\right) \}^{n-1} p(w) p\left(\boldsymbol{\xi}_1\right) p\left(\boldsymbol{\xi}_2\right)\, d w \, d \boldsymbol{\xi}_1 \,  d \boldsymbol{\xi}_2<\infty.
	$$
	It follows that $$
	\iiint_{\Theta_{w, \boldsymbol{\xi}_1, \boldsymbol{\xi}_2}} \int_{-\infty}^{+\infty}p(w, \theta, \boldsymbol{\xi}_1, \boldsymbol{\xi}_2|\boldsymbol{X}, \boldsymbol{Y}) \, d\theta \, d w \, d \boldsymbol{\xi}_1 \, d \boldsymbol{\xi}_2<\infty.
	$$
	This shows that the posterior distribution is proper. 
\end{proof}

\subsection*{Proof of Theorem \ref{thm:suff2} in the main article}

\begin{proof}
	By assumption, the $n \times p$ design matrix $\boldsymbol{X} := [ \boldsymbol{X}_1, \ldots, \boldsymbol{X}_n ]^\top$ is full rank. Without loss of generality, we assume that the first $p$ rows of $\boldsymbol{X}$ are linearly independent. Define the submatrix $\boldsymbol{U}_{p \times p}$ consisting of the first $p$ rows of $\boldsymbol{X}$.
	
	Using the fact that the GUD family is a unimodal location family, we have that
	$$
	\begin{aligned}
		\prod_{i=1}^n f\left(y_i \mid w, \boldsymbol{\beta}, \boldsymbol{\xi}_1, \boldsymbol{\xi}_2\right) & =\prod_{i=1}^n f_Z\left(y_i-\boldsymbol{X}_i^{\top} \boldsymbol{\beta} \mid w, \boldsymbol{\xi}_1, \boldsymbol{\xi}_2\right) \\
		& \leq \prod_{i=1}^p f_Z\left(y_i-\boldsymbol{X}_i^{\top} \boldsymbol{\beta} \mid w, \boldsymbol{\xi}_1, \boldsymbol{\xi}_2\right) \{f_Z\left(0 \mid w, \boldsymbol{\xi}_1, \boldsymbol{\xi}_2\right) \}^{n-p}.
	\end{aligned}
	$$
	Using this upper bound for the likelihood, we have, for the posterior distribution, 
	\begin{align*}
		&\ p(w, \boldsymbol{\beta}, \boldsymbol{\xi}_1, \boldsymbol{\xi}_2|\boldsymbol{X}, \boldsymbol{Y}) \\
		\propto &\ \left\{\prod_{i=1}^n f\left(y_i \mid w, \boldsymbol{\beta}, \boldsymbol{\xi}_1, \boldsymbol{\xi}_2\right) \right\} p(w) p(\boldsymbol{\beta})p(\boldsymbol{\xi}_1)p(\boldsymbol{\xi}_2) \\
		\le &\  \prod_{i=1}^p f_Z\left(y_i-\boldsymbol{X}_i^{\top} \boldsymbol{\beta} \mid w, \boldsymbol{\xi}_1, \boldsymbol{\xi}_2\right) \{f_Z\left(0 \mid w, \boldsymbol{\xi}_1, \boldsymbol{\xi}_2\right) \}^{n-p} p(w)p(\boldsymbol{\beta})p(\boldsymbol{\xi}_1)p(\boldsymbol{\xi}_2).
	\end{align*}
	We next integrate the preceding expression with respect to $\boldsymbol{\beta}$, then with respect to $(w, \boldsymbol{\xi}_1,\boldsymbol{\xi}_2)$ to check the propriety of $p(w, \boldsymbol{\beta}, \boldsymbol{\xi}_1, \boldsymbol{\xi}_2|\boldsymbol{X}, \boldsymbol{Y})$.
	
	By Lemma \ref{thm:locp},
	$$
	\begin{aligned}
		& \int_{\mathbb{R}^p} \prod_{i=1}^p f_Z\left(y_i-\boldsymbol{X}_i^{\top} \boldsymbol{\beta} \mid w, \boldsymbol{\xi}_1, \boldsymbol{\xi}_2\right) \{ f_Z \left(0 \mid w, \boldsymbol{\xi}_1, \boldsymbol{\xi}_2\right) \}^{n-p} d \boldsymbol{\beta} \\
		= & \{ f_Z \left(0 \mid w, \boldsymbol{\xi}_1, \boldsymbol{\xi}_2\right) \}^{n-p} /|\operatorname{det}(\boldsymbol{U})| .
	\end{aligned}
	$$
	Finally, by the sufficient condition given in Theorem \ref{thm:suff2} in the main article, we have
	$$
	1/|\operatorname{det}(\boldsymbol{U})| \iiint_{\Theta_{w, \boldsymbol{\xi}_1, \boldsymbol{\xi}_2}} \{ f_Z \left(0 \mid w, \boldsymbol{\xi}_1, \boldsymbol{\xi}_2\right) \}^{n-p} p(w) p\left(\boldsymbol{\xi}_1\right) p\left(\boldsymbol{\xi}_2\right) d w d \boldsymbol{\xi}_1 d \boldsymbol{\xi}_2<\infty.
	$$
	Since $\boldsymbol{X}$ has finite entries (and thus, so does $\boldsymbol{U}$), it must be that $\operatorname{det}(\boldsymbol{U})$ is a constant. Further, $\boldsymbol{U}$ is nonsingular, so $\operatorname{det}(\boldsymbol{U}) \neq 0$. It follows that 
	$$\iiint_{\Theta_{w, \boldsymbol{\xi}_1, \boldsymbol{\xi}_2}} \int_{\mathbb{R}^p} p(w, \boldsymbol{\beta}, \boldsymbol{\xi}_1, \boldsymbol{\xi}_2\mid \boldsymbol{X}, \boldsymbol{Y}) \, d \boldsymbol{\beta} \, d w \, d \boldsymbol{\xi}_1 \, d \boldsymbol{\xi}_2<\infty.$$
	This shows that the posterior distribution is proper.
\end{proof}

\subsection*{Proof of Proposition \ref{thm:proper.posterior.reg} in the main article}

\begin{proof}
	The proof consists of verifying the sufficient condition in Theorem \ref{thm:suff2} in the main article for the three considered Bayesian modal regression models.
	
	First, we show that the modal regression model based on the FG distribution in \eqref{eq:FG.model} has a proper posterior distribution. By Theorem \ref{thm:suff2} in the main article, we need to show that 
	$$
	\int_{0}^{\infty}\int_{0}^{\infty}\int_{0}^{1} \{ f_{\operatorname{FG}}\left(y = 0 \mid w, \theta = 0, \sigma_1, \sigma_2 \right) \}^{n-p} p(w)p(\sigma_1)p(\sigma_2) dw d\sigma_1 d\sigma_2 < \infty.
	$$
	Note that
	$$
	\begin{aligned}
		f_{\operatorname{FG}}\left(y = 0 \mid w, \theta = 0, \sigma_1, \sigma_2 \right) &= \exp(-1)\left(w/\sigma_1 + (1-w)/\sigma_2\right) \\
		& \le \exp(-1)\left(1/\sigma_1 + 1/\sigma_2\right).
	\end{aligned}
	$$
	Therefore, it is sufficient to show 
	$$
	\begin{aligned}
		& \int_0^{\infty} \int_0^{\infty} \int_0^1\left(1 / \sigma_1+1 / \sigma_2\right)^{n-p} p(w) p\left(\sigma_1\right) p\left(\sigma_2\right) d w d \sigma_1 d \sigma_2 \\
		= & \int_0^{\infty} \int_0^{\infty} \left(1 / \sigma_1+1 / \sigma_2\right)^{n-p} p\left(\sigma_1\right) p\left(\sigma_2\right) d \sigma_1 d \sigma_2\\
		= & \int_0^{\infty} \int_0^{\infty} \sum_{k=0}^{n-p} { \binom{n-p}{k}}\left(1/\sigma_1\right)^{n-p-k} \left(1/\sigma_2\right)^{k} p\left(\sigma_1\right) p\left(\sigma_2\right) d \sigma_1 d \sigma_2 \\
		< & \infty.
	\end{aligned}
	$$
	The last inequality is true because we use the inverse gamma distribution as the prior for $\sigma_1$ and $\sigma_2$ and because for any inverse gamma random variable $X$, $\mathbb{E}[1/X^{k}] < \infty$ for all $k \in \mathbb{N}$.
	
	Second, we want to show that the linear modal regression model based on the DTP-Student-$t$ distribution \eqref{eq:DTP.model} has a proper posterior distribution. We have
	$$
	\begin{aligned}
		f_{\operatorname{DTP-Student-}t}\left(y = 0 \mid \theta = 0, \sigma_1, \sigma_2, \delta_1, \delta_2 \right) &= 2\left(1-w\right)\frac{\Gamma\left(0.5\delta_2 + 0.5\right)}{\Gamma(0.5 \delta_2)} \frac{1}{\sqrt{\delta_2 \pi} \sigma_2} \\
		&\le 2 \frac{\Gamma\left(0.5\delta_2 + 0.5\right)}{\Gamma(0.5 \delta_2)} \frac{1}{\sqrt{\delta_2 \pi} \sigma_2} \\
		&< 2 \frac{\Gamma\left(0.5\delta_2 + 0.5\right)}{\Gamma(0.5 \delta_2)} \frac{1}{\sqrt{\delta_2} \sigma_2},
	\end{aligned}
	$$
	where $w \in [0,1]$ is defined in \eqref{eq:pdf.DTP.w}. Applying Lemma \ref{thm:student.t} and the fact that for any inverse gamma random variable $X$, $\mathbb{E}[1/X^{k}] < \infty$ for all $k \in \mathbb{N}$, we have
	$$
	\begin{aligned}
		& \int_{0}^{\infty}\int_{0}^{\infty}\left\{\frac{\Gamma\left(0.5 \delta_2+0.5\right)}{\Gamma\left(0.5 \delta_2\right)} \frac{1}{\sqrt{\delta_2} \sigma_2}\right\}^{n-p}p(\sigma_2)p(\delta_2) d\sigma_2 d\delta_2 \\
		= &\ \int_{0}^{\infty} \left[ \int_{0}^{\infty} \left\{ \frac{\Gamma \left( 0.5 \delta_2 + 0.5 \right)}{\Gamma \left( 0.5 \delta_2 \right)} \right\}^{n-p} \delta_2^{0.5p-0.5n} p(\delta_2) d \delta_2 \right] \left( \frac{1}{\sigma_2} \right)^{n-p} p(\sigma_2) d \sigma_2  \\
		< &\ \infty.
	\end{aligned}
	$$
	Therefore, by Theorem \ref{thm:suff2} in the main article, the posterior distribution for regression model in \eqref{eq:DTP.model} is proper. 
	
	Lastly, one can show that the linear modal regression model based on the TPSC-Student-$t$ distribution \eqref{eq:TPSC.model} also has a proper posterior distribution. The proof is almost identical to the proof of posterior propriety for the DTP-Student-$t$ distribution and is therefore omitted.
\end{proof}

\subsection*{Proof of Proposition \ref{thm:improper} in the main article}

\begin{proof}
	Recall that the pdf of the GUD family is
	$$
	f\left(y \mid w, \theta, \boldsymbol{\xi}_1, \boldsymbol{\xi}_2\right)=w f_1\left(y \mid \theta, \boldsymbol{\xi}_1\right)+(1-w) f_2\left(y \mid \theta, \boldsymbol{\xi}_2\right) .
	$$
	Without loss of generality, we assume that $\tau \in \boldsymbol{\xi}_1$ and $\tau \notin \boldsymbol{\xi}_2$. Suppose that there is only one observation, we have that
	$$
	\int_{\tau \in \Theta_{\tau}} p\left(w, \theta, \boldsymbol{\xi}_1, \boldsymbol{\xi}_2 \mid y\right) d \tau \ge \int_{\tau \in \Theta_{\tau}} (1-w) f_2\left(y \mid \theta, \boldsymbol{\xi}_2\right) p\left(\tau\right) d \tau 
	= \infty, 
	$$
	since $p(\tau)$ is improper. 
	
	When there is more than one observation, binomial expansion of the GUD likelihood gives $\prod_{i=1}^n f\left(y_i \mid w, \theta, \boldsymbol{\xi}_1, \boldsymbol{\xi}_2\right)\ge C[f_2\left(y \mid \theta, \boldsymbol{\xi}_2\right)]^n$, where $C>0$ is free of $\boldsymbol{\xi}_1$. Hence, for $C[f_2\left(y \mid \theta, \boldsymbol{\xi}_2\right)]^n p\left(\tau\right)$, the integration with respect to $\tau$ is still divergent when $p(\tau)$ is improper.
\end{proof}

\section{The lognormal mixture distribution} \label{sec:logNM}

To demonstrate how researchers can add new members to the GUD family, we present the construction of the lognormal mixture distribution (logNM) below. Here, we pick the lognormal distribution because the lognormal distribution is right-skewed and unimodal. We construct a location-shift lognormal distribution such that the transformed lognormal distribution is still right-skewed but has a mode at $0$. Next, we flip the location-shift lognormal distribution at $0$ to get a left-skewed unimodal lognormal distribution. Finally, we mix the left- and right-skewed lognormal distribution together to construct the logNM distribution. More details of the construction of logNM can be found in \eqref{eq:pdf.logN}-\eqref{eq:pdf.logNM}.

The pdf of the lognormal distribution is 
\begin{equation}
	f_{\operatorname{logN}}(y \mid \mu, \nu)	= \frac{1}{y \nu \sqrt{2 \pi}} \exp \left(-\frac{\left(\ln (y)-\mu\right)^2}{2 \nu^2}\right) \mathbb{I}(y > 0),
	\label{eq:pdf.logN}
\end{equation}
where $\mu\in (-\infty, +\infty)$ and $\nu>0$ are two parametrers, and the mode is given by $\exp\left(\mu-\nu^2\right)$. We define the pdf of logNM as a mixture of two lognormal pdfs formulated as follows, 
\begin{equation}
	\begin{aligned}
		f_{\operatorname{logNM}}(y\mid w,\theta,\boldsymbol{\xi}_1,\boldsymbol{\xi}_2) &= wf_1(y|\theta, \boldsymbol{\xi}_1) + \left(1-w\right) f_2(y|\theta, \boldsymbol{\xi}_2),  \\
		f_1(y|\theta,\boldsymbol{\xi}_1) & = f_{\operatorname{logN}}\left(\exp\left(\mu_1-\nu^2_1\right)-\left(y-\theta\right) \mid  \mu_1, \nu_1 \right),\\
		f_2(y|\theta, \boldsymbol{\xi}_2) & = f_{\operatorname{logN}}\left(\exp\left(\mu_2-\nu^2_2\right)+\left(y-\theta\right) \mid \mu_2, \nu_2 \right),
	\end{aligned}
	\label{eq:pdf.logNM}
\end{equation}
where $\boldsymbol{\xi}_1 = \left[\mu_1,\nu_1\right]^{\top}$ and $\boldsymbol{\xi}_2 = \left[\mu_2,\nu_2\right]^{\top}$. It can be shown that both component distributions in  \eqref{eq:pdf.logNM} are unimodal at $\theta$, with continuous densities over the real line, one left-skewed and the other  right-skewed. Moreover, the pdfs of the individual lognormal mixture components in \eqref{eq:pdf.logNM} have $0$ at the right and left boundaries of their supports. Hence, the pdf of the logNM distribution is continuous in all of $\mathbb{R}$. Having verified that all three restrictions (R1)-(R3) for the GUD family (introduced in Section 3 of the main manuscript), we can proceed to use the logNM likelihood \eqref{eq:logNM.model} for Bayesian modal regression. 

Figure \ref{fig:logNM_pdfs} demonstrates that the logNM distributions can be asymmetric or symmetric given different combinations of parameter values. The top panel shows that, with an increase of $\mu_2$, the right tail of the logNM distribution becomes heavier while its left tail remains almost the same. The bottom panel shows how $\nu_1$ influences the amount of skewness of the logNM distribution when all three logNM distributions are left-skewed.

\begin{figure}[t]
	\centering
	\begin{subfigure}[b]{0.90\columnwidth}
		\centering
		\includegraphics[width=\columnwidth]{./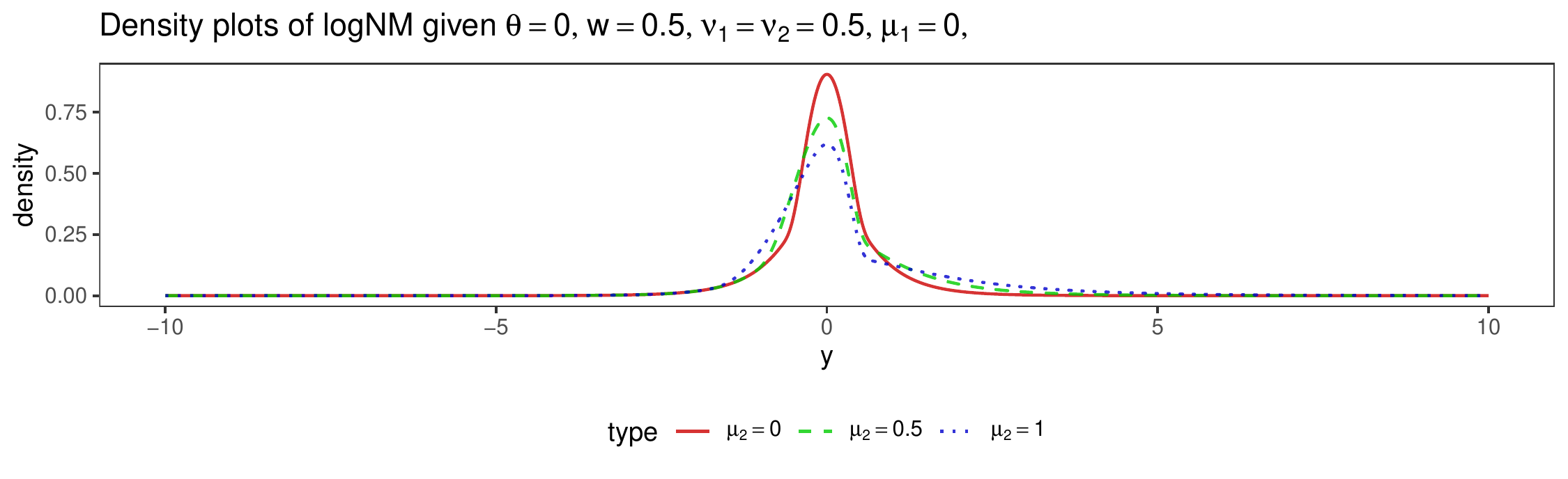}
		%\caption{Density plots of FG with different $\sigma_2$.}
		%\label{fig:FG_varying_sigma1}
	\end{subfigure}
	\begin{subfigure}[b]{0.90\columnwidth}
		\centering
		\includegraphics[width=\columnwidth]{./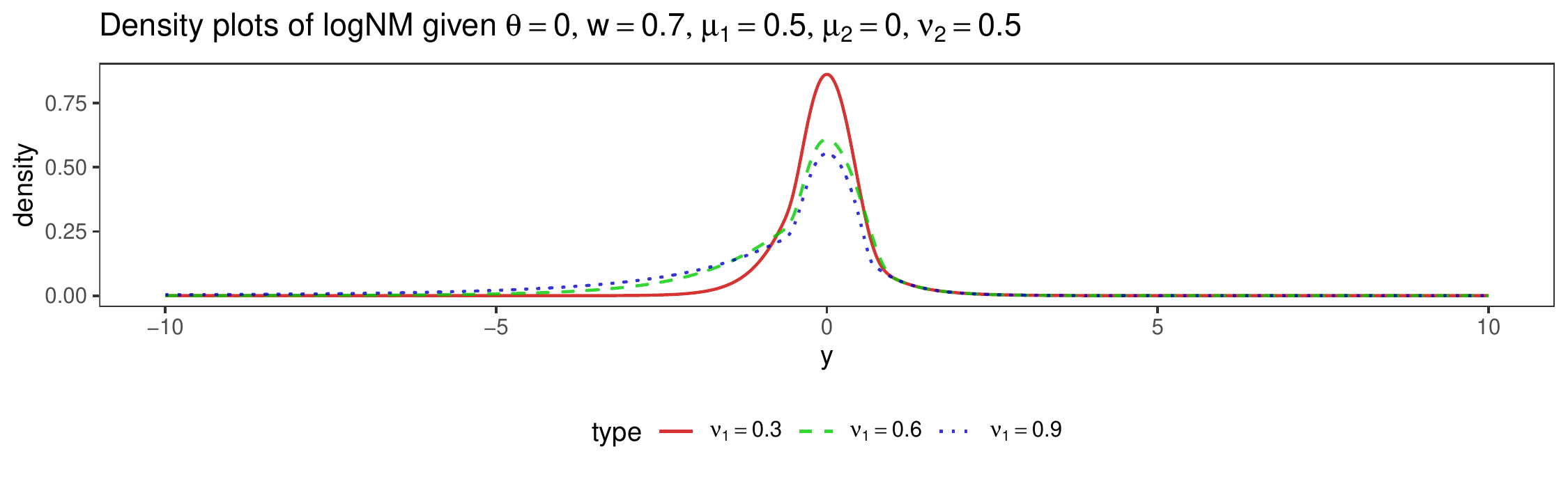}
		%\caption{Density plots of FG with different $w$.}
		%\label{fig:FG_varying_w}
	\end{subfigure}
	\caption{\label{fig:logNM_pdfs} \small {\it Density plots of the logNM distribution given different combinations of parameter values.}}
\end{figure}

Practitioners can build a Bayesian modal linear regression model based on the logNM likelihood \eqref{eq:logNM.model} as follows:
\begin{equation}
	\begin{aligned}
		Y_i \mid \boldsymbol{X}_i, w,\boldsymbol{\beta},\boldsymbol{\xi}_1,\boldsymbol{\xi}_2 &\stackrel{\text{ind}}\sim \operatorname{logNM}\left(w,\boldsymbol{X}^{\top}_i \boldsymbol{\beta}, \mu_1, \nu_1, \mu_2, \nu_2 \right), \\
		w &\sim \operatorname{Uniform}(0,1), \\
		\nu_1, \nu_2 &\stackrel{\text{i.i.d}}\sim \text{InverseGamma} (1,1),\\
		\mu_1, \mu_2 &\stackrel{\text{i.i.d}}\sim \mathcal{N} (0,100^2),\\
		p(\boldsymbol{\beta}) &\propto \mathcal{N}_{p}\left(\boldsymbol{0},10^2 \times \boldsymbol{I}_{p\times p}\right),
	\end{aligned}
	\label{eq:logNM.model}
\end{equation}
where $\boldsymbol{I}_{p\times p}$ stands for the $p$ by $p$ identity matrix and $\boldsymbol{\beta}$ is a $p$-dimension random vector. 
If one wishes to use a flat prior $p(\boldsymbol{\beta}) \propto 1$, then one must verify the sufficient condition in Theorem \ref{thm:suff2} of the main article. 

If researchers have another right-skewed or a left-skewed distribution to work with, then they can mimic the construction of the logNM above to propose a member of the GUD family that works for their applications. For example, one can use the reparameterized unimodal right-skewed Gamma distribution from \citet{bourguignon2020parametric_s} to construct a new type I GUD and a corresponding modal regression model. 

In the following lemma, we show why flat priors \textit{cannot} be used for $\mu_1$ and(or) $\mu_2$ in the Bayesian modal regression model based on the logNM distribution. This simple example demonstrates why we should \textit{avoid} placing improper priors on the \textit{non}-location parameters in Bayesian modal regression models based on the GUD family. 

\begin{proposition} \label{thm:improper.posterior}
	Endowing $\mu_1$ and(or) $\mu_2$ with flat priors $p(\mu_1) \propto 1$ and(or) $p(\mu_2) \propto 1$ leads to an improper posterior distribution under the logNM model \eqref{eq:logNM.model}.
\end{proposition}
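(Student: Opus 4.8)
The plan is to deduce this proposition from Proposition~\ref{thm:improper} after checking that the logNM model lies within its scope. First I would note that in the parameterization \eqref{eq:pdf.logNM} we have $\boldsymbol{\xi}_1 = [\mu_1,\nu_1]^{\top}$ and $\boldsymbol{\xi}_2 = [\mu_2,\nu_2]^{\top}$, so $\mu_1$ appears only in the left-skewed component $f_1$ and $\mu_2$ only in the right-skewed component $f_2$; in particular $\mu_1,\mu_2 \in (\boldsymbol{\xi}_1 \cup \boldsymbol{\xi}_2)\setminus(\boldsymbol{\xi}_1 \cap \boldsymbol{\xi}_2)$. Since we have already verified in this section that the logNM density satisfies (R1)--(R3) and hence belongs to the GUD family, Proposition~\ref{thm:improper} applies verbatim with $\tau=\mu_1$ or $\tau=\mu_2$, and a flat prior on either parameter yields an improper posterior. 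The case where flat priors are placed on $\mu_1$ and $\mu_2$ simultaneously is covered by applying the same argument to just one of them.

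For completeness I would then spell out the short self-contained argument, which is the proof of Proposition~\ref{thm:improper} specialized to this setting. Writing the marginal likelihood as
\[
\int \prod_{i=1}^n f_{\operatorname{logNM}}(y_i \mid w,\theta,\boldsymbol{\xi}_1,\boldsymbol{\xi}_2)\, p(w)\,p(\theta)\,p(\boldsymbol{\xi}_1)\,p(\boldsymbol{\xi}_2)\, dw\, d\theta\, d\boldsymbol{\xi}_1\, d\boldsymbol{\xi}_2,
\]
I would lower bound each mixture factor by $(1-w)f_2(y_i \mid \theta,\boldsymbol{\xi}_2)$, which discards the $wf_1$ term and hence produces a bound that does not involve $\mu_1$ at all. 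By Tonelli's theorem the integral then factors, and the factor $\int p(\mu_1)\, d\mu_1 = \int 1\, d\mu_1 = \infty$ separates out, multiplied by the remaining integral over $(w,\theta,\nu_1,\boldsymbol{\xi}_2)$ of the integrand $(1-w)^n \prod_{i=1}^n f_2(y_i \mid \theta,\boldsymbol{\xi}_2)\, p(w)\,p(\theta)\,p(\nu_1)\,p(\boldsymbol{\xi}_2)$; because this integrand is strictly positive on a set of positive measure (here one uses that $w \sim \operatorname{Uniform}(0,1)$, so $(1-w)^n$ is not a.s.\ zero), the remaining integral is positive and the product is $+\infty$. Thus the posterior cannot be normalized and is improper.

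The step requiring the most care --- though it is not a genuine obstacle --- is making sure the precise form of $f_1$ is irrelevant: in \eqref{eq:pdf.logNM} the parameter $\mu_1$ also appears inside the location shift $\exp(\mu_1-\nu_1^2)-(y-\theta)$, so $f_1$ is not a clean location-scale kernel in $\mu_1$. The resolution is that the lower bound throws away $wf_1$ entirely, so the only property of the first component that is used is $wf_1 \ge 0$; the awkward placement of $\mu_1$ inside $f_1$ never enters the computation. The same remark shows why this template would apply unchanged to any other GUD member built by the recipe of this section.
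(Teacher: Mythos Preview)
Your proposal is correct and follows essentially the same route as the paper: lower-bound the mixture density by the component that does \emph{not} involve the parameter with the flat prior, so that this parameter factors out and its improper prior integrates to $+\infty$. You are in fact somewhat more careful than the paper's own proof --- you invoke Proposition~\ref{thm:improper} explicitly (the paper only frames Proposition~\ref{thm:improper.posterior} as an instance of it in the surrounding text), you handle the product over $n$ observations cleanly rather than reducing to a single factor, and you make the positivity of the remaining integral explicit.
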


\begin{proof}
	We want to show that
	$$
	\int_{-\infty}^{+\infty}\prod_{i=1}^{n}f_{\operatorname{logNM}}\left(y_i \mid w, \theta, \sigma_1, \sigma_2, \mu_1, \mu_2 \right)
	d\mu_1 = +\infty,
	$$
	and(or),
	$$
	\int_{-\infty}^{+\infty}\prod_{i=1}^{n}f_{\operatorname{logNM}}\left(y_i \mid w, \theta, \sigma_1, \sigma_2, \mu_1, \mu_2 \right)
	d\mu_2 = +\infty.
	$$
	Since 
	$$
	\begin{aligned}
		f_{\operatorname{logNM}}\left(y \mid w, \theta, \sigma_1, \sigma_2, \mu_1, \mu_2 \right) &= w f_{\operatorname{logN}}\left(\exp\left(\mu_1-\nu^2_1\right)-\left(y-\theta\right) \mid \theta, \mu_1, \nu_1 \right) + \\
		&\phantom{=} \left(1-w\right) f_{\operatorname{logN}}\left(\exp\left(\mu_2-\nu^2_2\right)+\left(y-\theta\right) \mid \theta, \mu_2, \nu_2 \right),
	\end{aligned}
	$$
	and any pdf must be nonnegative, it suffices to show that
	$$
	\int_{-\infty}^{+\infty} f_{\operatorname{logN}}\left(\exp\left(\mu_1-\nu^2_1\right)-\left(y-\theta\right) \mid \theta, \mu_1, \nu_1 \right) d\mu_2 = \infty,
	$$
	and(or)
	$$
	\int_{-\infty}^{+\infty} f_{\operatorname{logN}}\left(\exp\left(\mu_2-\nu^2_2\right)+\left(y-\theta\right) \mid \theta, \mu_2, \nu_2 \right) d\mu_1 = \infty.
	$$
	Both the integrals above are non-finite. This completes the proof. 
\end{proof}

Following the same arguments as those in Proposition \ref{thm:improper.posterior}, one can show that improper priors such as $p(\nu_1) \propto 1/\nu_1$ and(or) $p(\nu_2) \propto 1/\nu_2$ will also lead to an improper posterior distribution. As stated in Proposition \ref{thm:improper.posterior}, a general rule is that, for the Bayesian modal regression models based on the GUD family, using improper prior(s) for any parameter in $\left(\boldsymbol{\xi}_1 \cup \boldsymbol{\xi}_2\right) \backslash \left(\boldsymbol{\xi}_1 \cap \boldsymbol{\xi}_2\right)$ leads to an improper posterior distribution. Here, $A \backslash B=A \cap B^c$ denotes a collection of elements in $A$ but not in $B$. 

\section{A short note about MCMC} \label{sec:note_on_MCMC}

Readers who are familiar with Bayesian modeling of mixture distributions may wonder why we do not use the data augmentation ``trick'' to design a specific MCMC algorithm for modal regression models based on the GUD family. The problem lies in the type II distributions of the GUD family. If $\mathcal{D}_1 \cap \mathcal{D}_2 = \varnothing$, then the latent variable conditional on other parameters and observed data becomes a degenerate random variable. This degenerate random variable will not behave randomly. 

To demonstrate that we have a degenerate random variable, let us consider a simple case with a single observation. Recall that the type II GUD has the pdf
$$
f\left(y \mid w, \theta, \boldsymbol{\xi}_1, \boldsymbol{\xi}_2\right)=w f_1\left(y \mid \theta, \boldsymbol{\xi}_1\right) I\left(y < \theta\right)+(1-w) f_2\left(y \mid \theta, \boldsymbol{\xi}_2\right)I\left(y \ge \theta\right) .
$$
Introducing the latent variable $z$, we have the joint pdf as
$$
\begin{aligned}
	f\left(y, z \mid w, \theta, \boldsymbol{\xi}_1, \boldsymbol{\xi}_2\right)= & {\left[w f_1\left(y \mid \theta, \boldsymbol{\xi}_1\right) I(y<\theta)\right]^z } \\
	& {\left[(1-w) f_2\left(y \mid \theta, \boldsymbol{\xi}_2\right) I(y \geq \theta)\right]^{1-z}. }
\end{aligned}
$$
The conditional distribution of $z$ is then
$$
\begin{aligned}
	p\left(z \mid w, \theta, \boldsymbol{\xi}_1, \boldsymbol{\xi}_2, y\right) &\propto f\left(y,z \mid w, \theta, \boldsymbol{\xi}_1, \boldsymbol{\xi}_2\right) \\
	&\sim \operatorname{Bernoulli}\left(r\right),
\end{aligned}
$$
where $$r = \frac{w f_1\left(y \mid \theta, \boldsymbol{\xi}_1\right) I\left(y < \theta\right)}{w f_1\left(y \mid \theta, \boldsymbol{\xi}_1\right) I\left(y < \theta\right) + (1-w) f_2\left(y \mid \theta, \boldsymbol{\xi}_2\right)I\left(y \ge \theta\right)}.$$
Similarly, the conditional distribution of $\theta$ is
$$
p\left(\theta \mid w, \boldsymbol{\xi}_1, \boldsymbol{\xi}_2, y,z\right) \propto f\left(y,z \mid w, \theta, \boldsymbol{\xi}_1, \boldsymbol{\xi}_2\right) p\left(\theta\right).
$$
The conditional mean of the latent variable $z$ can only be 0 \textit{or} 1 since $y < \theta$ and $y \ge \theta$ cannot happen at the same time. Hence, the latent variable becomes a degenerate random variable during the MCMC iterations. Without loss of generality, let us assume $z = 1$ in the first iteration of the MCMC. It is not hard to see that, during the MCMC iterations, the updated values of $\theta$ can only be larger than $y$. This leads to $z$ being equal to 1 for the rest of the iterations in the MCMC algorithm. However, if the true value of $\theta$ is smaller than $y$, then the MCMC chain will never reach the true value. Similarly, if $z = 0$ in the first iteration of the MCMC, then all updated values of $\theta$ can only be smaller than $y$. {\revise In other words, the latent variable data augmentation algorithm will \textit{not} explore the \textit{whole} parameter space.} {\revisetwo A Markov chain is irreducible if it is possible to eventually get from one state to any other state in a finite number of steps. Conversely, a Markov chain is reducible if it is impossible to eventually get from one to any other state in a finite number of steps. Hence, the latent variable data augmentation algorithm is \textit{reducible}, rendering it much less practical. A practical MCMC algorithm must be able to explore the entire state space regardless of the initialization.}

{\revise In the case with $n > 1$ observations, to show that the the latent variable data augmentation algorithm is \textit{reducible}, it suffices to provide one legitimate counterexample. This example will illustrate that the latent variable data augmentation algorithm fails to explore the whole parameter space. Denote $\boldsymbol{\theta} = \left\{\theta_{j}, 1 \le j \le n\right\}$ as the collection of all location/mode parameters and $\theta_{\operatorname{min}}$ as the smallest one. In the first iteration of the MCMC, suppose $\boldsymbol{z} = \left[z_{1},\dots,z_{n}\right]^{\top} = \left[1,\dots,1\right]^{\top}$. Then the joint pdf becomes
	$$
	\prod_{i=1}^{n} f\left(y_{i}, z_{i} = 1 \mid w, \boldsymbol{\theta}, \boldsymbol{\xi}_1, \boldsymbol{\xi}_2\right) = w^{n} \left\{\prod_{i=1}^{n} f_1\left(y_{i} \mid \boldsymbol{\theta}, \boldsymbol{\xi}_1\right)\right\} I\left(y_{(n)}<\theta_{\operatorname{min}}\right),
	$$
	where $y_{(n)}$ represents the largest sample order statistic. As a result, the support of conditional distribution $f\left(\theta_{j} \mid \boldsymbol{y}, w, \boldsymbol{\xi}_{1},\boldsymbol{\xi}_{2},\boldsymbol{z}\right)$ is truncated to $\left(y_{(n)},+\infty\right)$ for all $j = 1,\dots,n$. During the MCMC iterations, the updated values of $\theta_{j}$ will always be larger than $y_{(n)}$. This leads to $\boldsymbol{z} = \left[1,\dots,1\right]^{\top}$ and $\theta_{j} > y_{(n)}$ for all $ j = 1,\dots,n$, for the rest of the iterations in the MCMC algorithm. Therefore, in the case with $n > 1$, the latent variable data augmentation algorithm is \textit{reducible}.} {\revisetwo This is not ideal since a poor initialization leads to a degenerate random variable. Practical MCMC algorithms should be able to explore the entire parameter space \emph{regardless} of their initial values.} 

{\revisetwo To demonstrate that the latent variable data augmentation algorithm is \textit{reducible}, we designed a simulation study. We generated 100 observations from the TPSC-Student-$t$ distribution with the mode parameter $\theta = 0$, $w = 0.4$, $\sigma = 1$, and $\delta = 5$. Without loss of generality, we assume $\theta$ is the only unknown parameter. We use a non-informative normal prior with mean 0 and variance 10,000, which is numerically equivalent to a flat prior. We choose the elliptical slice sampling algorithm for the sampling of $\theta$ since we use a normal prior on $\theta$ \citep{murray2010elliptical}. 
	
	We present the traceplots from the latent variable data augmentation algorithm in Figure \ref{fig:latent_variable_MCMC}. From these traceplots, we can see that the latent variable data augmentation algorithm is extremely sensitive to the initial values of $\theta$. In the simulated data, the maximum value is 7.4083, and the minimum value is $-4.4288$. If we choose an initial value of 8, larger than the maximum value, the updated values of $\theta$ are all larger than the maximum value of the simulated data, as shown in the top panel. If we choose an initial value of $-5$, less than the minimum value, all updated values of $\theta$ are smaller than the minimum value of the simulated data, as shown in the bottom panel. Finally, if we choose an initial value of 1, which is within the range of the simulated data, the MCMC chain never captures the true value of $\theta = 0$, as shown in the middle panel. Figure \ref{fig:latent_variable_MCMC} clearly indicates that the latent variable data augmentation algorithm is highly dependent on the initial value of $\theta$. This sensitivity leads to poor mixing and a failure to explore the parameter space adequately, thereby confirming that the algorithm is \textit{reducible}. This reducibility implies that the algorithm fails to properly converge to the true posterior distribution of $\theta$. }

In conclusion, for Bayesian modal regression models based on the type II GUD subfamily, the latent variable data augmentation algorithm is \textit{reducible}. As a consequence of this, we do \textit{not} use the data augmentation ``trick'' for mixture models in our MCMC algorithm. Instead, we use the No-U-Turn Sampler implemented in the \textsc{Stan} software.

\begin{figure}
	\centering
	\includegraphics[width = \textwidth]{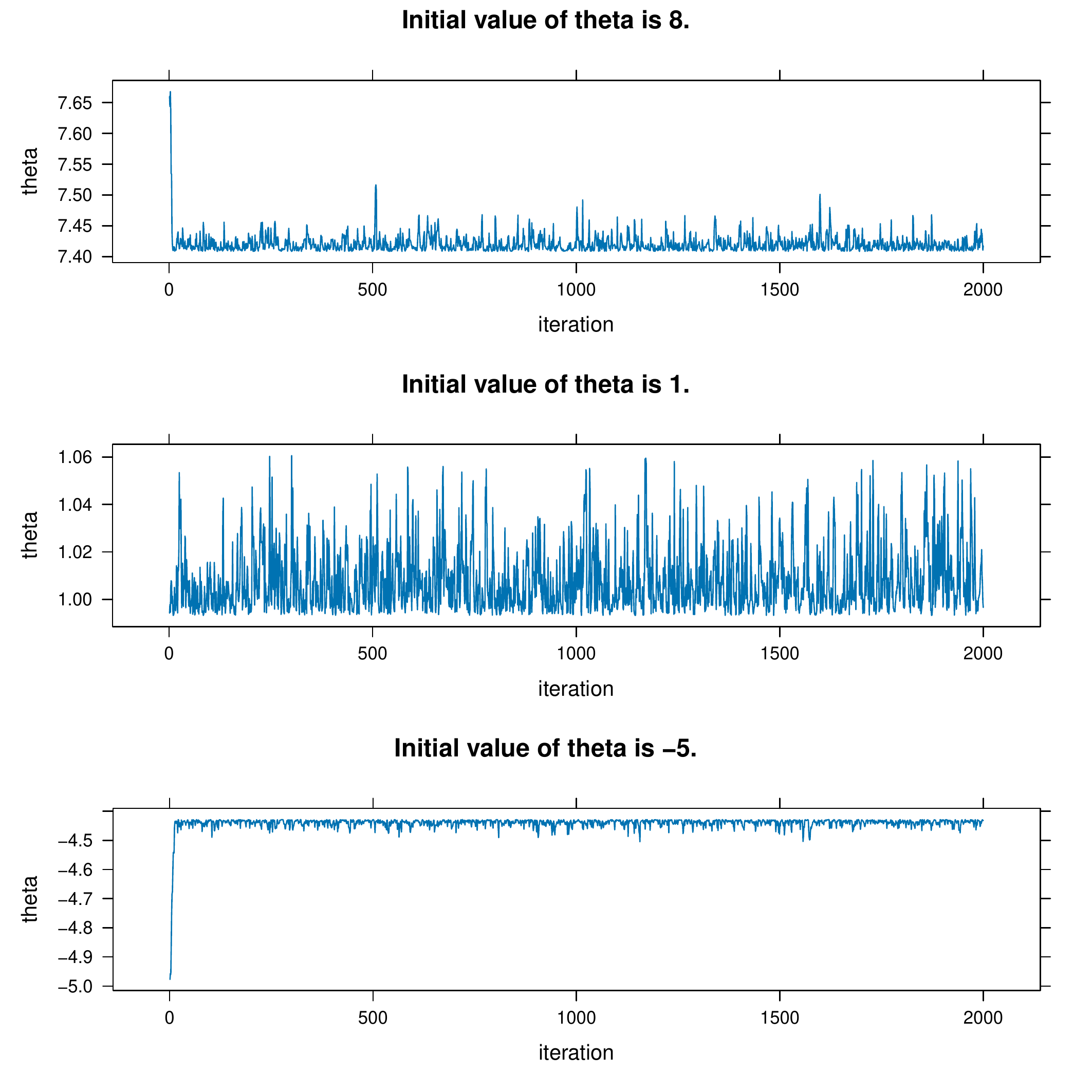}
	\caption{\label{fig:latent_variable_MCMC}Traceplots demonstrating that the latent variable data augmentation algorithm is reducible.} 
\end{figure}

\section{Convergence Diagnostics for the Real Data Applications and Simulation Studies} \label{Sec:Convergence_Diagnostics}
In this section, we include more details about posterior inference, convergence diagnostics, and traceplots for the four data application examples and two simulation studies from the main manuscript. The rhat, which has the theoretical minimum value as 1, is a statistic measuring the convergence of the MCMC chains. To obtain reliable posterior inference, it is recommended that rhat should be near 1 or at least less than 1.1 (page 287 of \cite{gelman2013bayesian_s}). The ess\_bulk and ess\_tail are the bulk and tail effective sample size respectively. The ess\_tail is defined as the minimum of the effective sample sizes for the 5\% and 95\% quantiles. The recommended lower threshold for ess\_bulk and ess\_tail is 400 \citep{vehtari2021rank}. All of the rhat, ess\_bulk and ess\_tail summary statistics in Tables \ref{tab:summary_bank_deposit} through \ref{tab:summary_simu_left_skewed_TPSC_one_data}, meet the recommended thresholds. This suggests that our analyses in the main manuscript are free from convergence issues.

All of the traceplots, from Figure \ref{fig:traceplot_bank} to Figure \ref{fig:left_simu}, presented in this section also indicate that the MCMC chains mixed well, with no convergence issues. We ran four MCMC chains for each model that was fit and used the combined MCMC samples to approximate the posterior distributions. For the intercept-only regression model fit to the bank deposits data (Section 2.1 of the main manuscript), we set the number of warmup iterations as $10{,}000$ and the number of post-warmup iterations as $20{,}000$ for each chain. For all other models in the main article, we set the number of warmup iterations as $10{,}000$ and the number of post-warmup iterations as $10{,}000$ for each of the four MCMC chains.

%% summary statistics tables 
%% band deposits application

\begin{table}
	\centering
	\caption{\label{tab:summary_bank_deposit}Summary statistics and diagnostics for modal regression based on the DTP-Student-$t$ likelihood in bank deposits application.}
	\begin{tabular}[t]{lrrrrrrrrr}
		\toprule
		variable & mean & median & sd & mad & q5 & q95 & rhat & ess\_bulk & ess\_tail\\
		\midrule
		$\sigma_{1}$ & 1.20 & 1.08 & 0.56 & 0.44 & 0.57 & 2.26 & 1.00 & 51884.27 & 43430.38\\
		$\sigma_{2}$ & 20.52 & 19.98 & 5.36 & 5.22 & 12.72 & 30.15 & 1.00 & 48958.33 & 49223.01\\
		$\delta_{1}$ & 1.34 & 1.16 & 0.74 & 0.50 & 0.59 & 2.68 & 1.00 & 71896.15 & 47798.76\\
		$\delta_{2}$ & 0.92 & 0.89 & 0.21 & 0.20 & 0.62 & 1.30 & 1.00 & 54497.18 & 54720.92\\
		$\theta$ & 18.72 & 18.69 & 1.80 & 1.68 & 15.92 & 21.62 & 1.00 & 43742.48 & 39115.74\\
		\bottomrule
	\end{tabular}
\end{table}

%% crime rate application

\begin{table}
	\centering
	\caption{Summary statistics and diagnostics for mean regression based on the normal likelihood in the crime rate application, including D.C.}
	\begin{tabular}[t]{lrrrrrrrrr}
		\toprule
		variable & mean & median & sd & mad & q5 & q95 & rhat & ess\_bulk & ess\_tail\\
		\midrule
		$\beta_{0}$ & -24.30 & -24.32 & 5.35 & 5.22 & -33.05 & -15.46 & 1.00 & 16406.58 & 19021.49\\
		$\beta_{1}$ & 0.47 & 0.47 & 0.16 & 0.16 & 0.20 & 0.74 & 1.00 & 18216.70 & 21480.39\\
		$\beta_{2}$ & 1.14 & 1.14 & 0.23 & 0.22 & 0.77 & 1.52 & 1.00 & 19775.85 & 22853.66\\
		$\beta_{3}$ & 0.07 & 0.07 & 0.03 & 0.03 & 0.01 & 0.12 & 1.00 & 23191.42 & 22530.73\\
		$\sigma$ & 4.50 & 4.46 & 0.47 & 0.45 & 3.80 & 5.34 & 1.00 & 21872.50 & 22509.97\\
		\bottomrule
	\end{tabular}
\end{table}

\begin{table}
	\centering
	\caption{Summary statistics and diagnostics for mean regression based on the normal likelihood in the crime rate application, excluding D.C.}
	\begin{tabular}[t]{lrrrrrrrrr}
		\toprule
		variable & mean & median & sd & mad & q5 & q95 & rhat & ess\_bulk & ess\_tail\\
		\midrule
		$\beta_{0}$ & -0.46 & -0.49 & 3.01 & 3.01 & -5.41 & 4.48 & 1.00 & 15995.36 & 18893.26\\
		$\beta_{1}$ & -0.13 & -0.13 & 0.09 & 0.09 & -0.27 & 0.02 & 1.00 & 17410.41 & 20836.56\\
		$\beta_{2}$ & 0.35 & 0.36 & 0.12 & 0.12 & 0.16 & 0.55 & 1.00 & 18782.72 & 22207.49\\
		$\beta_{3}$ & 0.06 & 0.06 & 0.02 & 0.02 & 0.04 & 0.09 & 1.00 & 23461.81 & 21966.14\\
		$\sigma$ & 2.04 & 2.02 & 0.22 & 0.21 & 1.72 & 2.43 & 1.00 & 23786.27 & 22309.52\\
		\bottomrule
	\end{tabular}
\end{table}

\begin{table}
	\centering
	\caption{Summary statistics and diagnostics for median regression based on the ALD likelihood in the crime rate application, including D.C.}
	\begin{tabular}[t]{lrrrrrrrrr}
		\toprule
		variable & mean & median & sd & mad & q5 & q95 & rhat & ess\_bulk & ess\_tail\\
		\midrule
		$\beta_0$ & -1.36 & -1.14 & 3.48 & 3.38 & -7.44 & 3.97 & 1.00 & 9215.11 & 9934.78\\
		$\beta_1$ & -0.12 & -0.12 & 0.10 & 0.10 & -0.27 & 0.05 & 1.00 & 10087.62 & 12198.44\\
		$\beta_2$ & 0.44 & 0.43 & 0.14 & 0.13 & 0.21 & 0.68 & 1.00 & 10433.68 & 11002.33\\
		$\beta_3$ & 0.06 & 0.06 & 0.02 & 0.02 & 0.03 & 0.08 & 1.00 & 17077.75 & 17352.45\\
		$\sigma$ & 1.15 & 1.13 & 0.17 & 0.16 & 0.90 & 1.44 & 1.00 & 16659.49 & 20246.21\\
		\bottomrule
	\end{tabular}
\end{table}

\begin{table}
	\centering
	\caption{Summary statistics and diagnostics for median regression based on the ALD likelihood in the crime rate application, excluding D.C.}
	\begin{tabular}[t]{lrrrrrrrrr}
		\toprule
		variable & mean & median & sd & mad & q5 & q95 & rhat & ess\_bulk & ess\_tail\\
		\midrule
		$\beta_0$ & 1.22 & 1.32 & 2.49 & 2.49 & -2.96 & 5.14 & 1.00 & 9951.97 & 12600.58\\
		$\beta_1$ & -0.18 & -0.19 & 0.07 & 0.07 & -0.29 & -0.06 & 1.00 & 11125.90 & 14798.56\\
		$\beta_2$ & 0.35 & 0.36 & 0.10 & 0.10 & 0.18 & 0.52 & 1.00 & 11741.02 & 15603.53\\
		$\beta_3$ & 0.05 & 0.05 & 0.01 & 0.01 & 0.03 & 0.08 & 1.00 & 17263.28 & 18630.85\\
		$\sigma$ & 0.80 & 0.79 & 0.12 & 0.11 & 0.63 & 1.01 & 1.00 & 19247.22 & 19584.58\\
		\bottomrule
	\end{tabular}
\end{table}

\begin{table}
	\centering
	\caption{Summary statistics and diagnostics for modal regression based on the TPSC-Student-$t$ likelihood in the crime rate application, including D.C.}
	\begin{tabular}[t]{lrrrrrrrrr}
		\toprule
		variable & mean & median & sd & mad & q5 & q95 & rhat & ess\_bulk & ess\_tail\\
		\midrule
		$\beta_{0}$ & 1.16 & 1.29 & 2.69 & 2.64 & -3.49 & 5.35 & 1.00 & 13537.93 & 17014.18\\
		$\beta_{1}$ & -0.20 & -0.20 & 0.08 & 0.08 & -0.33 & -0.06 & 1.00 & 14052.58 & 17556.68\\
		$\beta_{2}$ & 0.24 & 0.25 & 0.14 & 0.14 & 0.01 & 0.46 & 1.00 & 9133.10 & 13661.34\\
		$\beta_{3}$ & 0.06 & 0.06 & 0.02 & 0.01 & 0.04 & 0.09 & 1.00 & 10599.66 & 8941.30\\
		$w$ & 0.28 & 0.28 & 0.12 & 0.13 & 0.08 & 0.47 & 1.00 & 5557.27 & 3737.66\\
		$\sigma$ & 1.17 & 1.16 & 0.26 & 0.25 & 0.76 & 1.61 & 1.00 & 7724.39 & 3906.00\\
		$\delta$ & 1.87 & 1.75 & 0.64 & 0.56 & 1.07 & 3.07 & 1.00 & 12601.42 & 15273.52\\
		\bottomrule
	\end{tabular}
\end{table}

\begin{table}
	\centering
	\caption{Summary statistics and diagnostics for modal regression based on the TPSC-Student-$t$ likelihood in the crime rate application, excluding D.C.}
	\begin{tabular}[t]{lrrrrrrrrr}
		\toprule
		variable & mean & median & sd & mad & q5 & q95 & rhat & ess\_bulk & ess\_tail\\
		\midrule
		$\beta_{0}$ & 1.16 & 1.24 & 2.63 & 2.61 & -3.32 & 5.35 & 1.00 & 14330.69 & 19114.85\\
		$\beta_{1}$ & -0.20 & -0.20 & 0.08 & 0.08 & -0.32 & -0.06 & 1.00 & 15117.09 & 19602.91\\
		$\beta_{2}$ & 0.24 & 0.25 & 0.13 & 0.14 & 0.01 & 0.45 & 1.00 & 12220.30 & 17676.54\\
		$\beta_{3}$ & 0.06 & 0.06 & 0.02 & 0.01 & 0.04 & 0.09 & 1.00 & 11499.69 & 8564.30\\
		$w$ & 0.30 & 0.31 & 0.13 & 0.13 & 0.08 & 0.50 & 1.00 & 5437.73 & 2386.75\\
		$\sigma$ & 1.27 & 1.26 & 0.27 & 0.27 & 0.83 & 1.73 & 1.00 & 6497.28 & 2682.18\\
		$\delta$ & 2.99 & 2.59 & 1.59 & 1.09 & 1.37 & 5.97 & 1.00 & 10568.04 & 5927.78\\
		\bottomrule
	\end{tabular}
\end{table}

%% Boston house price application

\begin{table}
	\centering
	\caption{Summary statistics and diagnostics for mean regression based on the normal likelihood in the Boston house price application.}
	\begin{tabular}[t]{lrrrrrrrrr}
		\toprule
		variable & mean & median & sd & mad & q5 & q95 & rhat & ess\_bulk & ess\_tail\\
		\midrule
		$\beta_{0}$ & 36.46 & 36.46 & 5.07 & 5.03 & 28.12 & 44.81 & 1.00 & 18019.29 & 23938.34\\
		$\beta_{1}$ & -0.11 & -0.11 & 0.03 & 0.03 & -0.16 & -0.05 & 1.00 & 42660.60 & 29663.24\\
		$\beta_{2}$ & 0.05 & 0.05 & 0.01 & 0.01 & 0.02 & 0.07 & 1.00 & 35258.34 & 29918.50\\
		$\beta_{3}$ & 0.02 & 0.02 & 0.06 & 0.06 & -0.08 & 0.12 & 1.00 & 34526.72 & 30190.27\\
		$\beta_{4}$ & 2.69 & 2.69 & 0.86 & 0.86 & 1.27 & 4.11 & 1.00 & 48946.81 & 28847.94\\
		$\beta_{5}$ & -17.80 & -17.82 & 3.81 & 3.81 & -24.06 & -11.53 & 1.00 & 28728.15 & 29107.00\\
		$\beta_{6}$ & 3.81 & 3.81 & 0.41 & 0.41 & 3.12 & 4.49 & 1.00 & 23280.57 & 28539.14\\
		$\beta_{7}$ & 0.00 & 0.00 & 0.01 & 0.01 & -0.02 & 0.02 & 1.00 & 36062.59 & 29788.64\\
		$\beta_{8}$ & -1.48 & -1.48 & 0.20 & 0.20 & -1.81 & -1.15 & 1.00 & 30599.28 & 29810.00\\
		$\beta_{9}$ & 0.31 & 0.31 & 0.07 & 0.07 & 0.20 & 0.42 & 1.00 & 23921.39 & 27946.68\\
		$\beta_{10}$ & -0.01 & -0.01 & 0.00 & 0.00 & -0.02 & -0.01 & 1.00 & 25027.57 & 28529.62\\
		$\beta_{11}$ & -0.95 & -0.95 & 0.13 & 0.13 & -1.17 & -0.74 & 1.00 & 27126.23 & 28973.83\\
		$\beta_{12}$ & 0.01 & 0.01 & 0.00 & 0.00 & 0.00 & 0.01 & 1.00 & 50871.21 & 30534.77\\
		$\beta_{13}$ & -0.52 & -0.52 & 0.05 & 0.05 & -0.61 & -0.44 & 1.00 & 30284.46 & 31537.54\\
		$\sigma$ & 4.75 & 4.75 & 0.15 & 0.15 & 4.51 & 5.01 & 1.00 & 48680.55 & 30354.76\\
		\bottomrule
	\end{tabular}
\end{table}

\begin{table}
	\centering
	\caption{Summary statistics and diagnostics for mean regression based on the SNCP likelihood in the Boston house price application.}
	\begin{tabular}[t]{lrrrrrrrrr}
		\toprule
		variable & mean & median & sd & mad & q5 & q95 & rhat & ess\_bulk & ess\_tail\\
		\midrule
		$\beta_{0}$ & 35.73 & 35.72 & 4.34 & 4.32 & 28.60 & 42.90 & 1.00 & 16772.00 & 22788.90\\
		$\beta_{1}$ & -0.12 & -0.11 & 0.03 & 0.03 & -0.17 & -0.07 & 1.00 & 41731.91 & 27005.57\\
		$\beta_{2}$ & 0.03 & 0.03 & 0.01 & 0.01 & 0.01 & 0.05 & 1.00 & 33778.96 & 29274.33\\
		$\beta_{3}$ & 0.03 & 0.03 & 0.05 & 0.05 & -0.05 & 0.11 & 1.00 & 36229.13 & 28328.01\\
		$\beta_{4}$ & 1.38 & 1.39 & 0.72 & 0.72 & 0.17 & 2.54 & 1.00 & 40843.75 & 29814.52\\
		$\beta_{5}$ & -14.02 & -13.99 & 3.28 & 3.31 & -19.48 & -8.69 & 1.00 & 26542.15 & 28558.83\\
		$\beta_{6}$ & 3.42 & 3.42 & 0.40 & 0.40 & 2.77 & 4.08 & 1.00 & 22959.69 & 28660.09\\
		$\beta_{7}$ & -0.02 & -0.02 & 0.01 & 0.01 & -0.03 & 0.00 & 1.00 & 34025.64 & 29864.72\\
		$\beta_{8}$ & -1.12 & -1.12 & 0.18 & 0.18 & -1.42 & -0.83 & 1.00 & 28615.18 & 29609.14\\
		$\beta_{9}$ & 0.24 & 0.24 & 0.06 & 0.06 & 0.14 & 0.34 & 1.00 & 21697.75 & 27647.14\\
		$\beta_{10}$ & -0.01 & -0.01 & 0.00 & 0.00 & -0.02 & -0.01 & 1.00 & 26057.74 & 28254.44\\
		$\beta_{11}$ & -0.82 & -0.82 & 0.11 & 0.11 & -1.00 & -0.64 & 1.00 & 27037.86 & 28634.89\\
		$\beta_{12}$ & 0.01 & 0.01 & 0.00 & 0.00 & 0.00 & 0.01 & 1.00 & 49887.11 & 29362.26\\
		$\beta_{13}$ & -0.43 & -0.43 & 0.05 & 0.05 & -0.51 & -0.35 & 1.00 & 28554.45 & 28327.71\\
		$\sigma$ & 4.54 & 4.53 & 0.16 & 0.16 & 4.29 & 4.80 & 1.00 & 40386.94 & 30394.13\\
		$\gamma_{1}$ & 0.76 & 0.76 & 0.05 & 0.05 & 0.67 & 0.83 & 1.00 & 34071.14 & 31212.56\\
		\bottomrule
	\end{tabular}
\end{table}

\begin{table}
	\centering
	\caption{Summary statistics and diagnostics for median regression based on the ALD likelihood in the Boston house price application.}
	\begin{tabular}[t]{lrrrrrrrrr}
		\toprule
		variable & mean & median & sd & mad & q5 & q95 & rhat & ess\_bulk & ess\_tail\\
		\midrule
		$\beta_{0}$ & 14.91 & 14.90 & 4.61 & 4.60 & 7.32 & 22.48 & 1.00 & 11096.61 & 17659.41\\
		$\beta_{1}$ & -0.12 & -0.12 & 0.03 & 0.03 & -0.16 & -0.06 & 1.00 & 25674.22 & 23664.56\\
		$\beta_{2}$ & 0.04 & 0.04 & 0.01 & 0.01 & 0.02 & 0.05 & 1.00 & 23970.41 & 27121.52\\
		$\beta_{3}$ & 0.01 & 0.01 & 0.04 & 0.04 & -0.05 & 0.07 & 1.00 & 28424.98 & 26133.02\\
		$\beta_{4}$ & 1.49 & 1.48 & 0.62 & 0.61 & 0.51 & 2.53 & 1.00 & 31081.28 & 24766.42\\
		$\beta_{5}$ & -8.99 & -9.01 & 2.83 & 2.85 & -13.64 & -4.36 & 1.00 & 18306.15 & 24468.30\\
		$\beta_{6}$ & 5.26 & 5.26 & 0.46 & 0.46 & 4.50 & 6.01 & 1.00 & 13054.10 & 20692.65\\
		$\beta_{7}$ & -0.03 & -0.03 & 0.01 & 0.01 & -0.04 & -0.01 & 1.00 & 18567.66 & 23782.98\\
		$\beta_{8}$ & -0.99 & -0.99 & 0.15 & 0.15 & -1.23 & -0.75 & 1.00 & 22202.34 & 25884.79\\
		$\beta_{9}$ & 0.18 & 0.18 & 0.05 & 0.05 & 0.09 & 0.26 & 1.00 & 16706.88 & 23384.36\\
		$\beta_{10}$ & -0.01 & -0.01 & 0.00 & 0.00 & -0.01 & -0.01 & 1.00 & 19296.89 & 23999.50\\
		$\beta_{11}$ & -0.75 & -0.75 & 0.09 & 0.09 & -0.90 & -0.59 & 1.00 & 19821.84 & 24185.39\\
		$\beta_{12}$ & 0.01 & 0.01 & 0.00 & 0.00 & 0.01 & 0.02 & 1.00 & 31154.75 & 26507.97\\
		$\beta_{13}$ & -0.31 & -0.31 & 0.05 & 0.05 & -0.39 & -0.23 & 1.00 & 16869.67 & 23558.08\\
		$\sigma$ & 1.57 & 1.56 & 0.07 & 0.07 & 1.45 & 1.69 & 1.00 & 34414.60 & 27521.02\\
		\bottomrule
	\end{tabular}
\end{table}

\begin{table}
	\centering
	\caption{Summary statistics and diagnostics for modal regression based on the TPSC-Student-$t$ likelihood in the Boston house price application.}
	\begin{tabular}[t]{lrrrrrrrrr}
		\toprule
		variable & mean & median & sd & mad & q5 & q95 & rhat & ess\_bulk & ess\_tail\\
		\midrule
		$\beta_{0}$ & 13.02 & 13.01 & 4.01 & 4.01 & 6.48 & 19.66 & 1.00 & 17667.44 & 23088.01\\
		$\beta_{1}$ & -0.13 & -0.13 & 0.02 & 0.02 & -0.17 & -0.09 & 1.00 & 30697.31 & 23195.19\\
		$\beta_{2}$ & 0.02 & 0.02 & 0.01 & 0.01 & 0.01 & 0.04 & 1.00 & 32180.06 & 28382.14\\
		$\beta_{3}$ & 0.01 & 0.01 & 0.03 & 0.03 & -0.05 & 0.06 & 1.00 & 29251.66 & 27875.97\\
		$\beta_{4}$ & 1.44 & 1.43 & 0.57 & 0.57 & 0.50 & 2.39 & 1.00 & 37386.87 & 30230.27\\
		$\beta_{5}$ & -6.71 & -6.68 & 2.42 & 2.40 & -10.73 & -2.79 & 1.00 & 24700.28 & 26558.72\\
		$\beta_{6}$ & 4.87 & 4.87 & 0.46 & 0.47 & 4.11 & 5.63 & 1.00 & 20440.38 & 25965.38\\
		$\beta_{7}$ & -0.04 & -0.04 & 0.01 & 0.01 & -0.05 & -0.02 & 1.00 & 26327.10 & 27034.56\\
		$\beta_{8}$ & -0.89 & -0.88 & 0.14 & 0.14 & -1.11 & -0.66 & 1.00 & 25888.35 & 27294.97\\
		$\beta_{9}$ & 0.14 & 0.14 & 0.04 & 0.04 & 0.07 & 0.21 & 1.00 & 24276.86 & 27184.25\\
		$\beta_{10}$ & -0.01 & -0.01 & 0.00 & 0.00 & -0.01 & -0.01 & 1.00 & 27362.14 & 27458.27\\
		$\beta_{11}$ & -0.61 & -0.61 & 0.08 & 0.08 & -0.73 & -0.48 & 1.00 & 32141.78 & 27873.60\\
		$\beta_{12}$ & 0.01 & 0.01 & 0.00 & 0.00 & 0.01 & 0.01 & 1.00 & 33646.78 & 28939.59\\
		$\beta_{13}$ & -0.28 & -0.28 & 0.04 & 0.04 & -0.35 & -0.21 & 1.00 & 23692.29 & 27306.82\\
		$w$ & 0.28 & 0.28 & 0.03 & 0.03 & 0.24 & 0.33 & 1.00 & 32636.21 & 29899.86\\
		$\sigma$ & 2.13 & 2.13 & 0.16 & 0.15 & 1.88 & 2.39 & 1.00 & 25937.22 & 26901.12\\
		$\delta$ & 2.24 & 2.21 & 0.31 & 0.29 & 1.79 & 2.79 & 1.00 & 28284.81 & 27224.91\\
		\bottomrule
	\end{tabular}
\end{table}

%% serum data application

\begin{table}
	\centering
	\caption{Summary statistics and diagnostics for mean regression based on the normal likelihood in the serum data application.}
	\begin{tabular}[t]{lrrrrrrrrr}
		\toprule
		variable & mean & median & sd & mad & q5 & q95 & rhat & ess\_bulk & ess\_tail\\
		\midrule
		$\beta_{0}$ & 3.09 & 3.09 & 0.39 & 0.39 & 2.46 & 3.73 & 1.00 & 10579.95 & 14402.58\\
		$\beta_{1}$ & 0.96 & 0.97 & 0.31 & 0.31 & 0.44 & 1.47 & 1.00 & 9666.15 & 12217.35\\
		$\beta_{2}$ & -0.05 & -0.05 & 0.05 & 0.05 & -0.13 & 0.04 & 1.00 & 10065.81 & 13371.25\\
		$\sigma$ & 1.97 & 1.97 & 0.08 & 0.08 & 1.84 & 2.10 & 1.00 & 17463.42 & 19268.55\\
		\bottomrule
	\end{tabular}
\end{table}

\begin{table}
	\centering
	\caption{Summary statistics and diagnostics for median regression based on the ALD likelihood in the serum data application.}
	\begin{tabular}[t]{lrrrrrrrrr}
		\toprule
		variable & mean & median & sd & mad & q5 & q95 & rhat & ess\_bulk & ess\_tail\\
		\midrule
		$\beta_{0}$ & 2.81 & 2.80 & 0.44 & 0.45 & 2.12 & 3.55 & 1.00 & 8415.06 & 12185.42\\
		$\beta_{1}$ & 1.12 & 1.13 & 0.35 & 0.35 & 0.54 & 1.69 & 1.00 & 7983.74 & 11084.79\\
		$\beta_{2}$ & -0.07 & -0.07 & 0.06 & 0.06 & -0.16 & 0.03 & 1.00 & 8351.57 & 11698.45\\
		$\sigma$ & 0.78 & 0.77 & 0.05 & 0.04 & 0.70 & 0.85 & 1.00 & 13398.79 & 15917.55\\
		\bottomrule
	\end{tabular}
\end{table}

\begin{table}
	\centering
	\caption{Summary statistics and diagnostics for modal regression based on the FG likelihood in the serum data application.}
	\begin{tabular}[t]{lrrrrrrrrr}
		\toprule
		variable & mean & median & sd & mad & q5 & q95 & rhat & ess\_bulk & ess\_tail\\
		\midrule
		$\beta_{0}$ & 2.37 & 2.37 & 0.32 & 0.31 & 1.85 & 2.89 & 1.00 & 12344.81 & 17775.03\\
		$\beta_{1}$ & 1.15 & 1.15 & 0.26 & 0.26 & 0.72 & 1.59 & 1.00 & 10944.28 & 14809.64\\
		$\beta_{2}$ & -0.11 & -0.11 & 0.04 & 0.04 & -0.18 & -0.03 & 1.00 & 11537.75 & 16294.24\\
		$w$ & 0.06 & 0.04 & 0.07 & 0.04 & 0.00 & 0.19 & 1.00 & 11913.26 & 13666.64\\
		$\sigma_1$ & 1.82 & 1.30 & 6.88 & 0.56 & 0.46 & 3.48 & 1.00 & 14689.01 & 9593.03\\
		$\sigma_2$ & 1.68 & 1.68 & 0.09 & 0.09 & 1.54 & 1.82 & 1.00 & 20297.65 & 19120.22\\
		\bottomrule
	\end{tabular}
\end{table}

%% simulation study with left-skewed noise for one simulated data

\begin{table}
	\centering
	\caption{Summary statistics and diagnostics for mean regression based on the normal likelihood in the left-skewed simulation study for one simulated data.}
	\begin{tabular}[t]{lrrrrrrrrr}
		\toprule
		variable & mean & median & sd & mad & q5 & q95 & rhat & ess\_bulk & ess\_tail\\
		\midrule
		$\beta_{0}$ & -2.33 & -2.33 & 2.36 & 2.30 & -6.18 & 1.54 & 1.00 & 34874.25 & 26421.25\\
		$\beta_{1}$ & 2.66 & 2.67 & 3.69 & 3.59 & -3.35 & 8.68 & 1.00 & 34343.20 & 26151.12\\
		$\sigma$ & 12.71 & 12.53 & 1.74 & 1.64 & 10.22 & 15.84 & 1.00 & 31547.10 & 26259.61\\
		\bottomrule
	\end{tabular}
\end{table}

\begin{table}
	\centering
	\caption{Summary statistics and diagnostics for mean regression based on the SNCP likelihood in the left-skewed simulation study for one simulated data.}
	\begin{tabular}[t]{lrrrrrrrrr}
		\toprule
		variable & mean & median & sd & mad & q5 & q95 & rhat & ess\_bulk & ess\_tail\\
		\midrule
		$\beta_{0}$ & -2.13 & -2.15 & 2.50 & 2.43 & -6.26 & 1.96 & 1.00 & 4748.24 & 5885.32\\
		$\beta_{1}$ & 2.88 & 2.91 & 3.82 & 3.72 & -3.41 & 9.11 & 1.00 & 5197.17 & 7262.71\\
		$\sigma$ & 13.34 & 13.14 & 1.91 & 1.83 & 10.63 & 16.81 & 1.00 & 3793.51 & 4521.24\\
		$\gamma_{1}$ & 0.08 & 0.06 & 0.08 & 0.06 & 0.00 & 0.24 & 1.00 & 3327.55 & 6325.35\\
		\bottomrule
	\end{tabular}
\end{table}

\begin{table}
	\centering
	\caption{Summary statistics and diagnostics for median regression based on the ALD likelihood in the left-skewed simulation study for one simulated data.}
	\begin{tabular}[t]{lrrrrrrrrr}
		\toprule
		variable & mean & median & sd & mad & q5 & q95 & rhat & ess\_bulk & ess\_tail\\
		\midrule
		$\beta_{0}$ & 0.65 & 0.65 & 0.48 & 0.46 & -0.14 & 1.45 & 1.00 & 32951.17 & 27214.69\\
		$\beta_{1}$ & 0.75 & 0.75 & 0.79 & 0.73 & -0.54 & 2.06 & 1.00 & 32143.98 & 24711.13\\
		$\sigma$ & 2.13 & 2.08 & 0.40 & 0.38 & 1.56 & 2.85 & 1.00 & 29347.16 & 26661.46\\
		\bottomrule
	\end{tabular}
\end{table}

\begin{table}
	\centering
	\caption{\label{tab:summary_simu_left_skewed_TPSC_one_data}Summary statistics and diagnostics for modal regression based on the TPSC-Student-$t$ likelihood in the left-skewed simulation study for one simulated data.}
	\begin{tabular}[t]{lrrrrrrrrr}
		\toprule
		variable & mean & median & sd & mad & q5 & q95 & rhat & ess\_bulk & ess\_tail\\
		\midrule
		$\beta_{0}$ & 0.85 & 0.82 & 0.39 & 0.36 & 0.26 & 1.51 & 1.00 & 19232.81 & 18628.48\\
		$\beta_{1}$ & 0.64 & 0.64 & 0.33 & 0.32 & 0.09 & 1.17 & 1.00 & 30263.89 & 25584.74\\
		$w$ & 0.55 & 0.55 & 0.12 & 0.12 & 0.35 & 0.74 & 1.00 & 19230.80 & 19503.33\\
		$\sigma$ & 0.80 & 0.78 & 0.18 & 0.17 & 0.54 & 1.12 & 1.00 & 30299.78 & 28861.64\\
		$\delta$ & 1.08 & 1.04 & 0.27 & 0.26 & 0.70 & 1.58 & 1.00 & 30531.88 & 28043.37\\
		\bottomrule
	\end{tabular}
\end{table}

%% traceplot
%% band deposits data application

\begin{figure}
	\centering
	\includegraphics[width=0.80\columnwidth]{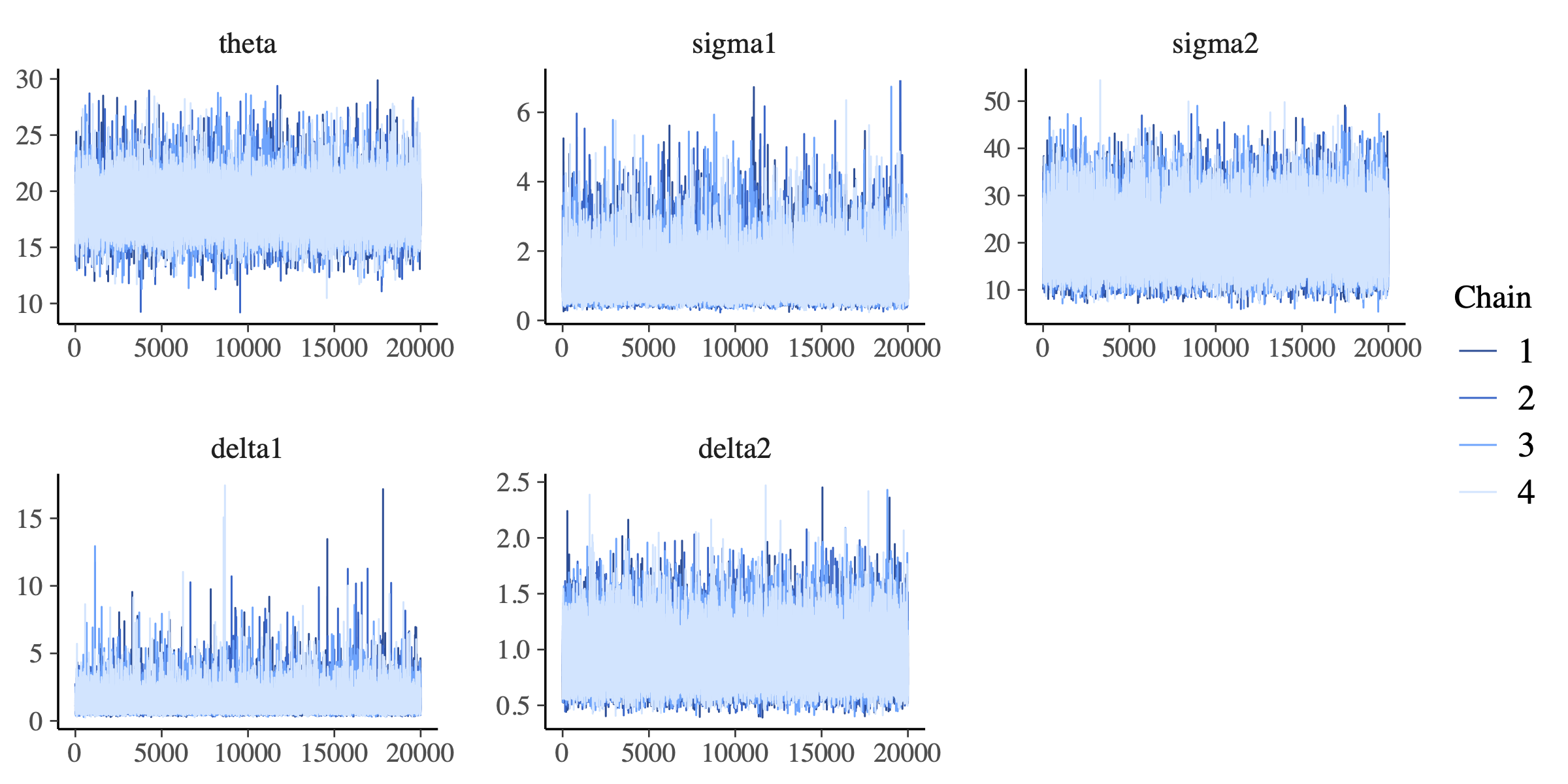}
	\caption{\label{fig:traceplot_bank} Traceplots for the modal regression model based on the DTP-Student-$t$ likelihood fit to the bank deposits data.}
\end{figure}

%% crime rate data application

\begin{figure}
	\centering
	\begin{subfigure}{\columnwidth}
		\centering
		\includegraphics[width=0.80\columnwidth]{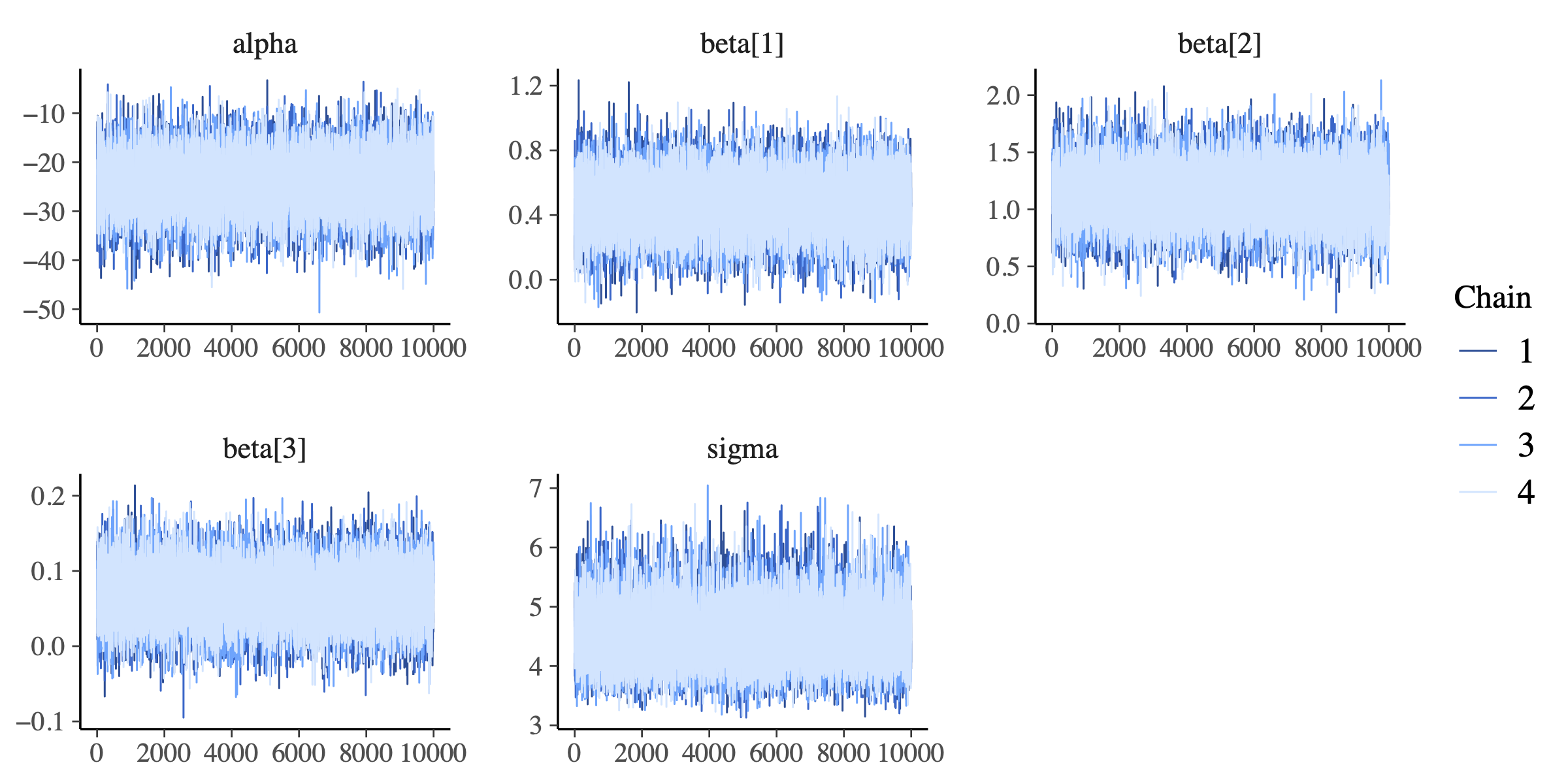}
		\caption{\label{fig:traceplot_crime_normal_included}Traceplots for the mean regression model based on the normal likelihood.}
	\end{subfigure}
	\begin{subfigure}{\columnwidth}
		\centering
		\includegraphics[width=0.80\columnwidth]{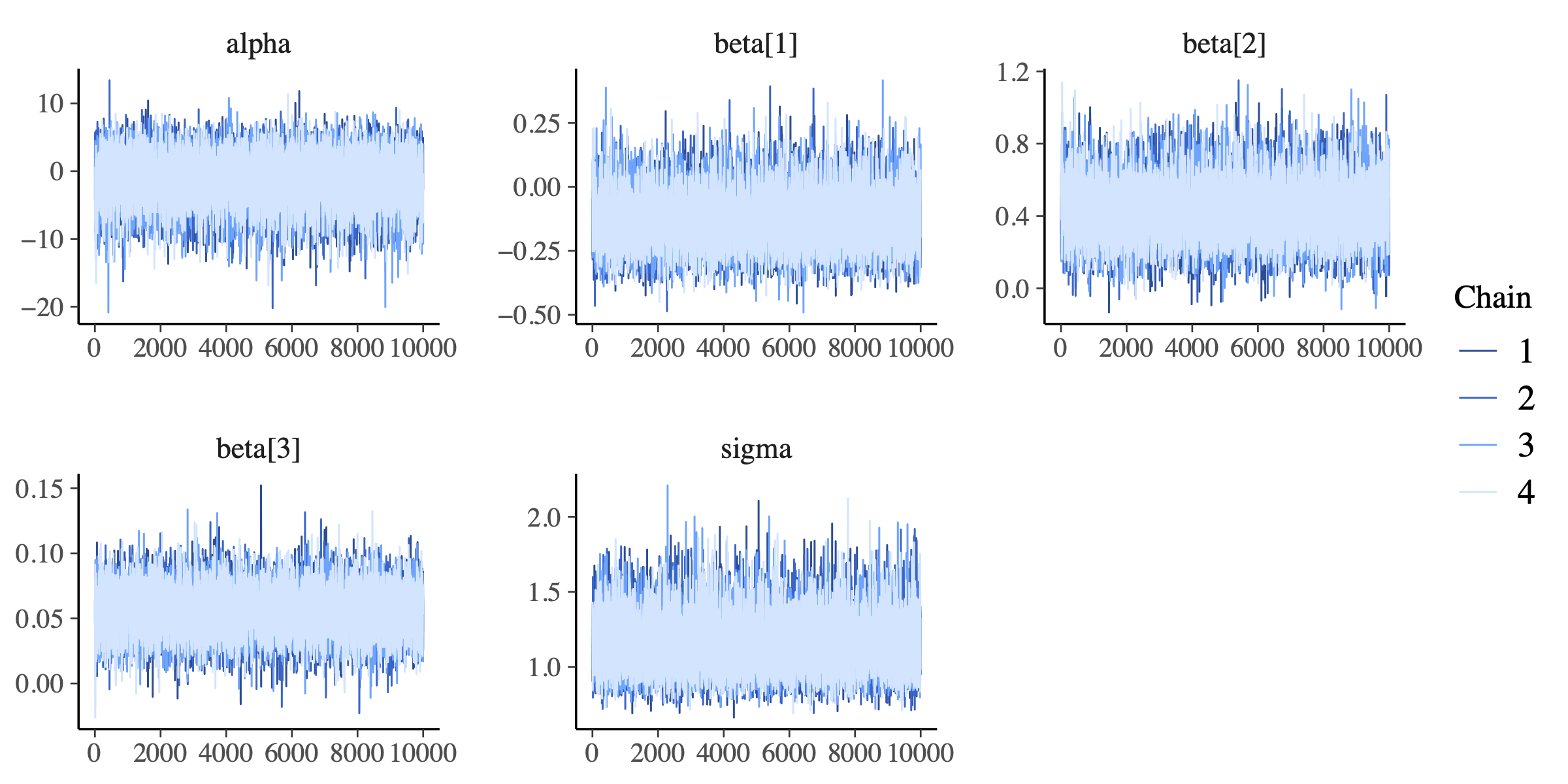}
		\caption{\label{fig:traceplot_crime_ALD_included}Traceplots for the median regression model based on the ALD likelihood.}
	\end{subfigure}
	\begin{subfigure}{\columnwidth}
		\centering
		\includegraphics[width=0.80\columnwidth]{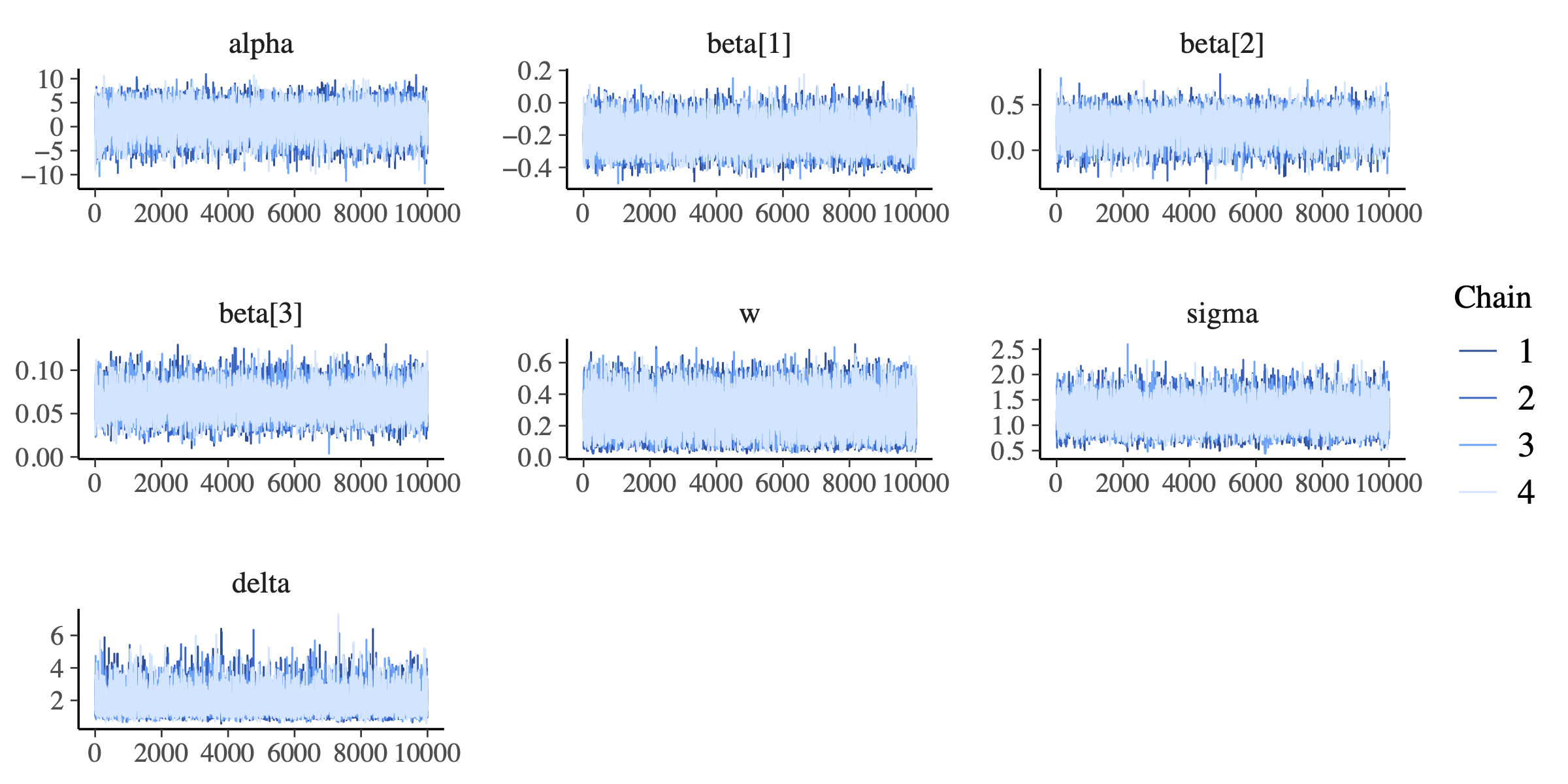}
		\caption{\label{fig:traceplot_crime_TPSC_included}Traceplots for the modal regression model based on the TPSC-Student-$t$ likelihood.}
	\end{subfigure}
	\caption{\label{fig:crime_dc_included} Traceplots for the mean/median/modal regression models fit to the crime data, including D.C.}
\end{figure}

\begin{figure}
	\centering
	\begin{subfigure}{\columnwidth}
		\centering
		\includegraphics[width=0.80\columnwidth]{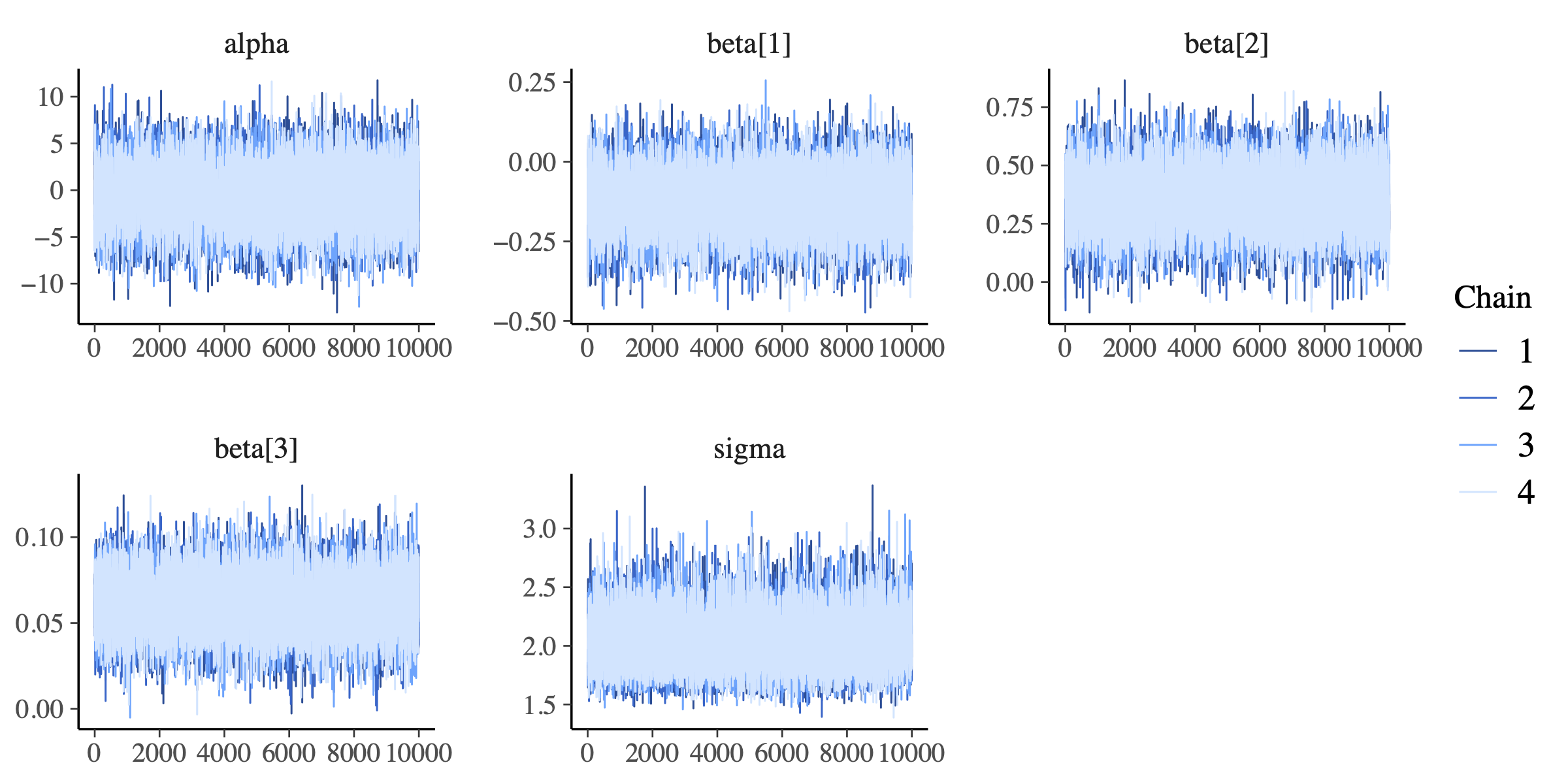}
		\caption{\label{fig:traceplot_crime_normal_dc_excluded}Traceplots for the mean regression model based on the normal likelihood.}
	\end{subfigure}
	\begin{subfigure}{\columnwidth}
		\centering
		\includegraphics[width=0.80\columnwidth]{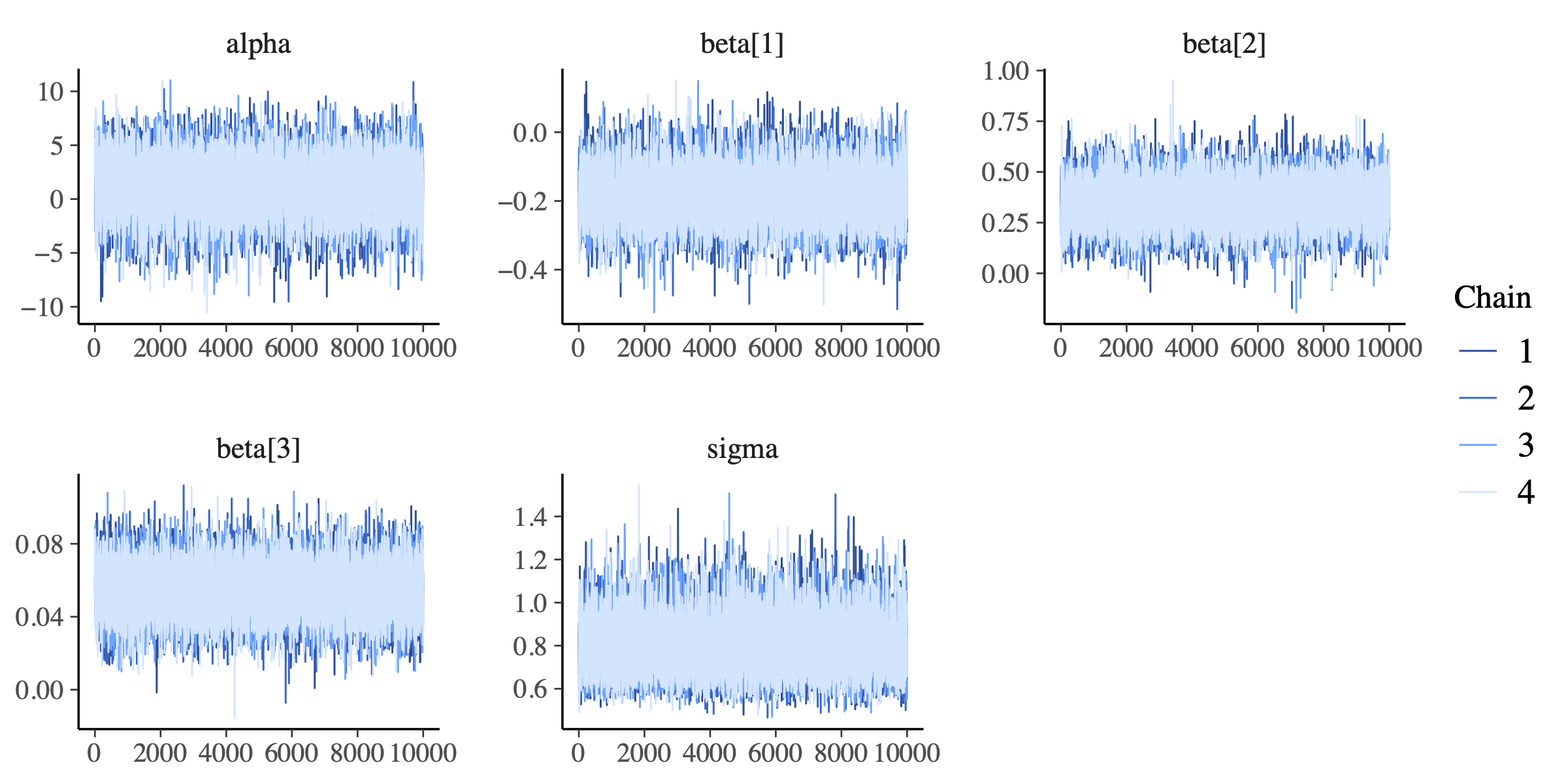}
		\caption{\label{fig:traceplot_crime_ALD_dc_excluded}Traceplots for the median regression model based on the ALD likelihood.}
	\end{subfigure}
	\begin{subfigure}{\columnwidth}
		\centering
		\includegraphics[width=0.80\columnwidth]{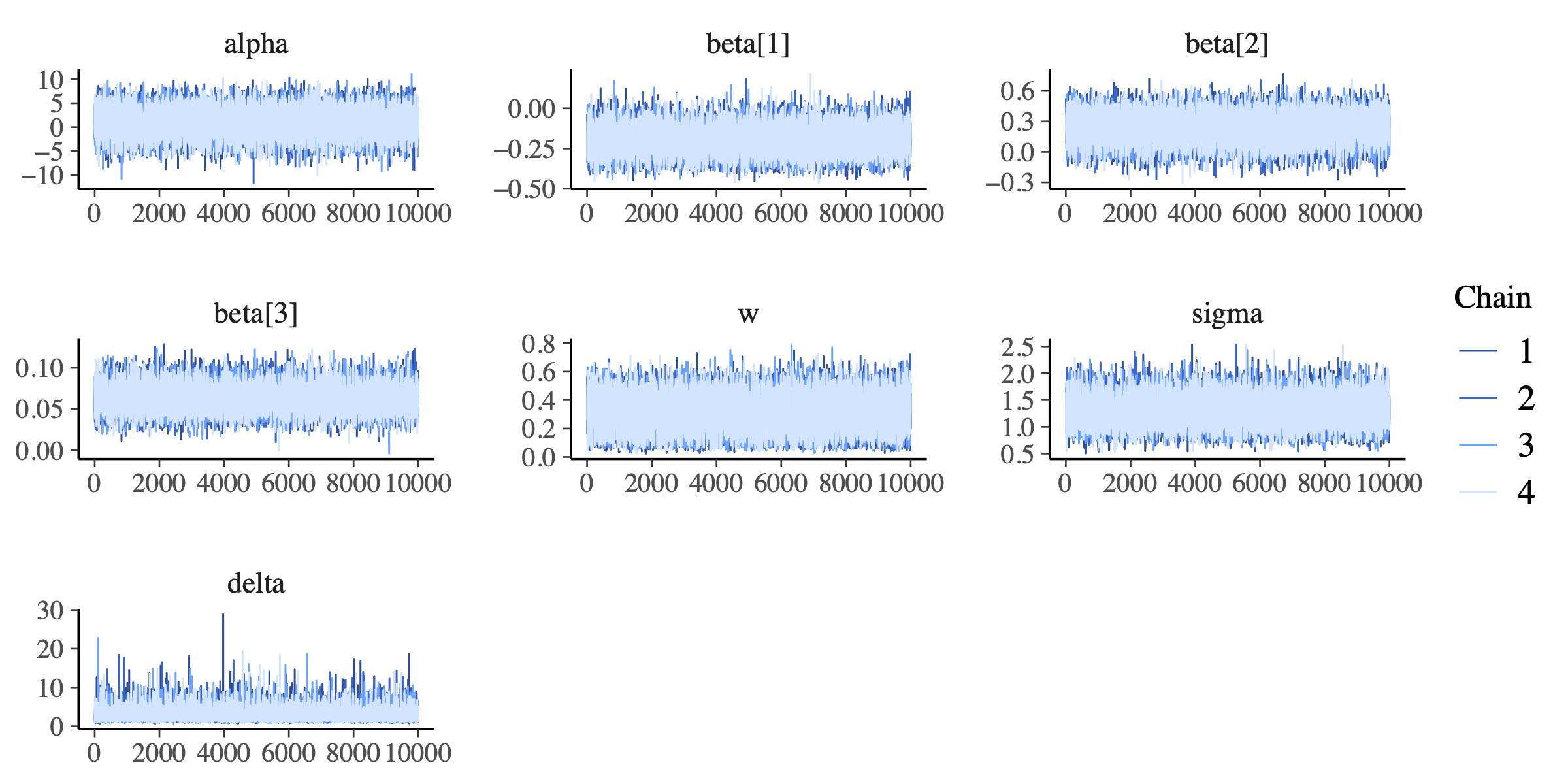}
		\caption{\label{fig:traceplot_crime_TPSC_dc_excluded}Traceplots for the modal regression model based on the TPSC-Student-$t$ likelihood.}
	\end{subfigure}
	\caption{\label{fig:crime_dc_excluded} Traceplots for the mean/median/modal regression models fit to the crime data, excluding D.C.}
\end{figure}

%% Boston house price data application

\begin{figure}
	\centering
	\includegraphics[width=0.80\columnwidth]{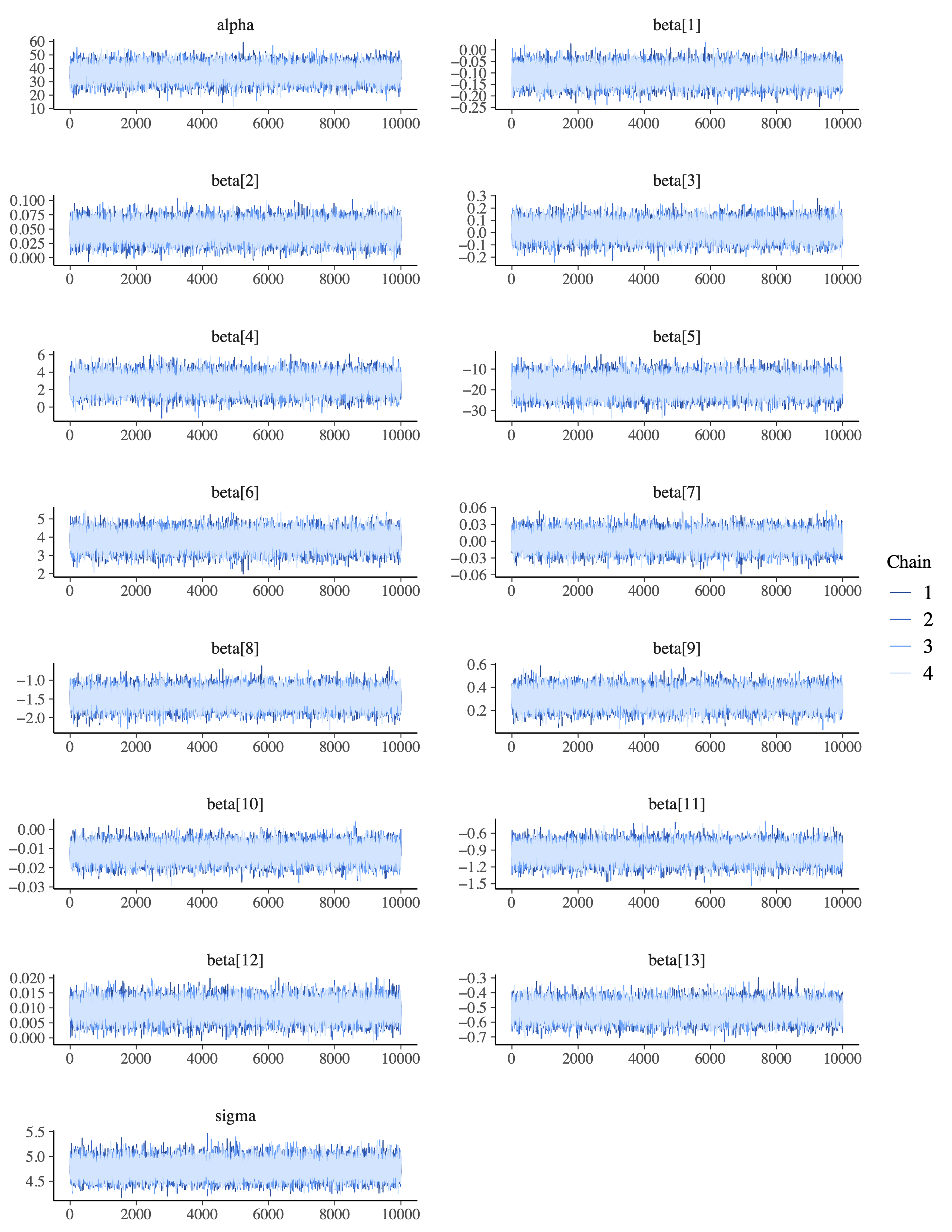}
	\caption{Traceplots for the mean regression model based on the normal likelihood fit to the Boston house price data.}
\end{figure}

\begin{figure}
	\centering
	\includegraphics[width=0.80\columnwidth]{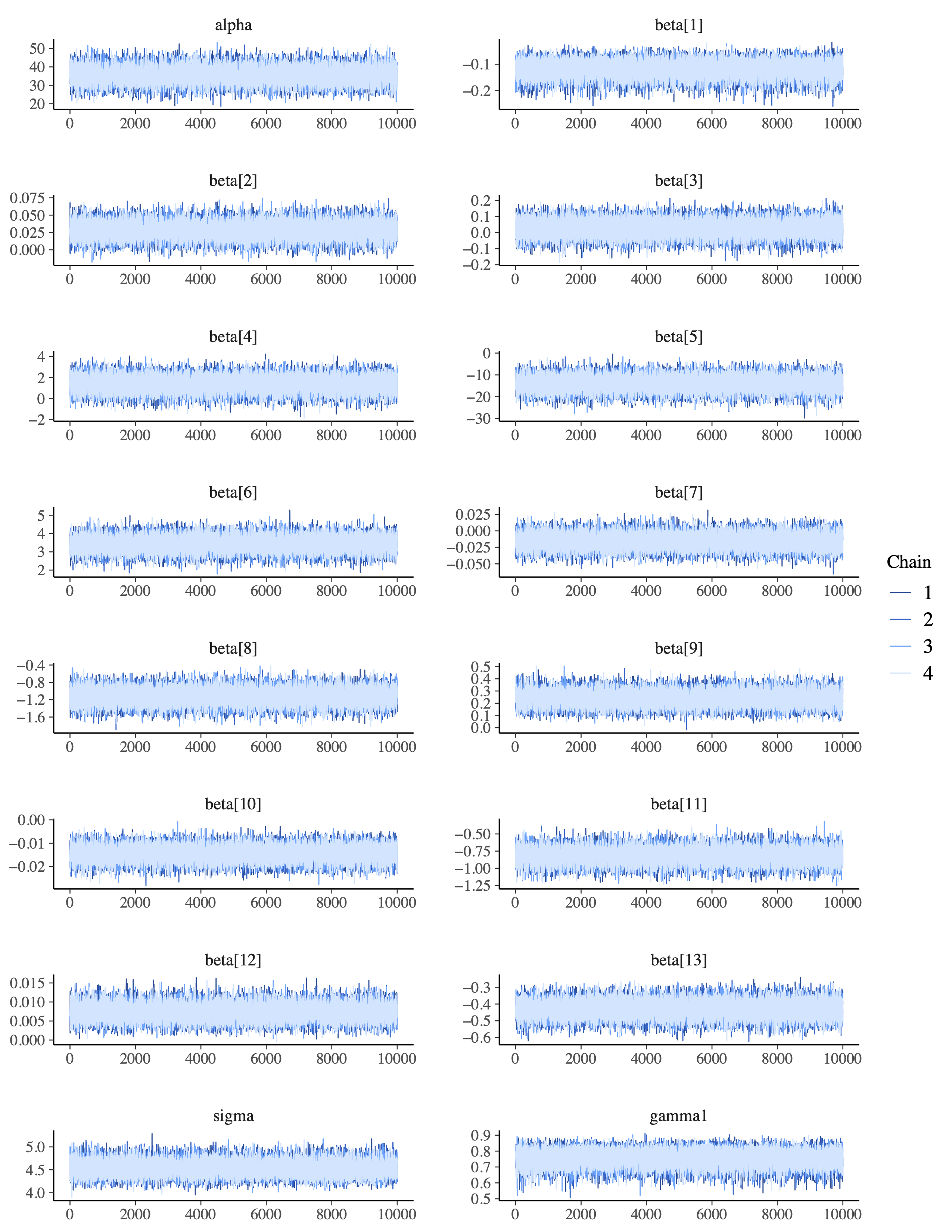}
	\caption{Traceplots for the mean regression model based on the SNCP likelihood fit to the Boston house price data.}
\end{figure}

\begin{figure}
	\centering
	\includegraphics[width=0.80\columnwidth]{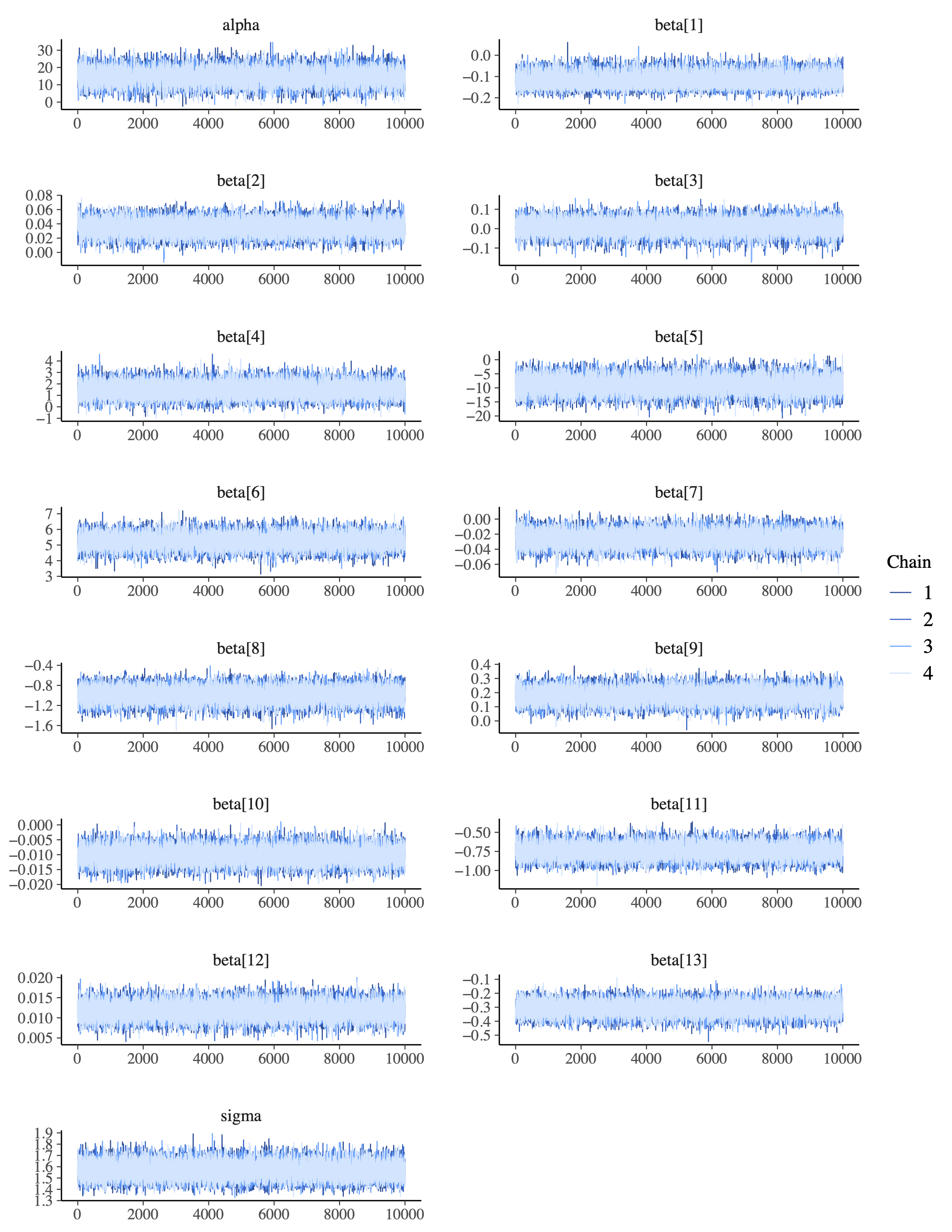}
	\caption{Traceplots for the median regression model based on the ALD likelihood fit to the Boston house price data.}
\end{figure}

\begin{figure}
	\centering
	\includegraphics[width=0.80\columnwidth]{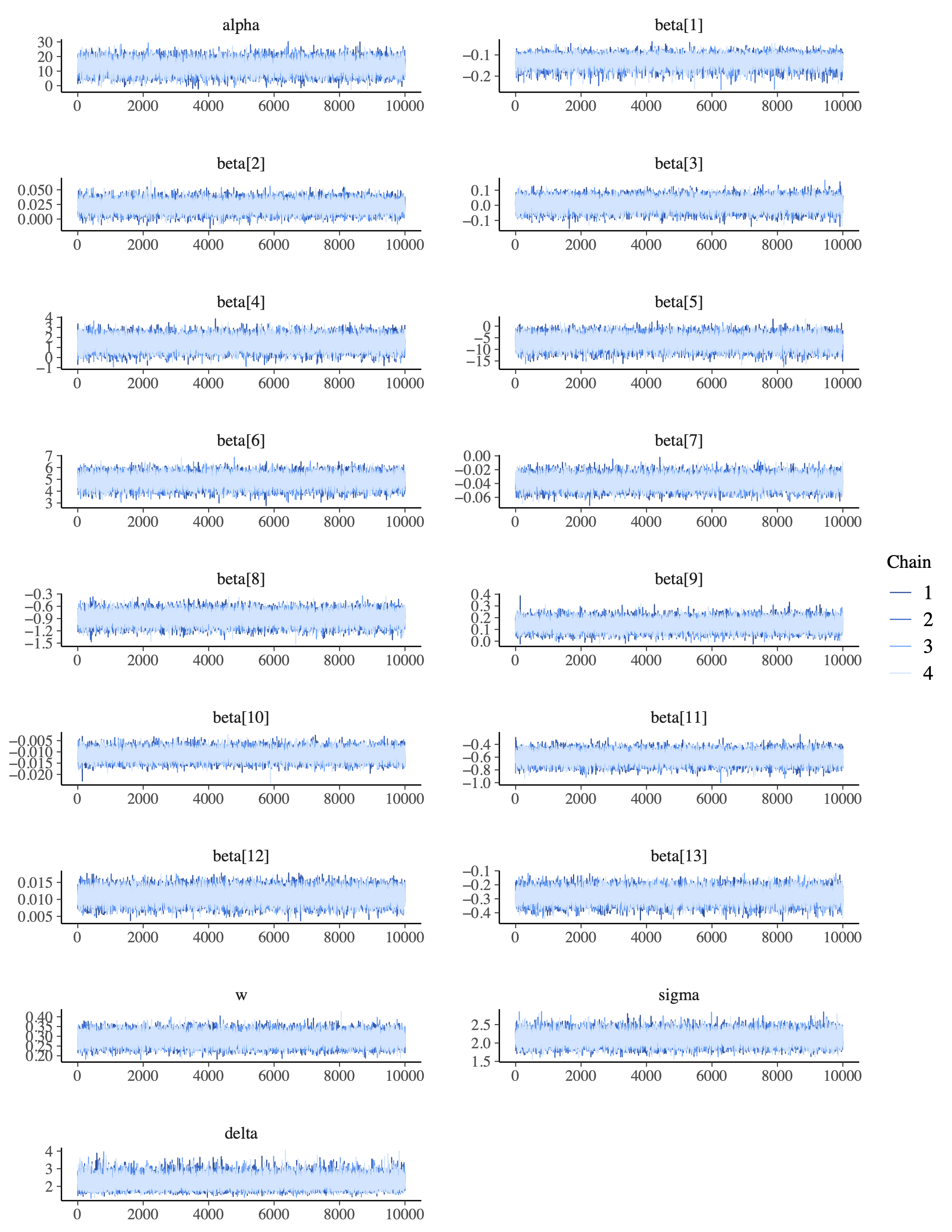}
	\caption{Traceplots for the modal regression model based on the TPSC-Student-$t$ likelihood fit to the Boston house price data.}
\end{figure}

%% serum data application

\begin{figure}
	\centering
	\begin{subfigure}{\columnwidth}
		\centering
		\includegraphics[width=0.80\columnwidth]{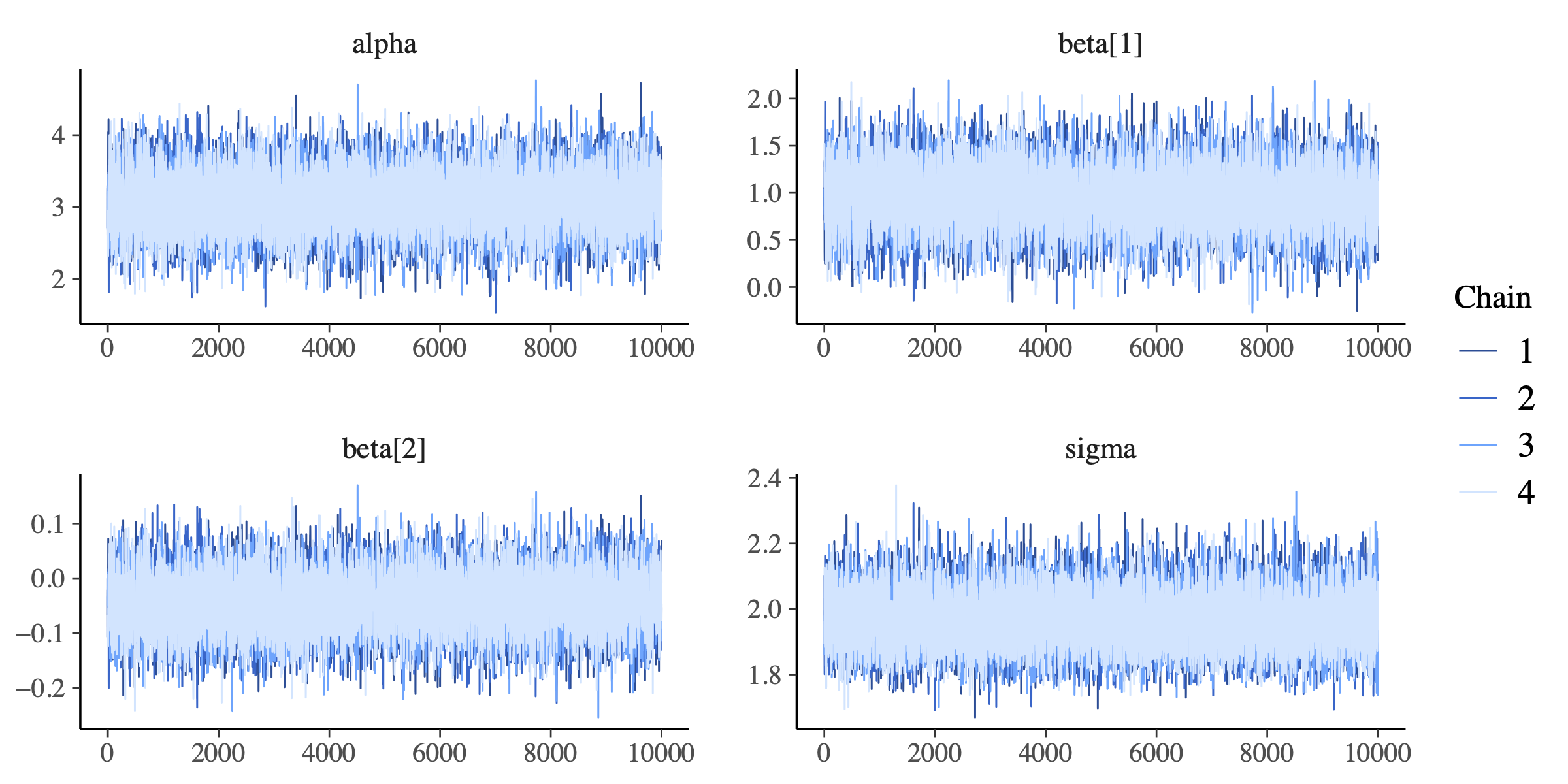}
		\caption{\label{fig:traceplot_serum_normal}Traceplots for the mean regression model}
	\end{subfigure}
	\begin{subfigure}{\columnwidth}
		\centering
		\includegraphics[width=0.80\columnwidth]{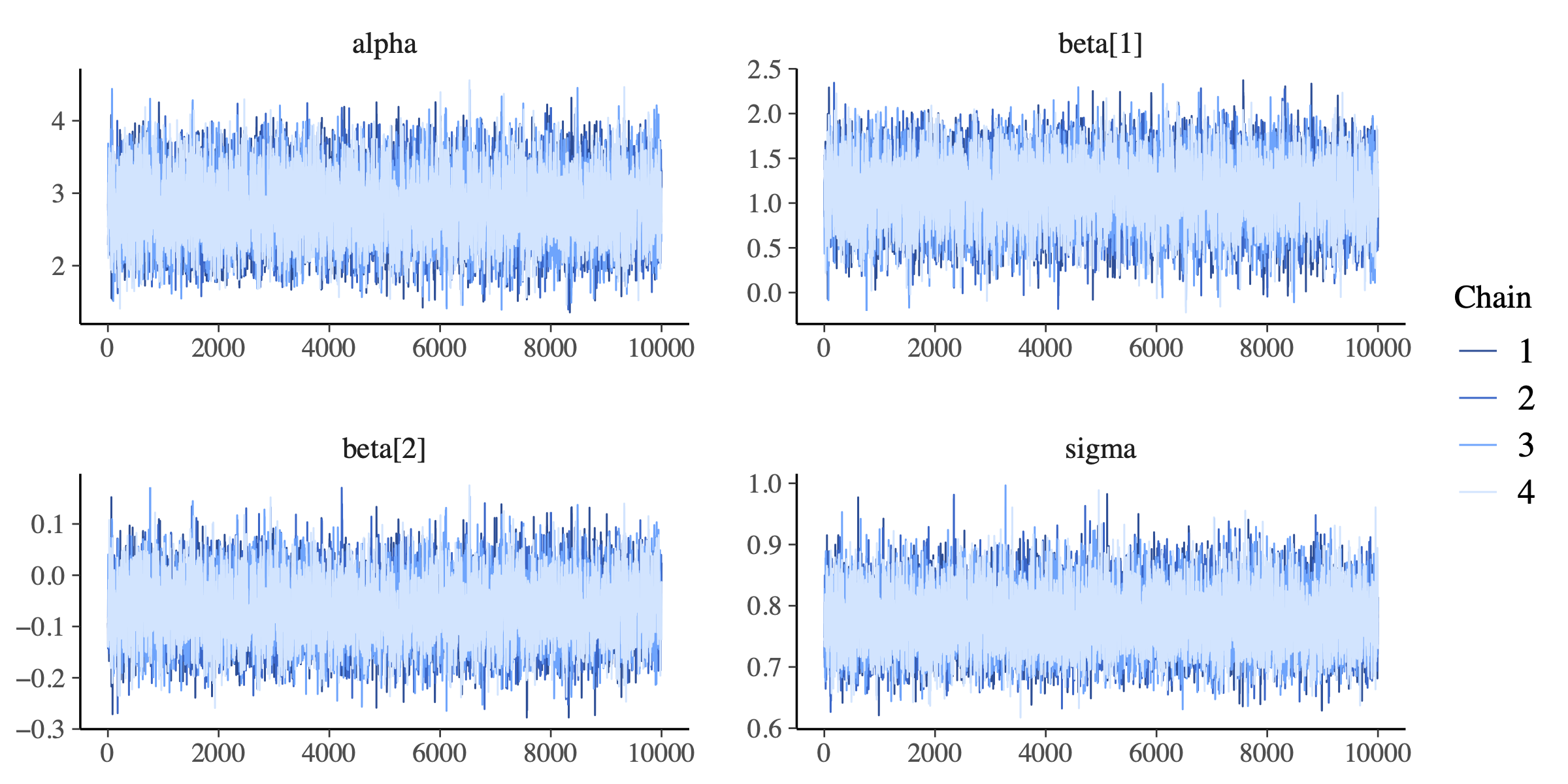}
		\caption{\label{fig:traceplot_serum_ALD}Traceplots for the median regression model}
	\end{subfigure}
	\begin{subfigure}{\columnwidth}
		\centering
		\includegraphics[width=0.80\columnwidth]{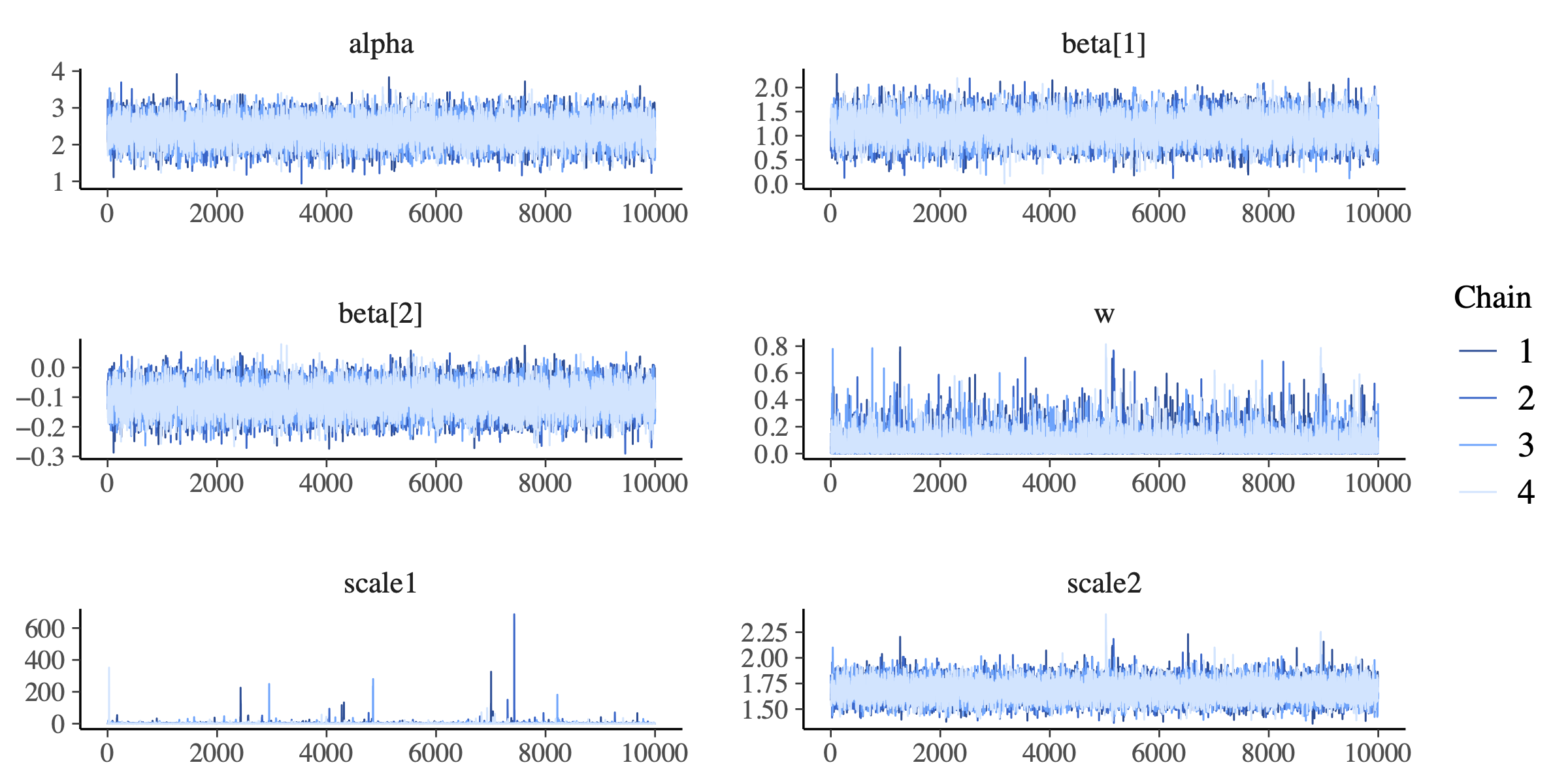}
		\caption{\label{fig:traceplot_serum_TPSC} Traceplots for the modal regression model based on the FG likelihood}
	\end{subfigure}
	\caption{\label{fig:serum} Traceplots for the mean/median/modal regression models fit to the serum data.}
\end{figure}

%% simulation study with left-skewed noises for one simulated data.

\begin{figure}
	\centering
	\begin{subfigure}{\columnwidth}
		\centering
		\includegraphics[width=0.78\columnwidth]{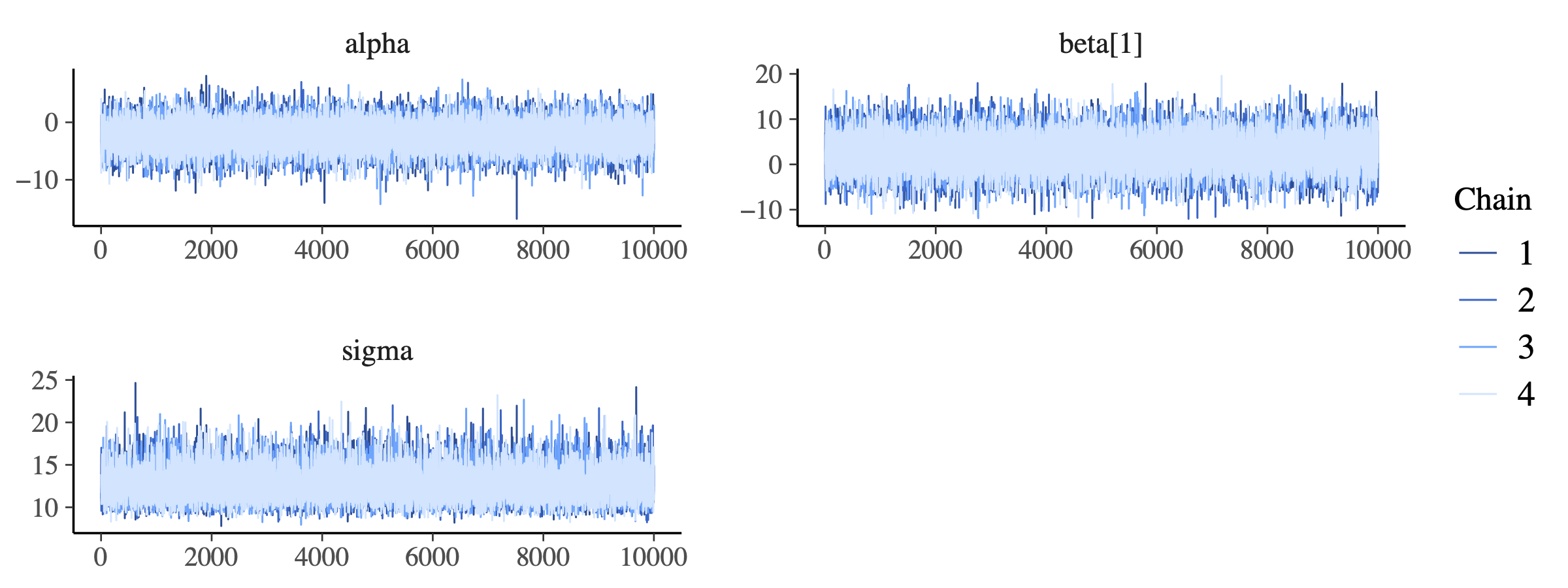}
		\caption{\label{fig:traceplot_left_simu_normal}Traceplots for the mean regression model based on the normal likelihood.}
	\end{subfigure}
	\begin{subfigure}{\columnwidth}
		\centering
		\includegraphics[width=0.78\columnwidth]{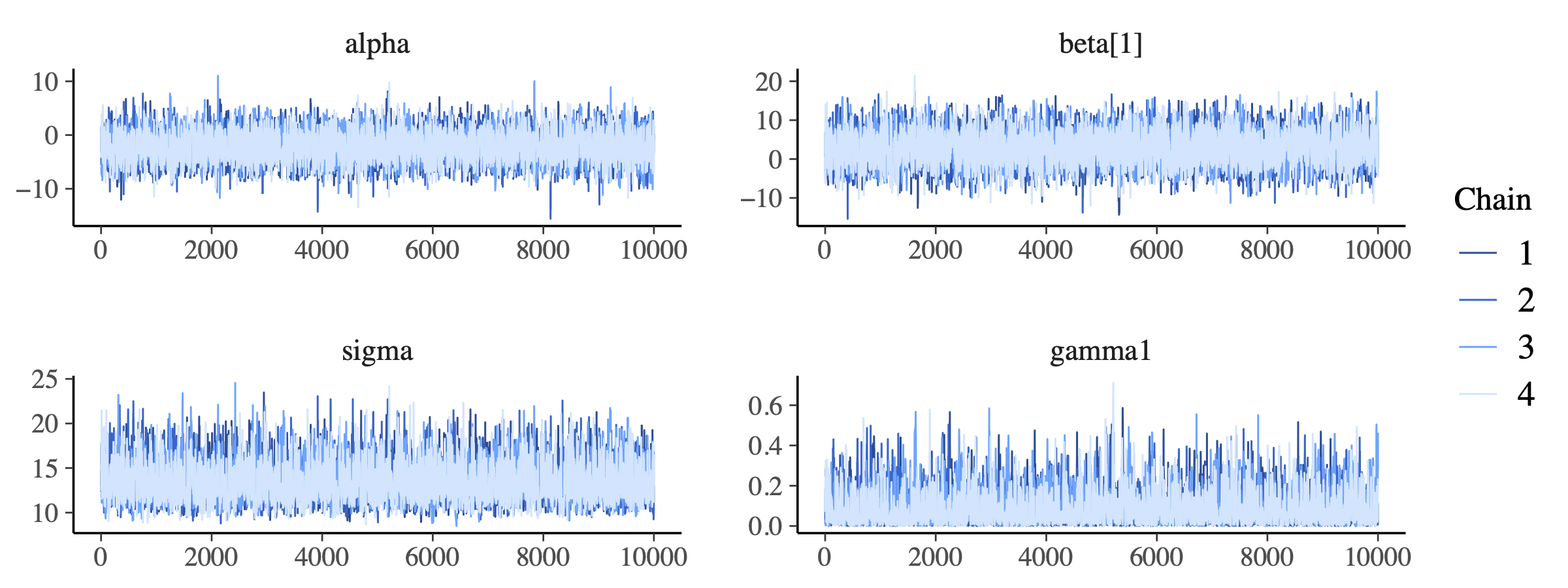}
		\caption{\label{fig:traceplot_left_simu_SNCP}Traceplots for the mean regression model based on the SNCP likelihood.}
	\end{subfigure}
	\begin{subfigure}{\columnwidth}
		\centering
		\includegraphics[width=0.78\columnwidth]{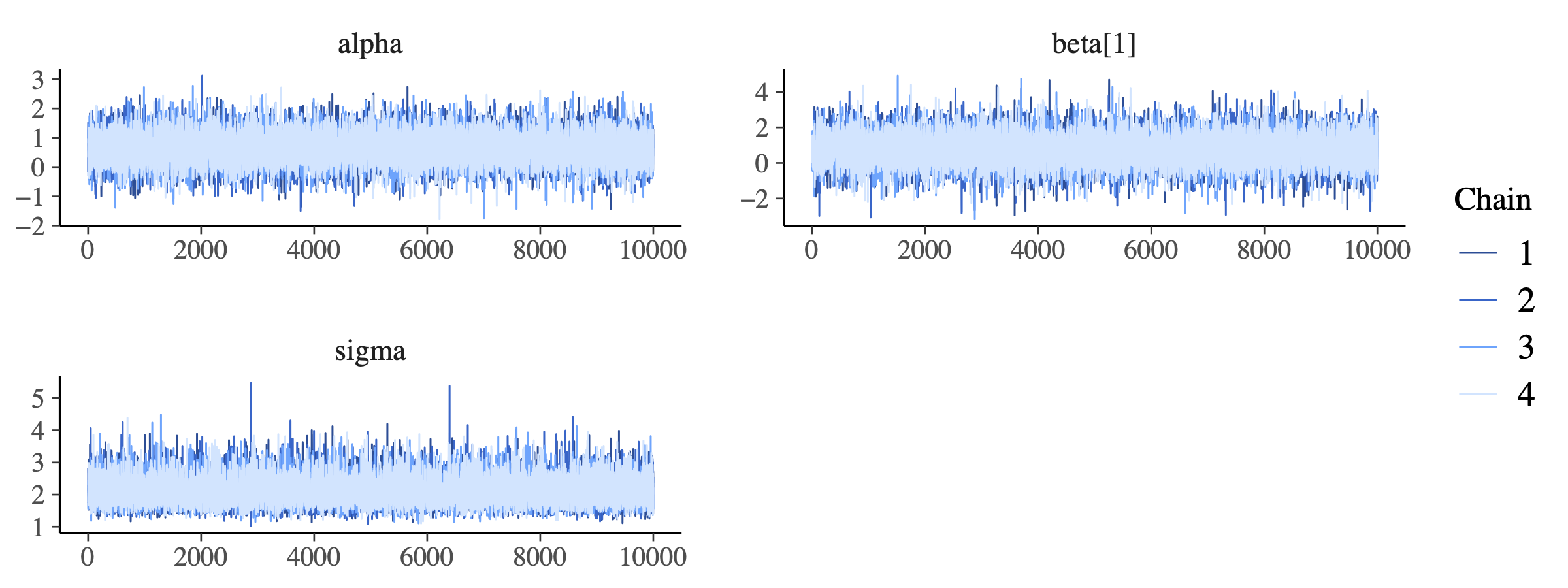}
		\caption{\label{fig:traceplot_left_simu_ALD}Traceplots for the median regression model based on the ALD likelihood.}
	\end{subfigure}
	\begin{subfigure}{\columnwidth}
		\centering
		\includegraphics[width=0.78\columnwidth]{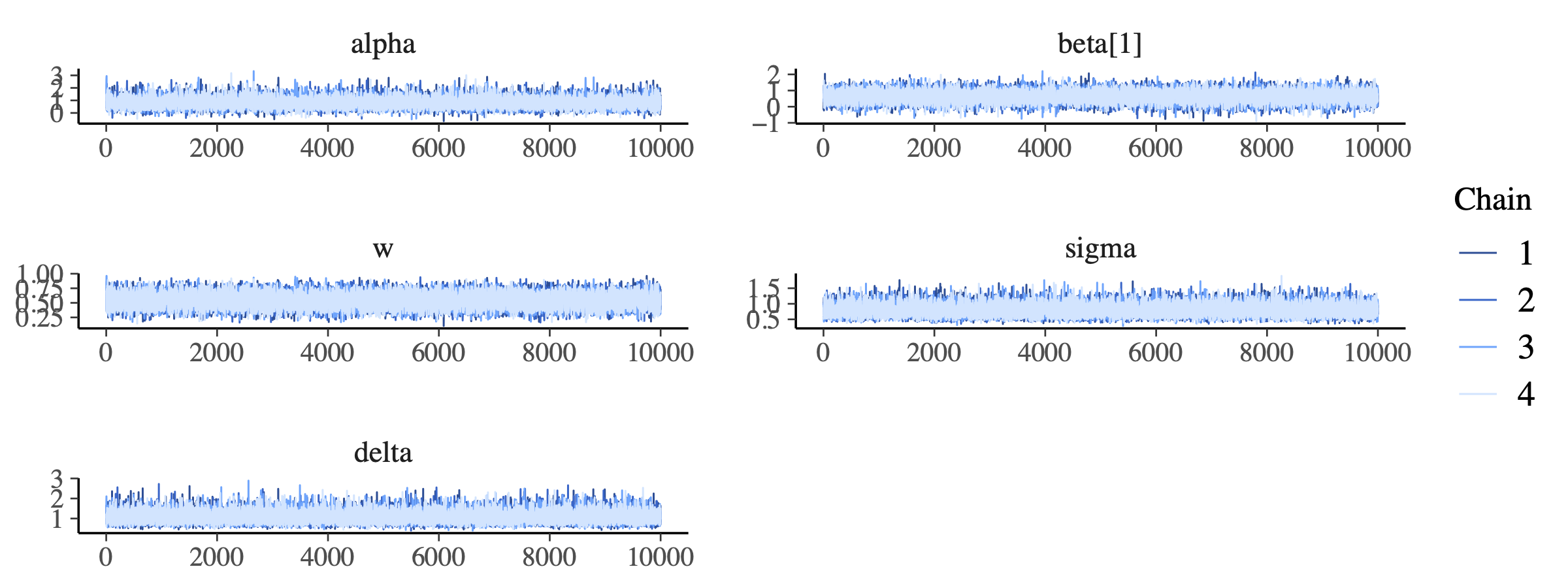}
		\caption{\label{fig:traceplot_left_simu_TPSC}Traceplots for the modal regression model based on the TPSC-Student-$t$ likelihood.}
	\end{subfigure}
	\caption{\label{fig:left_simu} Traceplots for the mean/median/modal regression models from the left-skewed simulation study for one simulated data.}
\end{figure}

\clearpage
\bibliographystyle{apalike}
\bibliography{bibliography}

\begin{thebibliography}{}

\bibitem[Agresti et~al., 2021]{Agresti2021}
Agresti, A., Franklin, C., and Klingenberg, B. (2021).
\newblock {\em Statistics: The Art and Science of Learning from Data}.
\newblock Pearson Education, 5 edition.

\bibitem[Arellano-Valle and Azzalini, 2008]{arellano2008centred}
Arellano-Valle, R.~B. and Azzalini, A. (2008).
\newblock The centred parametrization for the multivariate skew-normal distribution.
\newblock {\em Journal of Multivariate Analysis}, 99(7):1362--1382.

\bibitem[Aristodemou, 2014]{aristodemou2014new}
Aristodemou, K. (2014).
\newblock {\em New regression methods for measures of central tendency}.
\newblock PhD thesis, Brunel University.

\bibitem[Azzalini, 2013]{azzalini2013skew}
Azzalini, A. (2013).
\newblock {\em The Skew-Normal and Related Families}.
\newblock Cambridge University Press.

\bibitem[Behboodian, 1970]{behboodian1970modes}
Behboodian, J. (1970).
\newblock On the modes of a mixture of two normal distributions.
\newblock {\em Technometrics}, 12(1):131--139.

\bibitem[Benhabib and Bisin, 2018]{benhabib2018skewed}
Benhabib, J. and Bisin, A. (2018).
\newblock Skewed wealth distributions: Theory and empirics.
\newblock {\em Journal of Economic Literature}, 56(4):1261--91.

\bibitem[Bhadra et~al., 2019]{Bhadra2019}
Bhadra, A., Datta, J., Polson, N.~G., and Willard, B. (2019).
\newblock Lasso meets horseshoe: A survey.
\newblock {\em Statistical Science}, 34(3):405 -- 427.

\bibitem[Boos and Stefanski, 2013]{boos2013essential}
Boos, D.~D. and Stefanski, L.~A. (2013).
\newblock {\em Essential Statistical Inference: Theory and Methods}.
\newblock Springer.

\bibitem[Bourguignon et~al., 2020a]{bourguignon2020parametric}
Bourguignon, M., Le{\~a}o, J., and Gallardo, D.~I. (2020a).
\newblock Parametric modal regression with varying precision.
\newblock {\em Biometrical Journal}, 62(1):202--220.

\bibitem[Bourguignon et~al., 2020b]{bourguignon2020parametric_s}
Bourguignon, M., Le{\~a}o, J., and Gallardo, D.~I. (2020b).
\newblock Parametric modal regression with varying precision.
\newblock {\em Biometrical Journal}, 62(1):202--220.

\bibitem[Box and Tiao, 1968]{box1968bayesian}
Box, G.~E. and Tiao, G.~C. (1968).
\newblock A {B}ayesian approach to some outlier problems.
\newblock {\em Biometrika}, 55(1):119--129.

\bibitem[Carpenter et~al., 2017]{carpenter2017stan}
Carpenter, B., Gelman, A., Hoffman, M.~D., Lee, D., Goodrich, B., Betancourt, M., Brubaker, M., Guo, J., Li, P., and Riddell, A. (2017).
\newblock Stan: A probabilistic programming language.
\newblock {\em Journal of Statistical Software}, 76(1):1–32.

\bibitem[Chen, 2018]{chen2018modal}
Chen, Y.-C. (2018).
\newblock Modal regression using kernel density estimation: A review.
\newblock {\em Wiley Interdisciplinary Reviews: Computational Statistics}, 10(4):e1431.

\bibitem[Chen et~al., 2016]{chen2016nonparametric}
Chen, Y.-C., Genovese, C.~R., Tibshirani, R.~J., and Wasserman, L. (2016).
\newblock Nonparametric modal regression.
\newblock {\em The Annals of Statistics}, 44(2):489--514.

\bibitem[da~Silva et~al., 2020]{da2020bayesian}
da~Silva, N.~B., Prates, M.~O., and Gon{\c{c}}alves, F.~B. (2020).
\newblock Bayesian linear regression models with flexible error distributions.
\newblock {\em Journal of Statistical Computation and Simulation}, 90(14):2571--2591.

\bibitem[Dalenius, 1965]{Dalenius1965}
Dalenius, T. (1965).
\newblock The mode--a neglected statistical parameter.
\newblock {\em Journal of the Royal Statistical Society. Series A (General)}, 128(1):110--117.

\bibitem[Diebolt and Robert, 1994]{diebolt1994estimation}
Diebolt, J. and Robert, C.~P. (1994).
\newblock Estimation of finite mixture distributions through {B}ayesian sampling.
\newblock {\em Journal of the Royal Statistical Society: Series B (Statistical Methodology)}, 56(2):363--375.

\bibitem[Durante, 2019]{durante2019conjugate}
Durante, D. (2019).
\newblock Conjugate {B}ayes for probit regression via unified skew-normal distributions.
\newblock {\em Biometrika}, 106(4):765--779.

\bibitem[Ekawati et~al., 2015]{ekawati2015moments}
Ekawati, D., Warsono, W., and Kurniasari, D. (2015).
\newblock On the moments, cumulants, and characteristic function of the log-logistic distribution.
\newblock {\em IPTEK The Journal for Technology and Science}, 25(3):78--82.

\bibitem[Fern{\'a}ndez and Steel, 1998]{fernandez1998bayesian}
Fern{\'a}ndez, C. and Steel, M.~F. (1998).
\newblock On {B}ayesian modeling of fat tails and skewness.
\newblock {\em Journal of the American Statistical Association}, 93(441):359--371.

\bibitem[Gagnon et~al., 2020]{gagnon2020new}
Gagnon, P., Desgagn{\'e}, A., and B{\'e}dard, M. (2020).
\newblock A new {B}ayesian approach to robustness against outliers in linear regression.
\newblock {\em Bayesian Analysis}, 15:389--414.

\bibitem[Gelman et~al., 2013a]{gelman2013bayesian}
Gelman, A., Carlin, J.~B., Stern, H.~S., Dunson, D.~B., Vehtari, A., and Rubin, D.~B. (2013a).
\newblock {\em Bayesian Data Analysis}.
\newblock CRC Press.

\bibitem[Gelman et~al., 2013b]{gelman2013bayesian_s}
Gelman, A., Carlin, J.~B., Stern, H.~S., Dunson, D.~B., Vehtari, A., and Rubin, D.~B. (2013b).
\newblock {\em Bayesian Data Analysis}.
\newblock CRC Press.

\bibitem[George and McCulloch, 1993]{GeorgeMcCulloch1993}
George, E.~I. and McCulloch, R.~E. (1993).
\newblock Variable selection via {G}ibbs sampling.
\newblock {\em Journal of the American Statistical Association}, 88(423):881--889.

\bibitem[Geweke, 1993]{geweke1993bayesian}
Geweke, J. (1993).
\newblock Bayesian treatment of the independent {S}tudent-t linear model.
\newblock {\em Journal of Applied Econometrics}, 8(S1):S19--S40.

\bibitem[Griffin and Brown, 2021]{GriffinBrown2021}
Griffin, J.~E. and Brown, P.~J. (2021).
\newblock Bayesian global-local shrinkage methods for regularisation in the high dimension linear model.
\newblock {\em Chemometrics and Intelligent Laboratory Systems}, 210:104255.

\bibitem[Hjalmarsson and Lochner, 2012]{hjalmarsson2012impact}
Hjalmarsson, R. and Lochner, L. (2012).
\newblock The impact of education on crime: International evidence.
\newblock {\em CESifo DICE Report}, 10(2):49--55.

\bibitem[Ho et~al., 2017]{damien2017bayesian}
Ho, C.~S., Damien, P., and Walker, S. (2017).
\newblock Bayesian mode regression using mixtures of triangular densities.
\newblock {\em Journal of Econometrics}, 197(2):273--283.

\bibitem[Hoffman and Gelman, 2014]{hoffman2014no}
Hoffman, M.~D. and Gelman, A. (2014).
\newblock The {N}o-{U}-{T}urn sampler: Adaptively setting path lengths in {H}amiltonian {M}onte {C}arlo.
\newblock {\em Journal of Machine Learning Research}, 15(1):1593--1623.

\bibitem[Isaacs et~al., 1983]{isaacs1983serum}
Isaacs, D., Altman, D., Tidmarsh, C., Valman, H., and Webster, A. (1983).
\newblock Serum immunoglobulin concentrations in preschool children measured by laser nephelometry: Reference ranges for {IgG}, {IgA}, {IgM}.
\newblock {\em Journal of Clinical Pathology}, 36(10):1193--1196.

\bibitem[Koenker et~al., 2017]{koenker2017handbook}
Koenker, R., Chernozhukov, V., He, X., and Peng, L. (2017).
\newblock {\em Handbook of Quantile Regression}.
\newblock Chapman and Hall/CRC.

\bibitem[Koenker and Machado, 1999]{koenker1999goodness}
Koenker, R. and Machado, J.~A. (1999).
\newblock Goodness of fit and related inference processes for quantile regression.
\newblock {\em Journal of the American Statistical Association}, 94(448):1296--1310.

\bibitem[Lange et~al., 1989]{lange1989robust}
Lange, K.~L., Little, R.~J., and Taylor, J.~M. (1989).
\newblock Robust statistical modeling using the t distribution.
\newblock {\em Journal of the American Statistical Association}, 84(408):881--896.

\bibitem[Lee, 1989]{lee1989mode}
Lee, M.-j. (1989).
\newblock Mode regression.
\newblock {\em Journal of Econometrics}, 42(3):337--349.

\bibitem[Lee, 1993]{lee1993quadratic}
Lee, M.-J. (1993).
\newblock Quadratic mode regression.
\newblock {\em Journal of Econometrics}, 57(1-3):1--19.

\bibitem[Liu et~al., 2024]{QXZ}
Liu, Q., Huang, X., and Zhou, H. (2024).
\newblock The flexible {G}umbel distribution: A new model for inference about the mode.
\newblock {\em Stats}, 7(1):317--332.

\bibitem[Lochner, 2020]{lochner2020education}
Lochner, L. (2020).
\newblock Chapter 9 - {E}ducation and crime.
\newblock In Bradley, S. and Green, C., editors, {\em The Economics of Education (Second Edition)}, pages 109--117. Academic Press.

\bibitem[Menezes et~al., 2021]{menezes2021collection}
Menezes, A.~F., Mazucheli, J., and Chakraborty, S. (2021).
\newblock A collection of parametric modal regression models for bounded data.
\newblock {\em Journal of Biopharmaceutical Statistics}, 31(4):490--506.

\bibitem[Meredith et~al., 2018]{meredith2018package}
Meredith, M., Kruschke, J., and Meredith, M.~M. (2018).
\newblock Package ‘hdinterval’.
\newblock {\em Highest (Posterior) Density Intervals}.

\bibitem[Mitchell and Beauchamp, 1988]{MitchellBeauchamp1988}
Mitchell, T. and Beauchamp, J. (1988).
\newblock Bayesian variable selection in linear regression.
\newblock {\em Journal of the American Statistical Association}, 83(404):1023--1032.

\bibitem[Murray et~al., 2010]{murray2010elliptical}
Murray, I., Adams, R., and MacKay, D. (2010).
\newblock Elliptical slice sampling.
\newblock In {\em Proceedings of the Thirteenth International Conference on Artificial Intelligence and Statistics}, volume~9, pages 541--548, Sardinia, Italy.

\bibitem[Neal, 2011]{neal2011mcmc}
Neal, R.~M. (2011).
\newblock {MCMC} using {H}amiltonian dynamics.
\newblock In Brooks, S., Gelman, A., Jones, G., and Meng, X.-L., editors, {\em Handbook of Markov Chain Monte Carlo}, pages 113--162. Chapman and Hall/CRC.

\bibitem[Ota et~al., 2019]{10.1214/19-EJS1607}
Ota, H., Kato, K., and Hara, S. (2019).
\newblock {Quantile regression approach to conditional mode estimation}.
\newblock {\em Electronic Journal of Statistics}, 13(2):3120 -- 3160.

\bibitem[{R Core Team}, 2022]{Rmanual}
{R Core Team} (2022).
\newblock {\em R: A Language and Environment for Statistical Computing}.
\newblock R Foundation for Statistical Computing, Vienna, Austria.

\bibitem[Robert and Casella, 2004]{Robert2004}
Robert, C.~P. and Casella, G. (2004).
\newblock {\em Monte Carlo Statistical Methods}.
\newblock Springer New York.

\bibitem[Ronchetti and Huber, 2009]{ronchetti2009robust}
Ronchetti, E.~M. and Huber, P.~J. (2009).
\newblock {\em Robust Statistics}.
\newblock Wiley.

\bibitem[Ro\v{c}kov\'{a} and George, 2018]{RockovaGeorge2018}
Ro\v{c}kov\'{a}, V. and George, E.~I. (2018).
\newblock The spike-and-slab lasso.
\newblock {\em Journal of the American Statistical Association}, 113(521):431--444.

\bibitem[Royston and Altman, 1994]{royston1994regression}
Royston, P. and Altman, D.~G. (1994).
\newblock Regression using fractional polynomials of continuous covariates: parsimonious parametric modelling.
\newblock {\em Journal of the Royal Statistical Society: Series C (Applied Statistics)}, 43(3):429--453.

\bibitem[Rubio and Steel, 2015]{rubio2015bayesian}
Rubio, F. and Steel, M. (2015).
\newblock Bayesian modelling of skewness and kurtosis with two-piece scale and shape distributions.
\newblock {\em Electronic Journal of Statistics}, 9(2):1884--1912.

\bibitem[Sager and Thisted, 1982]{sager1982maximum}
Sager, T.~W. and Thisted, R.~A. (1982).
\newblock Maximum likelihood estimation of isotonic modal regression.
\newblock {\em The Annals of Statistics}, 10(3):690--707.

\bibitem[Shin et~al., 2015]{shin2015application}
Shin, J.~Y., Chen, S., and Kim, T.-W. (2015).
\newblock Application of {B}ayesian {M}arkov {C}hain {M}onte {C}arlo method with mixed {G}umbel distribution to estimate extreme magnitude of tsunamigenic earthquake.
\newblock {\em KSCE Journal of Civil Engineering}, 19(2):366--375.

\bibitem[Siegel, 2016]{siegel2016practical}
Siegel, A.~F. (2016).
\newblock {\em Practical Business Statistics}.
\newblock Academic Press.

\bibitem[Sitek, 2016]{sitek2016modes}
Sitek, G. (2016).
\newblock The modes of a mixture of two normal distributions.
\newblock {\em Silesian Journal of Pure and Applied Mathematics}, 6(1):59--67.

\bibitem[Smith, 2003]{smith2003statistics}
Smith, R.~L. (2003).
\newblock Statistics of extremes, with applications in environment, insurance, and finance.
\newblock In Finkenstadt, B. and Rootzen, H., editors, {\em Extreme Values in Finance, Telecommunications, and the Environment}, pages 20--97. Chapman and Hall/CRC.

\bibitem[Spiegelhalter et~al., 2002]{spiegelhalter2002bayesian}
Spiegelhalter, D.~J., Best, N.~G., Carlin, B.~P., and Van Der~Linde, A. (2002).
\newblock Bayesian measures of model complexity and fit.
\newblock {\em Journal of the Royal Statistical Society Series B: Statistical Methodology}, 64(4):583--639.

\bibitem[Teicher, 1963]{teicher1963identifiability}
Teicher, H. (1963).
\newblock Identifiability of finite mixtures.
\newblock {\em The Annals of Mathematical Statistics}, 34(4):1265--1269.

\bibitem[van~de Bovenkamp et~al., 2016]{van2016emerging}
van~de Bovenkamp, F.~S., Hafkenscheid, L., Rispens, T., and Rombouts, Y. (2016).
\newblock The emerging importance of {IgG} {Fab} glycosylation in immunity.
\newblock {\em The Journal of Immunology}, 196(4):1435--1441.

\bibitem[Vehtari et~al., 2017]{vehtari2017practical}
Vehtari, A., Gelman, A., and Gabry, J. (2017).
\newblock Practical {B}ayesian model evaluation using leave-one-out cross-validation and {WAIC}.
\newblock {\em Statistics and Computing}, 27(5):1413--1432.

\bibitem[Vehtari et~al., 2021]{vehtari2021rank}
Vehtari, A., Gelman, A., Simpson, D., Carpenter, B., and B{\"u}rkner, P.-C. (2021).
\newblock Rank-normalization, folding, and localization: an improved {R} for assessing convergence of {MCMC} (with discussion).
\newblock {\em Bayesian Analysis}, 16(2):667--718.

\bibitem[Venables and Ripley, 2013]{venables2013modern}
Venables, W.~N. and Ripley, B.~D. (2013).
\newblock {\em Modern Applied Statistics with S-PLUS}.
\newblock Springer.

\bibitem[Vidal, 2014]{vidal2014bayesian}
Vidal, I. (2014).
\newblock A {B}ayesian analysis of the {G}umbel distribution: An application to extreme rainfall data.
\newblock {\em Stochastic Environmental Research and Risk Assessment}, 28(3):571--582.

\bibitem[Watanabe and Opper, 2010]{watanabe2010asymptotic}
Watanabe, S. and Opper, M. (2010).
\newblock Asymptotic equivalence of {B}ayes cross validation and widely applicable information criterion in singular learning theory.
\newblock {\em Journal of Machine Learning Research}, 11:3571--3594.

\bibitem[Xiang and Yao, 2022]{xiang2022modal}
Xiang, S. and Yao, W. (2022).
\newblock Modal regression for skewed, truncated, or contaminated data with outliers.
\newblock In He, W., Wang, L., Chen, J., and Lin, C.~D., editors, {\em Advances and Innovations in Statistics and Data Science}, pages 257--273. Springer.

\bibitem[Yao and Li, 2014]{yao2014new}
Yao, W. and Li, L. (2014).
\newblock A new regression model: Modal linear regression.
\newblock {\em Scandinavian Journal of Statistics}, 41(3):656--671.

\bibitem[Yu and Aristodemou, 2012]{yu2012bayesian}
Yu, K. and Aristodemou, K. (2012).
\newblock Bayesian mode regression.
\newblock {\em arXiv preprint arXiv:1208.0579}.

\bibitem[Yu and Moyeed, 2001]{yu2001bayesian}
Yu, K. and Moyeed, R.~A. (2001).
\newblock Bayesian quantile regression.
\newblock {\em Statistics \& Probability Letters}, 54(4):437--447.

\bibitem[Yu and Zhang, 2005]{yu2005three}
Yu, K. and Zhang, J. (2005).
\newblock A three-parameter asymmetric {L}aplace distribution and its extension.
\newblock {\em Communications in Statistics—Theory and Methods}, 34(9-10):1867--1879.

\bibitem[Zhou and Huang, 2020]{zhou2020parametric}
Zhou, H. and Huang, X. (2020).
\newblock Parametric mode regression for bounded responses.
\newblock {\em Biometrical Journal}, 62(7):1791--1809.

\bibitem[Zhou and Huang, 2022]{zhou2022bayesian}
Zhou, H. and Huang, X. (2022).
\newblock Bayesian beta regression for bounded responses with unknown supports.
\newblock {\em Computational Statistics \& Data Analysis}, 167:107345.

\end{thebibliography}

\end{document}